\documentclass[preprint,11pt,authoryear]{elsarticle}
\usepackage[
  a4paper,
  left=22mm,
  right=22mm,
  top=20mm,
  bottom=22mm,
  includeheadfoot
]{geometry}
\usepackage{amsmath,amssymb,amsthm,mathtools}
\usepackage{booktabs,graphicx,caption,array,longtable,tabularx,adjustbox,microtype,float}
\usepackage[dvipsnames]{xcolor}
\usepackage[colorlinks=true,linkcolor=NavyBlue,citecolor=NavyBlue,urlcolor=NavyBlue,
            filecolor=NavyBlue,breaklinks=true]{hyperref}
\usepackage{tikz}
\usetikzlibrary{arrows.meta,positioning,calc,shapes.geometric,fit,backgrounds,decorations.pathreplacing}
\usepackage{placeins}
\definecolor{pblue}{HTML}{1F5FA9}
\definecolor{porange}{HTML}{E07B39}
\definecolor{pgreen}{HTML}{2E8B6F}
\definecolor{pgrey}{HTML}{7A7A7A}

\setcitestyle{authoryear,round}
\theoremstyle{plain}
\newtheorem{theorem}{Theorem}
\newtheorem{proposition}{Proposition}
\newtheorem{corollary}{Corollary}
\newtheorem{lemma}{Lemma}
\theoremstyle{definition}
\newtheorem{definition}{Definition}
\newtheorem{assumption}{Assumption}
\newtheorem{condition}{Condition}
\newtheorem{remark}{Remark}
\newtheorem*{theorem*}{Theorem}

\newcommand{\R}{\mathbb{R}}
\newcommand{\E}{\mathbb{E}}
\newcommand{\Gr}{\mathrm{Gr}}
\newcommand{\spn}{\mathrm{span}}
\newcommand{\Ab}{\bar{A}}
\newcommand{\cb}{\bar{c}}
\newcommand{\Imp}{\mathcal{I}}
\newcommand{\rhos}{\rho^{\Sigma}}
\newcommand{\stepp}[1]{\smallskip\noindent\emph{Step #1.}\ }

\begin{document}
\begin{frontmatter}

\title{The Market's Conditioning Representation:\\
Equilibrium, Crowding, and Convention Multiplicity}

\author[miralta,reading,albert]{Alejandro Rodr\'iguez Dom\'inguez\corref{cor1}\fnref{disc}}
\ead{arodriguez@miraltabank.com}
\cortext[cor1]{Corresponding author.}
\fntext[disc]{Disclosure: The author is employed by a financial institution that trades the asset classes studied here. The views expressed are the author's own. No external funding was received and no proprietary client data were used.}
\address[miralta]{Quantitative Analysis and Artificial Intelligence Department, Miralta Finance Bank, S.A., Madrid, Spain}
\address[reading]{Department of Computer Science, University of Reading, Reading, United Kingdom}
\address[albert]{Albert School, Paris, France}

\begin{abstract}
Asset-pricing models typically condition on a fixed information set. This paper endogenises the market’s conditioning architecture by allowing a population of portfolios to choose representations whose induced positions affect prices. A representation is a subspace of driver space mapped into asset exposures by a response operator. Capital allocated across representations determines aggregate positions and the clearing premium, while price feedback changes both representation value and, through causal certification, the admissible set. A representation equilibrium is the fixed point of this configuration--price--certification loop, in which the information structure clears jointly with prices and risk-bearing capacity. The framework separates position crowding, transmitted through market impact, from representation crowding, generated by overlap in driver space and priced through a basis-invariant information-capacity cost. Under compactness, resolvent regularity, and continuity of the certified-set correspondence, equilibrium exists at every switching cost; within a stable certification cell, informational congestion induces a concave population game and the aggregate configuration is locally unique when a small-gain condition makes congestion dominate price feedback. Conditional on a settled configuration, a dimensionless spectral statistic combining cross-impact, conditional covariance, and deployed capacity determines three stationary position-path regimes: below the critical value one half the fundamental solution is unique; at the boundary only innovation-free stationary deviations remain; above it, regular destabilising directions support a continuum of self-confirming conventions driven by extrinsic processes. Monotone impact lies inside the uniqueness region, whereas indefinite impact alone is not sufficient for multiplicity. Intermediary hedging can generate the required indefinite aggregate response while direct individual round trips remain strictly costly. Endogenous risk capacity bounds sustainable convention amplitudes, and least-squares learning makes the dominant convention neutral rather than attracting. The threshold is not estimated in the available market data; instead, the empirical exercise documents an out-of-sample, driver-specific signature consistent with representation crowding and states the identification conditions required for direct threshold measurement.
\end{abstract}

\begin{keyword}
conditioning representations \sep endogenous information structure \sep market impact \sep crowding \sep reflexivity \sep causal admissibility \sep rational expectations \sep convention multiplicity
\JEL G11; G12; G14; C58; D53; D84
\MSC[2020] 91G10; 91A14; 91B51; 47H05; 15A18
\end{keyword}

\end{frontmatter}

\section{Introduction}

Two portfolios can hold nearly identical books while conditioning on disjoint
sets of drivers, and two others can hold very different books while
conditioning on the same three. The first pair shares holdings and not
information; the second shares information and not holdings. Once the trades of
both pairs move prices, the two overlaps have different economic content, and
Figure~\ref{fig:congestions} states the distinction that organises the paper.
The pair with identical holdings congests in the way the impact literature has
long studied: correlated demand executed against finite depth moves the price
against all of it, and market impact prices that congestion \citep{Kyle1985,
AlmgrenChriss2001, BouchaudFarmerLillo2010}. The pair with identical information
congests in a way market impact does not price. Their signals overlap, so the
part of the premium each can claim as its own shrinks as the other's capital
grows, even when their books never intersect.

\begin{figure}[t]
\centering
\begin{adjustbox}{max width=\linewidth}
\begin{tikzpicture}[font=\small,
  pf/.style={draw=pgrey,rounded corners,inner sep=5pt,minimum width=24mm,align=center},
  lbl/.style={font=\small\itshape}]
\begin{scope}
  \node[pf,fill=porange!12,draw=porange] (p1) {portfolio 1};
  \node[pf,fill=porange!12,draw=porange,right=16mm of p1] (p2) {portfolio 2};
  \node[draw=porange,thick,rounded corners,fill=porange!30,below=9mm of $(p1)!0.5!(p2)$,
        minimum width=30mm] (bk) {same book};
  \draw[-{Latex},porange,thick] (p1) -- (bk); \draw[-{Latex},porange,thick] (p2) -- (bk);
  \node[below=5mm of bk,align=center,font=\footnotesize,text=porange]
       {congestion in \textbf{positions}\\priced by market impact};
  \node[lbl,above=3mm of $(p1)!0.5!(p2)$] {different signals};
\end{scope}
\begin{scope}[xshift=82mm]
  \node[pf,fill=pblue!12,draw=pblue] (q1) {portfolio 3};
  \node[pf,fill=pblue!12,draw=pblue,right=16mm of q1] (q2) {portfolio 4};
  \node[draw=pblue,thick,rounded corners,fill=pblue!30,below=9mm of $(q1)!0.5!(q2)$,
        minimum width=30mm] (sg) {same drivers};
  \draw[-{Latex},pblue,thick] (q1) -- (sg); \draw[-{Latex},pblue,thick] (q2) -- (sg);
  \node[below=5mm of sg,align=center,font=\footnotesize,text=pblue]
       {congestion in \textbf{information}\\not priced by market impact};
  \node[lbl,above=3mm of $(q1)!0.5!(q2)$] {different books};
\end{scope}
\end{tikzpicture}
\end{adjustbox}
\caption{Two overlaps, two congestions. Portfolios that share holdings execute
against the same depth, so market impact prices their positional overlap. Portfolios
that share drivers can erode the value of the same information even when their books do
not coincide; this informational overlap is distinct from execution congestion.}
\label{fig:congestions}
\end{figure}

The second form of congestion is related to practitioner measures of factor crowding
based on unwinds, positioning, and capacity \citep{KhandaniLo2011,Stein2009,LouPolk2022,Baltas2019}.
Standard asset-pricing models typically take the conditioning set as given, so this margin
of adjustment is not itself an equilibrium object.
A portfolio that notices the erosion will consider changing what it conditions
on, and it cannot do so freely. A new representation must be certified from data
before it can be traded, and the change itself displaces the book, which is a
cost of the same kind that turnover penalties impose on rebalancing
\citep{HautschVoigt2019, DeMiguel2020, FrazziniIsraelMoskowitz2018}. If enough
portfolios reason the same way, the driver set they move toward becomes crowded
in turn, and the premium they were chasing recedes as they arrive. What the
market settles on is therefore not only a price but a configuration of
conditioning choices from which no portfolio wishes to move, given what the
others have chosen and given the prices those choices produce. Asset pricing provides equilibrium theories of prices at fixed information structures
\citep{Grossman1976,GrossmanStiglitz1980,Admati1985}, while execution models study
trading and impact conditional on a given information structure
\citep{ObizhaevaWang2013,CardaliaguetLehalle2018,NeumanVoss2023}. The object studied
here is different: the population's conditioning architecture is selected jointly with
the prices affected by the positions that architecture generates.

The solution proposed here is to make the conditioning representation an
equilibrium object of the same standing as the price. A representation is a
subspace of driver space, its value is the best conditional risk-adjusted return
it can earn at the prevailing premium, and a \emph{representation equilibrium}
is a triple: a premium that clears given the capital deployed, a configuration
of representations from which no portfolio wishes to switch net of cost, and a
capacity allocation across representations clearing at a type-specific shadow value.
Figure~\ref{fig:whatclears} shows the three components and the loop that binds
them. Existence holds at every level of switching friction, including zero. Switching friction affects
configuration determinacy and adjustment dynamics rather than the non-emptiness of the
equilibrium correspondence.

\begin{figure}[t]
\centering
\begin{adjustbox}{max width=\linewidth}
\begin{tikzpicture}[font=\small,
  comp/.style={draw,rounded corners,align=center,inner sep=6pt,minimum width=36mm},
  ar/.style={-{Latex},thick}]
\node[comp,draw=pblue,fill=pblue!10] (price) {\textbf{the premium}\\clears given deployed capital};
\node[comp,draw=porange,fill=porange!10,right=44mm of price] (config)
     {\textbf{the configuration}\\no portfolio wishes to switch};
\node[comp,draw=pgreen,fill=pgreen!10,below=13mm of $(price)!0.5!(config)$] (cap)
     {\textbf{the capacity allocation}\\type-specific shadow values};
\draw[ar,pblue] (price) -- node[above,font=\footnotesize,yshift=1mm,align=center]
  {sets the value\\of conditioning} (config);
\draw[ar,porange] (config) -- node[right,font=\footnotesize,xshift=1mm]{sets the crowding} (cap);
\draw[ar,pgreen] (cap) -- node[left,font=\footnotesize,xshift=-1mm]{sets the aggregate position} (price);
\end{tikzpicture}
\end{adjustbox}
\caption{What clears in a representation equilibrium. The premium determines
which representations are worth carrying, the configuration determines how
crowded each is, and the capital those choices deploy moves the premium that
started the loop. The equilibrium problem is to determine whether this loop has a fixed point and under what conditions that fixed point is determinate.}
\label{fig:whatclears}
\end{figure}

The analysis distinguishes five logically separate questions. \emph{Existence} concerns
non-emptiness of the representation-equilibrium correspondence. \emph{Configuration
determinacy} concerns uniqueness of the aggregate allocation across representations.
\emph{Attainability} concerns convergence of a specified adjustment process.
\emph{Position-path determinacy} concerns uniqueness of stationary positions conditional
on a settled configuration. \emph{Causal admissibility} concerns whether the conditioning
variables induced by equilibrium feedback can validly enter the conditioning set. The
results below treat these questions separately.

Configuration determinacy and position-path determinacy are governed by different
conditions. Within a stable certification cell, aggregate configuration is locally unique
when the curvature generated by informational congestion dominates the price-feedback
gain. Conditional on a settled configuration, stationary positions obey a separate
spectral criterion. The associated scalar statistic combines cross-impact, conditional
return covariance, and deployed risk-bearing capacity. Monotone impact places the
reduced spectrum inside the uniqueness region at every capacity. With indefinite impact,
a subcritical intermediate region remains unique, whereas sufficiently strong negative
feedback can move the system into a supercritical region supporting self-confirming
conventions, subject to regularity of the destabilising direction. This construction uses
the standard indeterminacy machinery of linear rational-expectations models
\citep{BlanchardKahn1980,CassShell1983,BenhabibFarmer1999}; the contribution is to derive
the relevant recursion from market-impact and capacity objects that are, in principle,
measurable.

The framework nests the associated single-portfolio models as limiting cases while endogenising the objects they take as given. In the static model, causal separation and the premium are exogenous; in the dynamic model, the conditioning geometry evolves according to an exogenous law; and in the aggregation model, portfolios pool drivers without affecting prices. The present framework closes these margins jointly: the population's representation choices determine deployed capacity, the resulting aggregate position moves the clearing premium, and the induced price feedback changes both the value and admissibility of the conditioning set. The single-portfolio results are recovered in the zero-capacity limit, when price feedback vanishes and the conditioning problem again becomes price taking.

The analysis has four main contributions. First, it defines the \emph{representation
equilibrium}, in which the conditioning architecture, the market-clearing premium and
the allocation of risk-bearing capacity are jointly determined. Portfolio choices affect
prices, and the resulting market state feeds back into both the value and the admissibility
of representations. The analysis separates equilibrium existence from determinacy and
attainability: equilibrium can exist even when the aggregate configuration is not globally
unique or when a particular adjustment process does not converge. Within a stable
certification region, aggregate uniqueness obtains when informational congestion is strong
enough relative to price feedback.

Second, the paper separates two economically distinct forms of crowding. Position
crowding operates through common asset exposures and market impact, whereas
representation crowding operates through overlap in the information on which portfolios
condition. A basis-invariant information-capacity technology assigns a direct cost to
redundant use of the same driver space, while market clearing generates a separate
price-mediated crowding effect. The two channels therefore remain conceptually distinct
even when informational overlap and portfolio overlap are correlated in observed data.

Third, conditional on a settled representation configuration, the paper derives a
stationary position-path classification. A dimensionless spectral statistic combining
cross-impact, conditional covariance and deployed capacity separates subcritical, critical
and supercritical regimes. In the subcritical region the fundamental solution is unique;
at the boundary only innovation-free stationary deviations remain; and in the
supercritical region regular destabilising directions can support a continuum of
self-confirming conventions. Monotone impact lies within the uniqueness region, while
indefinite impact alone is not sufficient for multiplicity. Intermediary hedging can
nevertheless generate the required indefinite aggregate response while direct individual
round trips remain costly. Endogenous risk capacity bounds sustainable convention
amplitudes, and least-squares learning leaves the dominant convention neutral rather than
attracting.

Fourth, the paper develops a measurement and identification programme for the new
equilibrium objects. Numerical experiments verify the spectral classification under
nonnormal and near-defective operators, while the real-data exercise documents an
out-of-sample, driver-specific signature consistent with representation crowding. The
analysis also establishes what additional information is required for direct measurement:
temporary impact must be separated from permanent informational effects, endogenous
conditioning-set selection must be addressed, and conditioning representations can in
general be recovered only up to observational equivalence.

Section~\ref{sec:lit} places the paper in the literature. Section~\ref{sec:frame}
develops the framework and states the results. Section~\ref{sec:num} verifies the theory and states
what a measurement of its central statistic would require. Section~\ref{sec:conc} concludes, and the appendices
collect the proofs and supporting derivations.

\section{Literature review}
\label{sec:lit}

The linear impact of order flow on prices originates with \citet{Kyle1985} and
its multi-asset version with \citet{CaballeKrishnan1994}. Empirical work has
since established that impact is concave in size, with the square-root law as
the dominant regularity \citep{Gabaix2003, Bucci2019, BouchaudFarmerLillo2010},
that it responds to order book events rather than to trades alone
\citep{ContKukanovStoikov2014}, and that cross-impact is real but small and
poorly conditioned \citep{Benzaquen2017, Tomas2022}. Absence of price
manipulation restricts the admissible operators, forcing the permanent component
to be symmetric and positive semidefinite \citep{HubermanStanzl2004,
Gatheral2010, SchneiderLillo2019, Alfonsi2016}. Taking those primitives as given, the analysis asks how the spectrum of the impact
operator, scaled by deployed capacity and conditional covariance, restricts stationary
equilibrium multiplicity in the conditioning system. The answer identifies the no-manipulation class as a sufficient
interior of the uniqueness region, which turns a restriction usually invoked for
well-posedness of execution problems into a statement about how many
representations a market can support.

The word reflexivity is used here in a narrow sense. \citet{Merton1948} described
self-fulfilling social predictions; \citet{GrunbergModigliani1954} and
\citet{Simon1954} showed that public forecasts can be made mutually consistent with the
responses they induce; and \citet{Muth1961} imposed model-consistent expectations as an
equilibrium restriction. Subsequent rational-expectations, sunspot and coordination
literatures developed many forms of feedback and multiplicity, so the claim here is not
that economics ignored reflexivity. Recent methodological work makes the same caution
from the side of causal self-reference in financial economics \citep{PolakowGebbieFlint2026}.
The narrower point is that a fixed point closes a consistency loop conditional on the
objects held fixed. It does not by itself establish uniqueness, attraction or causal
admissibility, and it need not endogenise the conditioning architecture on which beliefs
are formed. This paper makes that last object endogenous and keeps existence, aggregate
determinacy, attainability, position-path determinacy and causal admissibility separate.

On the information side, noisy rational expectations models
\citep{Grossman1976, GrossmanStiglitz1980, Admati1985} ask how much of the
informed agents' information prices reveal. A separate literature makes information
acquisition itself endogenous: \citet{AdmatiPfleiderer1987} study viable allocations of
signals in financial markets, and \citet{BanerjeeBreonDrish2020} let a strategic trader
choose when and how precisely to acquire payoff information. Those papers establish that
information choices can be equilibrium objects. The distinction here is the object
chosen: a population allocates risk capacity across conditioning subspaces, those choices
move prices through their induced exposures, and the resulting position can change the
certified feasible set itself. The informational rent of observing aggregate order flow
reproduces the \citet{GrossmanStiglitz1980} taxonomy in the price-taking corner of the
model, and that corner also recovers single-portfolio conditional selection
\citep{RD2023, RDstatic2026}. That multiplicity of stationary solutions in a linear
rational expectations system is governed by eigenvalue location is
\citet{BlanchardKahn1980}, and that the multiplicity is realised by extrinsic
randomness is \citet{CassShell1983} and \citet{BenhabibFarmer1999}; the
dichotomy below applies exactly that machinery, and its contribution is the
recursion to which it is applied, in which the eigenvalue condition becomes a
condition on measured cross-impact against measured capacity. Whether such
equilibria are reachable when agents learn rather than know the equilibrium map
is the question of \citet{MarcetSargent1989} and \citet{EvansHonkapohja2001};
the learning experiment in Section~\ref{sec:num} examines this distinction.

Crowding has been inferred from return correlations \citep{LouPolk2022}, from
unwinds \citep{KhandaniLo2011}, from positioning and capacity constraints
\citep{Stein2009, Baltas2019} and from transaction cost adjusted factor returns
\citep{DeMiguel2020, HautschVoigt2019}. That literature primarily measures crowding through positions or returns rather than
separating positional overlap from overlap in conditioning information. In the model,
scaling the impact operator changes the price-mediated component of crowding but leaves
the direct driver-space redundancy charge unchanged. Separating the two channels
empirically therefore requires variation in impact that is independent of informational
overlap. The result also connects to two literatures that treat the demand side
as primitive. Style investing \citep{BarberisShleifer2003} and institutional
delegation \citep{VayanosWoolley2013, ShleiferVishny1997} generate correlated
demand from correlated categorisation, which is precisely a configuration of
conditioning representations taken as exogenous; demand system asset pricing
\citep{KoijenYogo2019} and the inelastic markets hypothesis
\citep{GabaixKoijen2021} make the price sensitivity of aggregate demand the
central object, which is the capacity term of our statistic. The model endogenises the population's categorisation while taking the demand
inelasticity that enters the threshold as a primitive. Endogenising that elasticity is
outside the scope of the analysis.

The framework draws on four bodies of theory, each for a specific component of the model. From asset pricing it takes the conditional frontier and the
Hansen--Jagannathan bound \citep{HansenRichard1987, HansenJagannathan1991,
FersonSiegel2001}: the potential of a representation is a squared maximal
conditional Sharpe ratio, so the crowding discount \eqref{eq:disc} is a statement
about the attainable volatility bound on stochastic discount factors, and it is
first-order in deployed capacity. From market microstructure it takes the impact
operator and the restrictions that no manipulation imposes on it \citep{Kyle1985,
HubermanStanzl2004, Gatheral2010}, and it returns the observation that those
restrictions are a sufficient interior of the uniqueness region, so that conventions
require the inventory and intermediation channels that break them. From the demand
side it takes the object that the demand system and inelastic markets literatures
make central \citep{KoijenYogo2019, GabaixKoijen2021}: the capacity operator $\Ab$
is a demand elasticity, and the resolvent is the equilibrium price of a demand
system facing an impact supply curve, which is why the threshold is a statement
about inelasticity and not only about impact. From causal inference it takes
screening off \citep{Reichenbach1956, Pearl2009} and invariance-based selection
\citep{Peters2016}, and from the theory of filtration enlargement
the notion of admissibility as immersion \citep{Jacod1985, KaratzasKardaras2007},
which is the language in which the reflexive closure of Section~\ref{sec:frame} is a
statement rather than a convention: conditioning on the established aggregate position is admissible because that position causes the within-period price concession, and the pooled conditioning set must contain it once impact is priced.

Mean-field models of execution and relative performance
\citep{LasryLions2007, CardaliaguetLehalle2018, NeumanVoss2023,
LackerZariphopoulou2019} study interacting trading at a fixed information
structure and deliver equilibrium flows; here the object that clears is the
information structure, the interaction runs through the value of conditioning
rather than only through execution cost, and the multiplicity result concerns a different equilibrium margin. Finally, the representation is a conditional
independence structure, so its estimation and certification draw on invariance
based causal selection \citep{Peters2016, Pearl2009}, sufficient dimension
reduction \citep{Li1991}, optimisation on subspace manifolds
\citep{Edelman1998}, and the perturbation theory of estimated subspaces
\citep{DavisKahan1970, Wedin1972, Bhatia1997}. It is also close in spirit to the
latent factor literature, where loadings are estimated jointly with factors
\citep{KellyPruittSu2019, LettauPelger2020, GiglioXiu2021} and where the
proliferation of candidate characteristics \citep{HarveyLiuZhu2016, Cochrane2011}
is exactly the problem of a large driver pool from which a representation must
be selected. That literature selects a representation by fit; the equilibrium
here selects it by value net of crowding, and the two coincide only when
deployed capital is negligible. The single portfolio antecedents of the model
are \citet{RD2023, RDstatic2026, RDdynamic2026, RDorder3}, which solve selection, hedging
and aggregation for one portfolio taking prices as given. The step taken here is
to let the population's choices move the prices against which those choices are
evaluated, which is what creates both the equilibrium object and the possibility
of multiplicity.

\section{Framework}
\label{sec:frame}

\subsection{Primitives, notation, and the two layers}

There are $n$ risky assets and a continuum of portfolios. A finite pool of
$D$ candidate drivers is common knowledge. Conditional on the driver state $z$,
excess returns have conditional mean $\mu_0(z)$ and conditional covariance
$Q\succ0$. Mean--variance demand depends on wealth and risk tolerance only
through their product, so the population equilibrium is written directly in
\emph{effective risk-bearing capacity} units. Portfolios are grouped into a finite
set of incumbent types $\vartheta\in\mathcal T$. Type $\vartheta$ has incumbent
representation $S_{0,\vartheta}$ and capacity endowment $M_\vartheta>0$, with
$\sum_\vartheta M_\vartheta=M$. A typed configuration is
$\rho=(\rho_\vartheta)_\vartheta$, where $\rho_\vartheta$ is a positive
measure on the representation space and
$\rho_\vartheta(\mathcal S)\le M_\vartheta$. Its aggregate capacity marginal is
$\rhos:=\sum_\vartheta\rho_\vartheta$. Thus heterogeneous wealth and risk
tolerance are absorbed into capacity before the representation game is solved; the
incumbent label is retained because switching costs depend on it. Table~\ref{tab:notation}
fixes the remaining notation.

\begin{table}[t]
\centering\small
\begin{tabularx}{\linewidth}{@{}l>{\raggedright\arraybackslash}Xl>{\raggedright\arraybackslash}X@{}}
\toprule
symbol & meaning & symbol & meaning \\
\midrule
$n,\,D,\,m$ & assets, drivers, representation order & $Q$ & conditional covariance of returns \\
$S$ & representation, a point of $\Gr(m,D)$ & $K$ & instantaneous cross-impact operator \\
$V$ & basis of $S$ in driver space, $V\in\R^{D\times m}$ & $\Imp$ & impact map, $\Imp(F)=KF$ if linear \\
$\Theta$ & response operator, $\R^{D}\to\R^{n}$ & $F$ & aggregate position; trades are $\Delta F$ \\
$B=\Theta V$ & induced asset exposures, $B\in\R^{n\times m}$ & $\Ab$ & aggregate capacity operator \\
$M(S)$ & $Q$ projection onto induced exposures & $\cb$ & scalar capacity, homogeneous class \\
$\Delta(S)$ & potential at the prevailing premium & $A$ & reduced operator $\Ab K$ \\
$\mu_0,\mu^{*}$ & fundamental and clearing premium & $\tau$ & position-path threshold statistic \\
$c_{sw},c_r,c_0$ & switching, redundancy, formation costs & $\rho=(\rho_\vartheta)$ & typed capacity configuration \\
$M_\vartheta,\rhos$ & type endowment; aggregate capacity marginal & $\omega(S,S')$ & driver-space overlap kernel \\
\bottomrule
\end{tabularx}
\caption{Notation. Driver-space objects determine selection and informational overlap; asset-space objects determine value and impact. The response operator $\Theta$ is the bridge between the two.}
\label{tab:notation}
\end{table}

\subsection{The timing assumption, and what it excludes}
\label{sub:timing}

The causal interpretation depends on the within-period timing assumption. The ordering is stated before the formal assumption.

\begin{figure}[t]
\centering
\begin{adjustbox}{max width=\linewidth}
\begin{tikzpicture}[font=\small,x=1cm,y=1cm]
\draw[-{Latex},thick] (-0.4,0) -- (12.4,0);
\foreach \x/\lab in {0.6/{$t$}, 4/{}, 7.4/{}, 11.6/{$t+1$}} {\draw (\x,0.14)--(\x,-0.14) node[below]{\lab};}
\node[align=center,font=\footnotesize,text=pblue,above=2mm] at (0.6,0) {drivers\\$z_t$ realise};
\node[align=center,font=\footnotesize,text=pgreen!60!black,above=2mm] at (4,0)
  {portfolios choose\\$w^a$ given $z_t$};
\node[align=center,font=\footnotesize,text=porange,above=2mm] at (7.4,0)
  {a trade establishes $F_t$,\\price moves by $\Imp$};
\node[align=center,font=\footnotesize,above=2mm] at (11.6,0) {innovation\\$\varepsilon_{t+1}$ realises};
\draw[-{Latex},pblue,thick] (0.9,-0.75) -- node[below,font=\scriptsize]{cause} (3.7,-0.75);
\draw[-{Latex},pgreen!60!black,thick] (4.3,-0.75) -- node[below,font=\scriptsize]{cause} (7.1,-0.75);
\draw[-{Latex},porange,thick] (7.7,-0.75) -- node[below,font=\scriptsize]{cause} (11.3,-0.75);
\draw[-{Latex},pgrey,thick,dashed] (11.3,-1.6) .. controls (9,-2.4) and (6,-2.4) ..
  node[below,font=\scriptsize,text=black]{excluded by Assumption~\ref{as:all}: informed order flow} (4.3,-1.6);
\end{tikzpicture}
\end{adjustbox}
\caption{The within period ordering. Drivers realise, portfolios choose against them, a trade establishes the aggregate position and moves the price, and only then does the fundamental innovation arrive. Along this chain the established position is a cause of the return and conditioning on it is admissible. The dashed arrow is what the assumption excludes: trading that already carries information about the innovation still to come.}
\label{fig:timing}
\end{figure}

Along the chain of Figure~\ref{fig:timing}, the position chosen at $t$ is a common effect of the
drivers and of the population's conditioning choices, and a cause of the realised
return. Conditioning on the established position is admissible under the stated timing because the position precedes the return innovation. By contrast, conditioning on a common effect can open a collider path. The distinction is therefore determined by timing.

\begin{proposition}[When conditioning on the aggregate position is collider safe]\label{prop:timing}
Let $r_{t+1}=\mu_0(z_t)+\Imp(F_t)+\varepsilon_{t+1}$ with $F_t$ measurable with respect
to the decision set $\mathcal{G}_t$ of Definition~\ref{def:filt}. Conditioning on $F_t$
leaves the residual equal to $\varepsilon_{t+1}$, and preserves the immersion of the
driver filtration in the price filtration, if and only if the aggregate position is \emph{conditionally mean independent} of the innovation,
\begin{equation}
\E\bigl[\varepsilon_{t+1}\mid\mathcal{G}_t\vee\sigma(F_t)\bigr]
=\E\bigl[\varepsilon_{t+1}\mid\mathcal{G}_t\bigr]=0 .
\label{eq:collidersafe}
\end{equation}
Zero conditional covariance, $\mathrm{Cov}(F_t,\varepsilon_{t+1}\mid\mathcal{G}_t)=0$,
is necessary and not sufficient: a position generated by informed trading can carry the innovation through a nonlinear channel, be uncorrelated with it and still forecast it. The two coincide
under conditional joint Gaussianity of $(F_t,\varepsilon_{t+1})$ given $\mathcal{G}_t$,
or more generally whenever the conditional expectation is linear in $F_t$, which is the
environment in which the permanent and temporary decomposition of impact is defined.
In that environment, decomposing $K=K_{\mathrm{perm}}+K_{\mathrm{temp}}$ in the sense
of \citet{GlostenHarris1988} and \citet{Hasbrouck1991}, condition
\eqref{eq:collidersafe} holds on the conditioned directions if and only if
$K_{\mathrm{perm}}$ vanishes there, so the operator entering the dichotomy is the
temporary, inventory component.
\end{proposition}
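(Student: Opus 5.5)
The proof splits into three claims: the main equivalence, the gap between zero covariance and mean independence, and the reduction of \eqref{eq:collidersafe} to the vanishing of $K_{\mathrm{perm}}$ in the linear case. Throughout, take $\mathcal G_t$ to contain $\sigma(z_t)$, so that $\mu_0(z_t)$ is $\mathcal G_t$-measurable, and note that $F_t$ is $\mathcal G_t$-measurable by hypothesis.

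\textbf{Step 1: the main equivalence.} Write the residual after conditioning on $F_t$ as
\[
r_{t+1}-\E[r_{t+1}\mid\mathcal G_t\vee\sigma(F_t)].
\]
Since $\mu_0(z_t)+\Imp(F_t)$ is measurable with respect to $\mathcal G_t\vee\sigma(F_t)$, this residual equals $\varepsilon_{t+1}-\E[\varepsilon_{t+1}\mid\mathcal G_t\vee\sigma(F_t)]$. It therefore equals $\varepsilon_{t+1}$ exactly when the conditional mean in \eqref{eq:collidersafe} vanishes, which gives both directions of the first claim.

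\textbf{Step 2: immersion.} Use the standard characterisation of immersion (the H-hypothesis, \citealp{Jacod1985}): every martingale of the smaller filtration remains a martingale of the larger one. Applied to the innovation martingale differences, this means $\E[\varepsilon_{t+1}\mid\mathcal G_t\vee\sigma(F_t)]=0$. For the converse, mean independence of each innovation preserves the martingale property of the partial sums $\sum_s\varepsilon_{s+1}$. If the driver filtration is generated by these innovations and by $\mathcal G$-adapted terms, this transfers to all martingales of the driver filtration. I expect this transfer to be the main obstacle: the general statement needs either a martingale representation property for the driver filtration or an outright definition of immersion at the level of innovations. I would first try stating it relative to the innovation martingales, which is what Definition~\ref{def:filt} appears to intend.

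\textbf{Step 3: covariance versus mean independence.} Mean independence gives $\E[\varepsilon_{t+1}g(F_t)\mid\mathcal G_t]=0$ for every bounded $g$, and taking $g$ linear yields zero conditional covariance, so the condition is necessary. To show it is not sufficient, give a scalar counterexample: take $F_t=\xi$ and $\varepsilon_{t+1}=\xi^2-1$ with $\xi\sim N(0,1)$ independent of $\mathcal G_t$. Then $\mathrm{Cov}(\xi,\xi^2-1)=\E[\xi^3]=0$, yet $\E[\varepsilon_{t+1}\mid F_t]=F_t^2-1\neq0$. This is exactly an informed position carrying the innovation nonlinearly.

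\textbf{Step 4: when the two coincide.} Under conditional joint Gaussianity, or more generally whenever $\E[\varepsilon_{t+1}\mid\mathcal G_t\vee\sigma(F_t)]$ is linear in $F_t$, write this conditional mean as $\beta_t(F_t-\E[F_t\mid\mathcal G_t])$. Here
\[
\beta_t=\mathrm{Cov}(\varepsilon_{t+1},F_t\mid\mathcal G_t)\,\mathrm{Var}(F_t\mid\mathcal G_t)^{+},
\]
so zero covariance forces $\beta_t=0$, and the two conditions coincide.

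\textbf{Step 5: the permanent component.} In this linear environment, define the permanent impact following \citet{GlostenHarris1988} and \citet{Hasbrouck1991}: it is the part of the price response that persists, i.e.\ the regression of the innovation on the position, $K_{\mathrm{perm}}F=\E[\varepsilon_{t+1}\mid\mathcal G_t\vee\sigma(F_t)]$, restricted to the conditioned directions. The temporary component $K_{\mathrm{temp}}=K-K_{\mathrm{perm}}$ is the remainder, which reverts. With this definition, \eqref{eq:collidersafe} on those directions is equivalent to $K_{\mathrm{perm}}=0$ there. Consequently the operator that enters $r_{t+1}$ through $\Imp(F_t)$ with a clean residual is $K_{\mathrm{temp}}$. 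The only care needed is to verify that this identification is consistent with the paper's timing convention, under which the permanent component would otherwise be absorbed into $\varepsilon_{t+1}$.
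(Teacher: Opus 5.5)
Your argument is essentially the paper's, step for step. The residual comes from the same measurability observation, and immersion is referred to Jacod's characterisation. Your counterexample is the paper's $\varepsilon_{t+1}=g(F_t)+\eta_{t+1}$ with $g(F)=F^2-\E F^2$ and $\eta=0$. As in the paper, $K_{\mathrm{perm}}$ is identified by definition with the part of the position that forecasts the innovation. Your Step~4 is a more explicit version of the paper's one-line appeal to linearity.

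One internal inconsistency needs repair. You declare $F_t$ to be $\mathcal G_t$-measurable and then take $F_t=\xi\sim N(0,1)$ independent of $\mathcal G_t$, which is impossible for a nondegenerate $\xi$. More to the point, if $F_t$ really is $\mathcal G_t$-measurable, then $\mathcal G_t\vee\sigma(F_t)=\mathcal G_t$ and the first equality in \eqref{eq:collidersafe} is vacuous. Moreover $\mathrm{Cov}(F_t,\varepsilon_{t+1}\mid\mathcal G_t)\equiv0$. Steps~3--5 therefore have content only if the $\mathcal G_t$ in \eqref{eq:collidersafe} is read as the information available before the position is established. The paper's unconditional counterexample makes the same tacit reading, and you should state it explicitly.

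Your hesitation over the converse in Step~2 is justified; the paper simply asserts that direction. Take its other example, $\varepsilon_{t+1}=\eta_{t+1}h(F_t)$, with $F_t$ independent of the pre-position information and $h^2$ nonconstant. It satisfies \eqref{eq:collidersafe}. Yet the martingale $\sum_s(\varepsilon_{s+1}^2-\E\varepsilon_{s+1}^2)$ acquires the nonzero drift $\E\eta_{t+1}^2\,(h(F_t)^2-\E h(F_t)^2)$ once $\sigma(F_t)$ is adjoined. So the ``if'' direction for immersion holds only for immersion defined relative to the innovation martingales, or under a representation property, as you propose.
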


In a market of the
\citet{Kyle1985} or \citet{GlostenMilgrom1985} type, impact exists precisely because
order flow carries information: the market maker moves the price because flow predicts
the fundamental, so $\mathrm{Cov}(F_t,\varepsilon_{t+1})\neq0$ by construction and
\eqref{eq:collidersafe} fails. Assumption~\ref{as:all} therefore places this paper in
the inventory tradition of \citet{HoStoll1981} and \citet{GrossmanMiller1988}, where
the price moves because risk must be warehoused, and not in the adverse selection
tradition, where it moves because someone knows something.

The restriction is substantive. If it fails, the conditioning problem changes. If flow does
carry information about the coming innovation, then conditioning on it anticipates,
the driver filtration is no longer immersed, and the pooled book acquires the
\emph{anticipative coupling} that \citet{RDorder3} identifies as the missing half of
an order-three obstruction and cannot obtain from an adapted crowding mechanism. In
that regime the aggregation obstruction is generated by the market's own information
structure rather than by external revelation, and it is generated at first-order
rather than in the approximate convention-induced form reported in \ref{app:corr}. The two regimes
therefore divide the framework cleanly: with inventory impact the equilibrium is well
posed and aggregation is admissible; with adverse selection impact the equilibrium
statements of this paper require the flow to be purged of its informational component
before conditioning, and the residual anticipation is the obstruction the companion
paper studies.

The restriction is empirically testable using standard microstructure decompositions. Directly, \eqref{eq:collidersafe} says the flow
coefficient in a return regression must survive a control for realised fundamentals;
if flow only proxies for news, it does not. Indirectly, the permanent and temporary
components of impact are separately estimable from trade and quote data
\citep{GlostenHarris1988, Hasbrouck1991, Sadka2006}, so the share of $K$ that this
paper is entitled to use is measurable, and the threshold statistic should be computed
on the temporary block alone. The threshold should therefore be estimated from the temporary impact block rather than from undifferentiated signed order flow.

\begin{assumption}[Fundamentals, impact, timing]\label{as:all}
$\mu_0(\cdot)$ is measurable and square-integrable and $Q\succ0$. Asset returns respond
to drivers through a response operator $\Theta\in\R^{n\times D}$, so that the
conditional mean is $\mu_0(z)=\Theta z$ in the linear case and $\Theta$ is its Jacobian
in general; $\Theta$ is the only channel through which driver information reaches asset
space. The
fundamental innovation is independent and identically distributed \emph{over time},
with conditional covariance $Q$, a general symmetric positive definite matrix: cross
sectional dependence among returns is permitted and is what $Q$ encodes. What is
assumed, and only where the certification argument uses it, is that this dependence is
carried by the drivers, so that
\begin{equation}
\varepsilon_{t+1}=\Theta\zeta_{t+1}+e_{t+1},
\qquad Q=\Theta\Lambda\Theta^{\top}+\Sigma^{c},
\qquad \Sigma^{c}\ \text{diagonal},
\label{eq:qsplit}
\end{equation}
with $\zeta$ the driver innovations and $e$ an idiosyncratic block that is cross
sectionally independent. This is the diagonal plus low rank structure that separation
delivers as a theorem in \citet{RDstatic2026}. Three notions are therefore in play and
are kept apart: independence over time, everywhere; a general conditional covariance
$Q$, in demand and clearing; and cross-sectional independence of the residual $e$
alone, which is what the certification test reads. Assuming independent components of
$\varepsilon$ itself would force $Q$ diagonal and empty the model of cross-impact. The aggregate position moves prices
through a continuous map $\Imp$ with $\Imp(0)=0$, the realised premium being
$\mu^{*}=\mu_0-\Imp(F)$. Within a period, the established aggregate position affects returns only through impact and carries no information about $\varepsilon$, which is \eqref{eq:collidersafe} of Proposition~\ref{prop:timing}, so conditioning on the position is admissible rather than a collider. We call
$\Imp$ \emph{monotone} when $\langle\Imp(x)-\Imp(y),x-y\rangle\ge0$ for all
$x,y$, which for $\Imp(F)=KF$ is $K_s:=\tfrac12(K+K^{\top})\succeq0$.
\end{assumption}

Figure~\ref{fig:geometry} shows the geometry these objects live in. The model has an
information layer and a value layer. The information layer is
the Grassmannian $\Gr(m,D)$ of $m$ dimensional driver subspaces; the value layer
is the space of premia and covariances; and the map between them assigns to each
representation its potential.

\begin{figure}[t]
\centering
\begin{adjustbox}{max width=\linewidth}
\begin{tikzpicture}[x=1cm,y=1cm,font=\small,scale=1.0]
\begin{scope}
  \draw[pgrey,thin] (-0.3,-0.3) grid[step=0.5] (4.3,3.3);
  \draw[-{Latex}] (-0.4,0) -- (4.6,0);
  \draw[-{Latex}] (0,-0.4) -- (0,3.6);
  \draw[very thick,pblue] (-0.2,-0.15) -- (4.4,3.3) node[right,font=\footnotesize] {$S$};
  \draw[-{Latex},very thick,porange] (0,0) -- (1.2,3.0) node[above,font=\footnotesize] {$\mu$};
  \draw[-{Latex},very thick,pgreen!60!black] (0,0) -- (2.75,2.06)
       node[below right,font=\footnotesize,xshift=-2mm] {$M(S)\mu$};
  \draw[dashed,pgrey] (1.2,3.0) -- (2.75,2.06);
  \draw[pgrey] (2.45,2.28) -- (2.63,2.52) -- (2.93,2.30);
  \node[font=\scriptsize,text=pgrey] at (3.85,3.05) {$Q$ orthogonal};
  \node[font=\footnotesize,align=center] at (2.2,-0.95)
    {the potential is the squared length\\of the projection, $\Delta(S)=\|M(S)\mu\|^2_Q$};
\end{scope}
\begin{scope}[xshift=82mm]
  \draw[-{Latex}] (-0.4,0) -- (4.6,0);
  \draw[-{Latex}] (0,-0.4) -- (0,3.6);
  \draw[-{Latex},very thick,porange] (0,0) -- (1.2,3.0) node[above,font=\footnotesize] {$\mu$};
  \foreach \a/\op in {20/0.25, 35/0.45, 50/0.7, 68/1.0} {
    \draw[pblue,opacity=\op,thick] (0,0) -- (\a:4.4);
  }
  \node[font=\footnotesize,text=pblue] at (3.9,1.15) {$S'$};
  \node[font=\footnotesize,text=pblue] at (2.0,3.35) {$S$};
  \draw[{Latex}-{Latex},pgrey,thick] (20:2.2) arc (20:68:2.2);
  \node[font=\footnotesize,text=pgrey] at (44:2.65) {$d(S,S')$};
  \node[font=\footnotesize,align=center] at (2.2,-0.95)
    {a switch is a rotation on $\Gr(m,D)$;\\its cost is the principal angle};
\end{scope}
\end{tikzpicture}
\end{adjustbox}
\caption{The geometry, drawn in asset space. Left: the potential is the squared length,
in the metric of the conditional covariance, of the projection of the premium onto the
exposures $E(S)=\Theta S$ that the representation induces, so a representation is worth
what its exposures capture of the premium and nothing else. Right: a switch is a
rotation, and the distance travelled is the principal angle, measured upstream in
driver space where the switching cost is charged. The two pictures live in different
spaces and the response operator $\Theta$ is the only bridge between them, which is why
distinct representations can induce identical exposures.}
\label{fig:geometry}
\end{figure}

\begin{definition}[Representation, induced exposures, potential, frictions]
\label{def:rep}
A representation is a point $S\in\Gr(m,D)$, a subspace of \emph{driver} space, and we
write $V\in\R^{D\times m}$ for any basis of it. Driver information reaches asset space
only through the response operator $\Theta\in\R^{n\times D}$ of
Assumption~\ref{as:all}, so the object a portfolio can hold is not $S$ itself but the
subspace of asset exposures it induces,
\begin{equation}
E(S):=\Theta\,S=\spn B\subset\R^{n},\qquad B:=\Theta V\in\R^{n\times m}.
\label{eq:induced}
\end{equation}
The potential of $S$ at premium $\mu$ is the squared maximal conditional Sharpe ratio
attainable while holding exposures in $E(S)$,
\begin{equation}
\Delta(S)=\mu^{\top}M(S)\,\mu,
\qquad
M(S)=Q^{-1}B\,(B^{\top}Q^{-1}B)^{-1}B^{\top}Q^{-1}\in\R^{n\times n}.
\label{eq:pot}
\end{equation}
Let $P_S:=V(V^{\top}V)^{-1}V^{\top}$ be the Euclidean projector onto the driver
subspace $S$ and define the basis-invariant overlap kernel
\begin{equation}
\omega(S,S'):=\frac{1}{m}\operatorname{tr}(P_SP_{S'})\in[0,1],
\qquad
r(S,\rhos):=\int_{\mathcal S}\omega(S,S')\,d\rhos(S').
\label{eq:driveroverlap}
\end{equation}
Changing representation costs $c_{sw}\,d(S,S')$ with $d$ the principal-angle distance
on $\Gr(m,D)$; conditioning on information already carried by others costs
$c_r\,r(S,\rhos)$, computed in driver space from the aggregate capacity marginal;
forming a new representation costs $c_0>0$. The coefficient $c_r$ is an informational
redundancy price, not an execution-impact coefficient.
\end{definition}

The two spaces do different jobs. The costs are
informational and live in driver space, where overlap between two portfolios' driver
sets is what redundancy charges for. The value is an asset space object, because a
portfolio earns a premium on exposures and not on information, and every operator
entering clearing, capacity and the threshold acts on $\R^{n}$. The response operator
is the only bridge, and three consequences follow.

\begin{assumption}[Admissible set and regular resolvent]\label{as:adm}
Representations are drawn from a set $\mathcal{S}\subseteq\Gr(m,D)$ that is compact in
the manifold topology. Two cases are used and they behave differently, so we name them.
In the \emph{continuum} case $\mathcal{S}=\Gr(m,D)$, or any positive dimensional
compact subset of it, so arbitrary rotations are admissible and distinct
representations can be arbitrarily close. In the \emph{discrete} case $\mathcal{S}$ is
the finite set of coordinate subspaces spanned by $m$ element subsets of the driver
pool, which is the admissible set of \citet{RDstatic2026}, where a separator is a
subset of a declared universe rather than an arbitrary rotation; it has
$\binom{D}{m}$ elements and a strictly positive minimum principal angle $d_{\min}$
between distinct elements. A finite driver pool does not by itself make $\Gr(m,D)$
finite, and results that need discreteness are stated for the discrete case only.

In addition, the resolvent is regular on the admissible range: writing $\mathcal{A}$
for the set of capacity operators generated by measures on $\mathcal{S}$,
\begin{equation}
-1\notin\mathrm{spec}\bigl(K\Ab\bigr)\qquad\text{for every }\Ab\in\mathcal{A}.
\label{eq:regular}
\end{equation}
Monotone impact implies \eqref{eq:regular} by the numerical range argument of
Theorem~\ref{thm:clearing}, but \eqref{eq:regular} is strictly weaker and holds in the
supercritical region where the dichotomy operates, since that region is defined by an
eigenvalue crossing $-\tfrac12$ and excludes $-1$ by hypothesis. This matters: the
existence results below must hold where the multiplicity result lives, and they would
not if they required monotonicity.
\end{assumption}

\begin{assumption}[Non-degenerate response]\label{as:rank}
$\Theta$ restricted to each admissible $S$ is injective, that is
$\operatorname{rank}\Theta V=m$ for every $S\in\Gr(m,D)$ carried in equilibrium.
\end{assumption}

\begin{assumption}[Finite type support]\label{as:types}
The incumbent-type set $\mathcal T$ is finite. Each type $\vartheta$ has a capacity
endowment $M_\vartheta>0$ and incumbent representation $S_{0,\vartheta}$, with
$\sum_{\vartheta\in\mathcal T}M_\vartheta=M$.
\end{assumption}

First, \eqref{eq:pot} is now dimensionally consistent: $B$ is $n\times m$, $Q^{-1}B$ is
$n\times m$, $B^{\top}Q^{-1}B$ is $m\times m$ and invertible by
Assumption~\ref{as:rank}, and $M(S)$ is $n\times n$. Second, $M(S)$ is well defined on
the Grassmannian: replacing $V$ by $VR$ with $R$ invertible replaces $B$ by $BR$ and
leaves \eqref{eq:pot} unchanged, so the potential depends on $V$ only through $S$.
Third, the induced map
$S\mapsto E(S)$ from $\Gr(m,D)$ to $\Gr(m,n)$ need not be injective. Two portfolios
conditioning on genuinely different driver sets can hold identical exposures whenever
those sets differ inside $\ker\Theta$, and the value layer cannot tell them apart:
$\Delta$, the demand, the position and the premium are all the same. What separates them
is the cost layer, since $d$ and $r$ are computed upstream in driver space. This form of representation blindness is economic rather than merely coordinate-based: distinct driver subspaces can induce the same asset exposure. The value layer can therefore be determinate while the information layer remains indeterminate; the redundancy term acts on the driver-space distinction. When
$\Theta$ is injective, in particular when $D=n$ and $\Theta$ has full rank, the two
Grassmannians are identified and the distinction collapses.

The potential $\Delta$ is a squared Sharpe ratio rather than a risk premium in the
pricing-kernel sense; the term \emph{premium} is reserved for $\mu^{*}$. The operator
$M(S)$ satisfies $M(S)\,Q\,M(S)=M(S)$ and
$M(S)\succeq0$, that is, it is the $Q$ orthogonal projection written in the
coordinates in which demand is linear; positive semidefiniteness is what makes
aggregation across portfolios well-behaved, and idempotence under $Q$ is what
makes $\Delta$ the attainable squared Sharpe ratio rather than an upper bound on
it. A third point is structural: under a covariance consistent premium the
value layer is exactly indifferent between loading matrices spanning the same
subspace, so a selection primitive, here the cost of switching, belongs inside
the equilibrium concept rather than beside it.

\subsection{Clearing at a fixed configuration}

Each portfolio optimises against the premium, the aggregate position moves prices, and
the moved prices are the premium that was optimised against. That the loop closes at
exactly one point is the geometry of Figure~\ref{fig:numrange}.

\begin{figure}[t]
\centering
\begin{adjustbox}{max width=\linewidth}
\begin{tikzpicture}[x=1cm,y=1cm,font=\small]
\fill[pblue!10] (0,-2.1) rectangle (4.3,2.1);
\draw[-{Latex}] (-2.6,0) -- (4.6,0) node[right] {$\operatorname{Re}$};
\draw[-{Latex}] (0,-2.3) -- (0,2.3) node[above] {$\operatorname{Im}$};
\node[font=\footnotesize,text=pblue!80!black,align=center] at (2.5,1.75)
  {closed right half plane:\\numerical range of $\Ab^{1/2}K\Ab^{1/2}$};
\foreach \p in {(0.35,0.9),(1.1,0.2),(0.6,-1.25),(2.2,0.75),(1.7,-0.4),(3.1,0.15),(0.15,0)}
  {\fill[pblue] \p circle (2.2pt);}
\node[font=\footnotesize,text=pblue] at (2.4,-0.95) {$\mathrm{spec}(K\Ab)$};
\fill[porange] (-2,0) circle (2.6pt);
\node[font=\footnotesize,text=porange,below=1mm] at (-2,0) {$-1$};
\draw[porange,thick,dashed] (-2,0) -- (0,0);
\node[font=\footnotesize,text=porange,align=center] at (-1.25,0.75)
  {excluded, so\\$I+K\Ab$ inverts};
\end{tikzpicture}
\end{adjustbox}
\caption{Why the clearing premium exists, geometrically. Monotonicity of impact and
positivity of the capacity operator put the numerical range of the congruence
$\Ab^{1/2}K\Ab^{1/2}$, and hence the nonzero spectrum of $K\Ab$, in the closed right
half plane. The point $-1$ has real part $-1$ and is therefore outside it, which is
all invertibility requires. Nothing in the argument uses symmetry of $K$, so the
spectrum may be complex, and nothing uses linearity beyond the closed form.}
\label{fig:numrange}
\end{figure}

\begin{theorem}[Clearing premium]\label{thm:clearing}
Conditional on the driver state, let portfolio $a$ hold $w^a=\gamma_a M_a\mu^{*}$
with $M_a=M(S_a)$. Then
\begin{equation}
F=\Ab\,\mu^{*},
\qquad
\Ab:=\int W^a\gamma_a M_a\,da\;\succeq\;0,
\qquad\Longrightarrow\qquad
\mu^{*}=(I+K\Ab)^{-1}\mu_0,
\label{eq:res}
\end{equation}
and $I+K\Ab$ is invertible whenever $K$ is monotone, symmetric or not, so the
conditional equilibrium exists, is unique, and is an explicit measurable function
of the state. In the homogeneous class $\Ab=\cb M$,
\begin{equation}
\Delta^{*}
=\mu_0^{\top}(I+\cb KM)^{-\top}M(I+\cb KM)^{-1}\mu_0
=\Delta^{(0)}-2\cb\,(M\mu_0)^{\top}K\,(M\mu_0)+O(\cb^{2}).
\label{eq:disc}
\end{equation}
\end{theorem}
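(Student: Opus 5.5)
The argument has three steps: aggregate the individual demands, reduce clearing to a linear equation, and expand the homogeneous case in $\cb$.

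\emph{Aggregation.} Integrating $w^a=\gamma_a M_a\mu^{*}$ against the capacity weights $W^a$ gives $F=\int W^a w^a\,da=\Ab\mu^{*}$, since $\mu^{*}$ does not depend on $a$. Each $M_a=M(S_a)$ is positive semidefinite, because it equals $Q^{-1}B(B^{\top}Q^{-1}B)^{-1}B^{\top}Q^{-1}$ and $B^{\top}Q^{-1}B\succ0$ by Assumption~\ref{as:rank}. With $W^a\gamma_a\ge0$, the integral $\Ab$ is a limit of nonnegative combinations of positive semidefinite matrices, so $\Ab\succeq0$ and $\Ab$ is symmetric.

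\emph{Clearing.} Assumption~\ref{as:all} with $\Imp(F)=KF$ gives $\mu^{*}=\mu_0-K\Ab\mu^{*}$, that is $(I+K\Ab)\mu^{*}=\mu_0$. Existence and uniqueness of the equilibrium therefore reduce to showing $-1\notin\mathrm{spec}(K\Ab)$. This is the main step, because $K$ need not be symmetric, so $K\Ab$ is neither symmetric nor normal. I would handle it in two equivalent ways.

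\begin{itemize}
\item \emph{Spectral route.} Let $K\Ab x=\lambda x$ with $x\neq0$ and $\lambda\neq0$. Then $y:=\Ab^{1/2}x\neq0$, since otherwise $\Ab x=0$ and $\lambda=0$. Multiplying by $\Ab^{1/2}$ gives $\Ab^{1/2}K\Ab^{1/2}y=\lambda y$. The nonzero spectrum of $K\Ab$ thus lies in the spectrum of the congruence $\Ab^{1/2}K\Ab^{1/2}$, which lies in its numerical range. That range satisfies
\[
\operatorname{Re}\frac{y^{*}\Ab^{1/2}K\Ab^{1/2}y}{y^{*}y}=\frac{u^{*}K_s u}{y^{*}y}\ge0,\qquad u=\Ab^{1/2}y,
\]
so $\operatorname{Re}\lambda\ge0$ and $\lambda\neq-1$.
\item \emph{Direct route.} If $(I+K\Ab)x=0$ with $x\neq0$, then $\Ab x\neq0$, and pairing with $\Ab x$ gives
\[
0=\operatorname{Re}\bigl[(\Ab x)^{*}x\bigr]+\operatorname{Re}\bigl[(\Ab x)^{*}K\Ab x\bigr].
\]
Both terms are nonnegative and the first, $x^{*}\Ab x$, is strictly positive unless $\Ab^{1/2}x=0$, which would force $\Ab x=0$ and hence $x=0$. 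This is a contradiction.
\end{itemize}

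Monotonicity enters only through $K_s\succeq0$. Measurability in the state follows because $\mu^{*}=(I+K\Ab)^{-1}\mu_0(z)$ is a continuous function, namely the matrix inverse on invertible matrices, composed with the measurable map $\mu_0$.

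\emph{Homogeneous class.} Set $\Ab=\cb M$ and $\mu^{*}=(I+\cb KM)^{-1}\mu_0$. Substituting into $\Delta^{*}=\mu^{*\top}M\mu^{*}$ gives the closed form in \eqref{eq:disc}. For the expansion, write $(I+\cb KM)^{-1}=I-\cb KM+O(\cb^2)$, which is a Neumann series valid for $\cb\|KM\|<1$. Then
\[
\Delta^{*}=\mu_0^{\top}M\mu_0-\cb\,\mu_0^{\top}\bigl(M^{\top}K^{\top}M+MKM\bigr)\mu_0+O(\cb^2).
\]
Because $M$ is symmetric, the two first-order terms both equal $(M\mu_0)^{\top}K(M\mu_0)$. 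This gives the coefficient $-2\cb$ with $\Delta^{(0)}=\mu_0^{\top}M\mu_0$. Only the symmetric part $K_s$ contributes to this first-order term, which is consistent with the monotone case producing a crowding discount.
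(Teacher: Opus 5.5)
Your proposal is correct and follows the paper's proof: aggregate to $F=\Ab\mu^{*}$, establish $\Ab\succeq0$, move the nonzero spectrum of $K\Ab$ onto the congruence $\Ab^{1/2}K\Ab^{1/2}$, whose numerical range lies in the closed right half-plane, and expand by a Neumann series in which only $K_s$ survives. Your eigenvector-level transfer, and the direct pairing with $\Ab x$, are slightly more elementary versions of the paper's appeal to the shared nonzero spectra of $XY$ and $YX$; nothing is missing.
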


Equation~\eqref{eq:disc} gives the first-order crowding discount. The loss is concentrated in the direction of the common speculative fund $M\mu_0$ and scales with capacity and projected illiquidity. With several classes the
first-order discount is $\sum_{a,b}\mu^a_m\mu^b_m\,(M^a\mu^a)^{\top}K\,(M^b\mu^b)$,
so heterogeneity acts only through the alignment of class funds in the geometry
of $K$. Whether heterogeneity reduces this discount depends on the alignment of class-level funds in the impact geometry and is not determined by dispersion alone.

\begin{theorem}[Nonlinear monotone impact]\label{thm:nl}
If $\Imp$ is continuous, monotone and coercive and aggregate demand is
$F=\Ab\mu$ with $\Ab\succeq0$, then $\mu=\mu_0-\Imp(\Ab\mu)$ has exactly one
solution. In particular uniqueness holds under the square-root law and its
multivariate concave generalisations.
\end{theorem}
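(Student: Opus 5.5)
The plan is to change the unknown from the premium $\mu$ to the aggregate position $F=\Ab\mu$, because monotonicity of $\Imp$ controls pairings of $\Imp(x)-\Imp(y)$ with $x-y$ in position space, not with premium differences. The direct map $\mu\mapsto\mu+\Imp(\Ab\mu)$ is not obviously monotone when $\Ab$ is merely positive semidefinite. It produces the pairing $\langle\Imp(\Ab\mu)-\Imp(\Ab\nu),\mu-\nu\rangle$, which does not match the hypothesis. Applying $\Ab$ to the fixed-point equation $\mu=\mu_0-\Imp(\Ab\mu)$ gives the position equation
\begin{equation*}
G(F):=F+\Ab\,\Imp(F)=\Ab\mu_0 ,
\end{equation*}
posed on the subspace $R:=\operatorname{range}(\Ab)$. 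Both $F$ and $\Ab\,\Imp(F)$ lie in $R$, so $G$ maps $R$ into itself.

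The key step is to equip $R$ with the inner product $\langle x,y\rangle_{\Ab}:=x^{\top}\Ab^{+}y$, using the Moore--Penrose inverse. This is positive definite on $R$ because $\Ab\succeq0$. Since $\Ab^{+}\Ab=\Ab\Ab^{+}$ is the orthogonal projector onto $R$ and $F-F'\in R$, for $F,F'\in R$ we have
\begin{equation*}
\langle G(F)-G(F'),F-F'\rangle_{\Ab}
=\|F-F'\|_{\Ab}^{2}+\langle \Imp(F)-\Imp(F'),F-F'\rangle
\;\ge\;\|F-F'\|_{\Ab}^{2}.
\end{equation*}
So $G$ restricted to $R$ is continuous and strongly monotone in this geometry. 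In finite dimensions such a map is a bijection of $R$ by the Browder--Minty theorem, or more elementarily by Brouwer's theorem plus the coercivity that strong monotonicity supplies. Therefore there is exactly one $F^{*}\in R$ with $G(F^{*})=\Ab\mu_0$. Coercivity of $\Imp$ is not strictly needed in this route, since strong monotonicity comes from the identity part. I would remark that coercivity is what is needed if one instead works directly with the premium map when $\Ab\succ0$.

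The remaining step is to return to premia. Existence: set $\mu^{*}:=\mu_0-\Imp(F^{*})$. Then $\Ab\mu^{*}=\Ab\mu_0-\Ab\,\Imp(F^{*})=F^{*}$, so $\mu^{*}=\mu_0-\Imp(\Ab\mu^{*})$. Uniqueness: any solution $\mu$ yields $F=\Ab\mu\in R$ solving $G(F)=\Ab\mu_0$, so $F=F^{*}$ and hence $\mu=\mu_0-\Imp(F^{*})=\mu^{*}$. For the last sentence of the theorem, I would check that the square-root law $\Imp(F)_i=\kappa_i\,\mathrm{sign}(F_i)|F_i|^{1/2}$ and its multivariate concave generalisations, written as gradients of convex potentials, are continuous and monotone, though not Lipschitz at the origin. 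The argument above uses only continuity and monotonicity, so the lack of a Lipschitz constant is harmless.

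The main obstacle I expect is precisely the mismatch of inner products when $\Ab$ is singular. That is what forces the reduction to positions and the $\Ab^{+}$-weighted geometry on $\operatorname{range}\Ab$. One must verify carefully that the projector identity $\Ab\Ab^{+}(F-F')=F-F'$ is what turns the weighted pairing of $\Ab\,\Imp(F)-\Ab\,\Imp(F')$ into the Euclidean pairing to which monotonicity applies. The same step also explains why no symmetry of the impact is required, consistent with the numerical-range argument of Theorem~\ref{thm:clearing}.
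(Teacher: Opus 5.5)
Your proposal is correct and is essentially the paper's proof. The paper also solves for the position rather than the premium, writes $F=\Ab^{1/2}y$, and shows that $H(y)=y+\Ab^{1/2}\Imp(\Ab^{1/2}y)$ is strongly monotone with modulus one before invoking Minty--Browder; its Euclidean geometry in $y$ is exactly your $\Ab^{+}$-weighted geometry on $\operatorname{range}\Ab$ transported by $\Ab^{1/2}$, and your treatment of singular $\Ab$ is slightly cleaner than the paper's ``matching on the range''.

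As in the paper, coercivity of $\Imp$ is never used, but your side remark that it would rescue the direct premium map when $\Ab\succ0$ is incorrect. The map $\mu\mapsto\mu+\Imp(\Ab\mu)$ can fail to be monotone even for linear, symmetric, monotone $K$ with $\Ab\succ0$, and coercivity does not repair that, so the change of metric is needed there too.
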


Theorem~\ref{thm:nl} rests on the surjectivity of continuous monotone coercive
operators \citep{Minty1962, Browder1963, Rockafellar1970}. The classical theorem is
applied to the position map after a symmetric square-root change of variables; it does
not apply directly to the premium map, which need not be monotone. The appendix gives
a counterexample to the direct route.

Because impact moves the premium, conditional independence of residual returns
given the drivers cannot hold unless the representation also spans the position
direction. Under the timing in Assumption~\ref{as:all} this closure is admissible. This
enlargement is termed the reflexive closure and is used in the certification experiment
of Section~\ref{sec:num}.

\subsection{Feedback and endogenous certification}
\label{sub:chain}

The two economic layers are coupled by an additional causal feedback: the configuration generates a
position, the position enters returns, and conditional independence given the drivers
alone therefore fails unless the conditioning set is closed under that position. The
admissible set of the information layer is consequently generated by the equilibrium
rather than declared before it. Figure~\ref{fig:chain} summarises this feedback and the
rest of the section proves it link by link:
\begin{equation}
\rho\ \longrightarrow\ F(\rho)=\Ab(\rho)\mu^{*}
\ \longrightarrow\ \text{reflexive closure}
\ \longrightarrow\ \mathcal{S}_\epsilon(\rho)
\ \longrightarrow\ \Psi
\ \longrightarrow\ \rho .
\label{eq:chain}
\end{equation}

\begin{figure}[t]
\centering
\begin{adjustbox}{max width=\linewidth}
\begin{tikzpicture}[font=\small,
  box/.style={draw,rounded corners,align=center,inner sep=5pt,minimum width=32mm},
  ar/.style={-{Latex},thick}]
\node[box,draw=pblue,fill=pblue!8] (rho) at (0,0) {\textbf{information layer}\\$\rho$ on $\mathcal{S}$};
\node[box,draw=porange,fill=porange!8] (val) at (7.4,0) {\textbf{value layer}\\$\mu^{*}$};
\node[box,draw=pgrey,fill=black!4] (flow) at (7.4,-2.6) {position $F=\Ab(\rho)\mu^{*}$};
\node[box,draw=pgreen,fill=pgreen!8] (cert) at (0,-2.6) {\textbf{certified set}\\$\mathcal{S}_\epsilon(\rho)$};
\draw[ar,pblue] ([yshift=5mm]rho.east) -- node[above,font=\footnotesize]{$\Phi$: aggregate, clear}
  ([yshift=5mm]val.west);
\draw[ar,porange] ([yshift=-5mm]val.west) -- node[below,font=\footnotesize]{$\Psi$: value, select}
  ([yshift=-5mm]rho.east);
\draw[ar,pgrey] (val.south) -- (flow.north);
\draw[ar,pgrey] (flow.west) -- node[below,font=\footnotesize]{reflexive closure} (cert.east);
\draw[ar,pgreen] (cert.north) -- node[left,font=\footnotesize]{constrains $\Psi$} (rho.south);
\end{tikzpicture}
\end{adjustbox}
\caption{The chain \eqref{eq:chain}. The two maps of the layer architecture are the
horizontal arrows. The lower path is the feedback that causal admissibility adds: the
configuration generates a position, the position must enter the conditioning set for
residual independence to survive, and the resulting certified set restricts the
selection map. Causal certification is therefore not an interpretation of the
information layer; it changes the correspondence that defines it.}
\label{fig:chain}
\end{figure}

The configuration-side notions in this loop are handled by different results.
Theorem~\ref{thm:exist} gives existence at every switching cost under the stated
certification regularity. Theorem~\ref{thm:layerjoint} gives aggregate determinacy
inside a stable certification cell, and Corollary~\ref{cor:hyst} gives a distinct
hysteresis mechanism. Attainability is instead a property of an adjustment process:
iterated aggregate best response converges geometrically under the contraction
condition and need not converge when that sufficient condition fails.

\subsection{The two layers: existence, uniqueness, and the map between them}
\label{sub:layers}

The model has two layers and four distinct determinacy questions. This subsection
states the relevant condition for each and then the condition under which the two
layers can be solved jointly. Figure~\ref{fig:layers} in \ref{app:layers}
collects the architecture.

\begin{definition}[The layers, the selection potential, and the full map]\label{def:layers}
For each incumbent type $\vartheta$, let
$\mathcal X_\vartheta:=\{\rho_\vartheta\ge0:\rho_\vartheta(\mathcal S)\le M_\vartheta\}$
and let $\mathcal X:=\prod_{\vartheta\in\mathcal T}\mathcal X_\vartheta$.
For $\rho\in\mathcal X$, write $\rhos=\sum_\vartheta\rho_\vartheta$ and
\begin{equation}
\Ab(\rho):=\int_{\mathcal S}M(S)\,d\rhos(S),\qquad
\Phi(\rho):=(I+K\Ab(\rho))^{-1}\mu_0 .
\label{eq:maps}
\end{equation}
Thus $\Phi$ is the value-layer aggregation and clearing map.

Fix a nonempty certified set $C\subseteq\mathcal S$ and a premium $\mu$. The
information-layer potential is
\begin{align}
\mathcal V_C(\rho;\mu)
&:=\sum_{\vartheta\in\mathcal T}\int_C
 \bigl[\Delta(S;\mu)-c_0-c_{sw}d(S_{0,\vartheta},S)\bigr]d\rho_\vartheta(S)
\notag\\[-1mm]
&\quad-\frac{c_r}{2}\iint_{C\times C}\omega(S,S')\,d\rhos(S)d\rhos(S'),
\label{eq:selectionpotential}
\end{align}
and $\Psi_C(\mu)$ is its argmax over $\rho\in\mathcal X$ with
$\operatorname{supp}\rho_\vartheta\subseteq C$ for every $\vartheta$.
The full causal selection map is therefore
\begin{equation}
\mathcal T(\rho):=\Psi_{\mathcal S_\epsilon(\rho)}\bigl(\Phi(\rho)\bigr).
\label{eq:fullmap}
\end{equation}
A representation equilibrium is a fixed point $\rho^*\in\mathcal T(\rho^*)$ together
with the clearing premium $\mu^*=\Phi(\rho^*)$ and the type-specific multipliers on the
capacity constraints. Causal certification is part of the fixed-point map: it is not
imposed after selection has been solved.
\end{definition}

$\Phi$ is single valued wherever the resolvent is regular, whereas $\Psi_C$ is
generally a correspondence because different representations or type assignments can
be exactly indifferent. The analysis distinguishes uniqueness of the \emph{aggregate} configuration $\rhos$
from uniqueness of its type-level decomposition. Within a stable
certification cell, strict concavity in aggregate representation mass produces a
single aggregate configuration; the decomposition of that mass across incumbent types
need not be unique and does not affect prices or certification.

The coupling is governed by the small-gain condition
\begin{equation}
\frac{L_\Phi L_\Delta}{\sigma}<1,
\label{eq:jointcontraction}
\end{equation}
where $L_\Phi$ measures how strongly an aggregate configuration change moves the
premium, $L_\Delta$ how strongly the premium changes representation values, and
$\sigma$ is the restoring curvature supplied by informational congestion. In one
sentence, congestion must dominate reflexivity. Theorem~\ref{thm:layerjoint} makes
this statement only inside a certification cell on which the feasible set is fixed;
it makes no global uniqueness claim across a boundary at which certification changes.

\begin{table}[t]
\centering\small
\begin{tabularx}{\linewidth}{@{}>{\raggedright\arraybackslash}p{32mm}>{\raggedright\arraybackslash}p{23mm}>{\raggedright\arraybackslash}X>{\raggedright\arraybackslash}X@{}}
\toprule
Question & Object & Unique when & Reference point \\
\midrule
Premium at a given configuration & $\Phi(\rho)$ & regular resolvent; monotone $K$ is sufficient & market-impact clearing \\
Aggregate configuration at a given premium & $x^\Sigma(\mu)$ & aggregate congestion curvature $\sigma>0$ & potential/congestion games \\
The two jointly, inside a certification cell & $x^*$ & $L_\Phi L_\Delta/\sigma<1$ & this paper \\
Stationary position path at a settled configuration & $\{g_t\}$ & $\operatorname{Re}\lambda>-\tfrac12$ for every reduced eigenvalue & rational-expectations indeterminacy \\
\bottomrule
\end{tabularx}
\caption{Four distinct determinacy questions. The third row concerns aggregate representation mass inside a stable certified set; it does not imply a unique type-level decomposition or global uniqueness across certification cells. The spectral dichotomy of Theorem~\ref{thm:dichotomy} is the fourth row and is conditional on a settled configuration.}
\label{tab:fourfold}
\end{table}

\subsection{Causal content, and what equilibrium changes}
\label{sub:causal}

The representation is a conditional independence structure, so the framework has
causal content, and the content changes when the population is large enough to
move prices. The subsection characterises that change and relates the resulting equilibrium to the
price-taking single-agent benchmark in \citet{RDstatic2026,RDdynamic2026}.

The causal semantics of the framework are those of the static theory: separation is
Reichenbach screening off, testable at the observational level; under a structural
margin the common causes form a separator; observational data identify the realised
information of the causes and not their labels; and causal and correlational
separators of equal fit are told apart by interventions on non-parents, which is
invariant causal prediction \citep{Peters2016}. Equilibrium changes one thing in that account, and it
changes it decisively.

\paragraph{Levels of interpretation.} At the \emph{observational} level, separation is
a screening-off condition, and the clearing, spectral, ceiling, and learning results depend
only on conditional laws. At the \emph{structural} level, under the margin condition of
\citet{RDstatic2026}, common causes form a separator and observational data identify their
realised information up to equivalent labels. The \emph{interventional} claim is
Proposition~\ref{prop:causal}(iii): a representation invariant under one equilibrium
configuration need not remain invariant under a different configuration because the
structural return equation itself depends on the population state.

The causal closure relies on Assumption~\ref{as:all}. The aggregate position established
within the period must affect returns through inventory impact without forecasting the
subsequent fundamental innovation. Under this timing, conditioning on the established
position does not introduce anticipative information. If signed flow instead predicts the
efficient-price revision after conditioning on realised fundamentals, the corresponding
direction must be removed or its informational component purged before the reflexive
closure is applied.

Proposition~\ref{prop:causal} formalises the effect of equilibrium feedback on certification.

\begin{proposition}[Equilibrium invalidates single-agent certification]
\label{prop:causal}
Let $S$ be a valid causal separator in the price-taking economy, so that residual
returns are conditionally independent across-assets given the drivers spanning
$S$. In the multi-agent equilibrium with $\cb>0$ and impact $\Imp$:
\begin{enumerate}\itemsep2pt
\item[(i)] $S$ remains a valid separator if and only if the exposures it induces span
the realised position direction, $\Ab\mu^{*}\in E(S)$ up to the kernel of $K$, with $E(S)$
as in \eqref{eq:induced}. The condition is on the induced exposures and not on the
driver subspace, since the position is an asset-space object.
This is the reflexive closure of Section~\ref{sec:frame}, and under
Assumption~\ref{as:all} the closure is admissible, the established position entering returns as a cause
and not a collider.
\item[(ii)] If the closure fails, the residual after conditioning on $S$ carries
cross-sectional dependence supported on the assets loaded by
$K\Ab\mu^{*}$, of rank at most the number of destabilizing directions carried by
the population.
\item[(iii)] Invariance based selection \citep{Peters2016} is valid conditionally
on the configuration and not across configurations: the structural equation for
returns contains $\rho$ through $\Ab$, so a representation certified as invariant
on one configuration need not be invariant on another, and a change of
configuration is not an intervention that the invariance principle can use.
\end{enumerate}
\end{proposition}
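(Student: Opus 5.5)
The plan is to work throughout with the structural return equation implied by Assumption~\ref{as:all} and Theorem~\ref{thm:clearing}. Substituting the driver decomposition \eqref{eq:qsplit} into $r_{t+1}=\mu_0(z_t)-KF_t+\varepsilon_{t+1}$ with $F_t=\Ab\mu^{*}_t$ gives
\begin{equation*}
r_{t+1}=\Theta z_t-K\Ab\mu^{*}_t+\Theta\zeta_{t+1}+e_{t+1}.
\end{equation*}
In the price-taking economy ($\cb=0$), $S$ is a separator. This means that after projecting out the driver directions carried by $S$, the residual is $e_{t+1}$, whose covariance $\Sigma^c$ is diagonal. With $\cb>0$ the only new term is the impact term $-K\Ab\mu^{*}_t$. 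Because $\mu^{*}_t=(I+K\Ab)^{-1}\mu_0(z_t)$ varies with the state, this term is a genuinely random asset-space vector. By Proposition~\ref{prop:timing} it is $\mathcal G_t$-measurable and collider safe, so conditioning on it is admissible. The proof therefore reduces to linear algebra on this extra term.

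For (i), I would write the residual after conditioning on $S$ as the diagonal block $e_{t+1}$ plus the component of $K\Ab\mu^{*}_t$ not absorbed by the conditioning. Conditioning on $S$ removes exactly the asset-space directions $E(S)=\Theta S$, through the $Q$-orthogonal projection $M(S)$ of \eqref{eq:pot}. Sufficiency: if $\Ab\mu^{*}\in E(S)+\ker K$, then $K\Ab\mu^{*}$ is a function of the conditioned exposures, so the residual is again $e_{t+1}$ and separation survives. Necessity: suppose some component of $\Ab\mu^{*}$ lies outside $E(S)+\ker K$. Its image under $K$ is then a nondegenerate random vector, independent of $e_{t+1}$. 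It adds a rank-one (or low-rank) term $Kv\,v^{\top}K^{\top}$ to the residual covariance. Such a term cannot be diagonal unless $Kv$ is supported on a single asset, and I would handle that case separately or rule it out by genericity. Either way, residual cross-sectional independence fails. Admissibility of the closure is then just Proposition~\ref{prop:timing}.

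For (ii), the covariance perturbation $\operatorname{Cov}\bigl((I-\Pi_S)K\Ab\mu^{*}_t\bigr)$ has column space inside $\operatorname{span}\{K\Ab\mu^{*}\}$, taken over the support of the state. So the off-diagonal dependence is supported on the assets loaded by $K\Ab\mu^{*}$. For the rank bound, I would decompose $\Ab=\int M(S')\,d\rhos(S')$. I would note that only directions in which the population's exposures are not spanned by $E(S)$, namely the destabilising or position directions the population carries, contribute. The rank is therefore at most the number of such directions.

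For (iii), the argument is that $\rho$ enters the structural equation through $\Ab(\rho)$, so the conditional law of $r$ given $S$ depends on the configuration. I would give an explicit two-configuration example: take $\rho,\rho'$ with $\Ab(\rho)\mu^{*}\in E(S)$ but $\Ab(\rho')\mu^{*\prime}\notin E(S)$. Part (i) then shows $S$ is certified under $\rho$ and not under $\rho'$. A change of configuration also alters the mechanism $K\Ab$ itself, not only the distribution of a non-parent. It is therefore not an intervention in the sense of \citet{Peters2016}, whose invariance principle requires a fixed structural equation.

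The main obstacle is the necessity direction of (i). I need to show that a nonzero unabsorbed impact component genuinely creates off-diagonal residual covariance, rather than being diagonal or degenerate. This requires $\mu^{*}_t$ to have nondegenerate variation in the relevant direction, and $Kv$ not to be concentrated on a single asset. I would impose the first through square-integrability and nondegeneracy of $\mu_0(z)$, and handle the second either by a genericity qualification or by interpreting ``span up to $\ker K$'' modulo such trivial directions.
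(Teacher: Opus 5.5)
\textbf{Verdict: your plan follows the paper's own proof closely and adds a valid caveat, but it inherits one step that does not hold — the sufficiency half of (i) — and a mislabelled rank bound in (ii).}

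\textbf{Where you agree with the paper.} Your plan is the paper's proof spelled out in more detail. It uses the same return equation with the single extra term $-K\Ab\mu^{*}_t$. Part (i) becomes the question whether that term is absorbed by the conditioning set, with admissibility taken from Proposition~\ref{prop:timing}. Part (ii) becomes one rank-one contribution per uncaptured direction, supported on the assets loaded by $K\Ab\mu^{*}$. Part (iii) follows from $\rho$ entering through $\Ab(\rho)$; your explicit two-configuration example is a cleaner route to ``need not be invariant'' than the paper's general assertion. Your worry about necessity is justified and goes beyond the paper. An unabsorbed term creates off-diagonal residual covariance only if it varies nondegenerately and loads at least two assets. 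That is the analogue of Condition~\ref{cond:cert}, which the paper imposes for the convention test but omits here.

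\textbf{The step that fails: sufficiency in (i).} The paper's proof also only asserts this step (``which holds exactly when\dots''). Certification \eqref{eq:certified} projects each return on the \emph{random variables} in the conditioning set. So separation survives when $K\Ab\mu^{*}_t=K\Ab(I+K\Ab)^{-1}\Theta z_t$ is a linear function of $V^{\top}z_t$, i.e.\ when $K\Ab(I+K\Ab)^{-1}\Theta(I-P_S)=0$. That condition constrains how the map acts on driver directions outside $S$. By contrast, $\Ab\mu^{*}\in E(S)+\ker K$ only constrains where its image lies, and the $\ker K$ clause makes it too weak.

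A counterexample: take $n=D=2$, $\Theta=Q=I$, $S=\spn(e_1)$, and independent nondegenerate drivers. At $\cb=0$ the residuals after conditioning on $z_1$ are $(\varepsilon_1,\tilde z_2+\varepsilon_2)$, so $S$ separates. Now let $K=\bigl(\begin{smallmatrix}1&0\\1&0\end{smallmatrix}\bigr)$ and put the population on $\spn(e_1+e_2)$, so $\Ab=\tfrac{\cb}{2}\bigl(\begin{smallmatrix}1&1\\1&1\end{smallmatrix}\bigr)$. Then $E(S)+\ker K=\R^{2}$, so your hypothesis holds trivially. Yet $K\Ab\mu^{*}_t=c\,(z_{1,t}+z_{2,t})(1,1)^{\top}$ with $c=\cb/(2(1+\cb))$, and after projecting on $z_1$ the two residuals have covariance $-c(1-c)\operatorname{Var}(z_2)\neq0$.

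Your mechanism via $M(S)$ does not rescue the step. First, $M(S)$ has range $\spn(Q^{-1}B)$; the projector onto $E(S)$ is $QM(S)$. Second, a cross-sectional projection off $E(S)$ would need $K\Ab\mu^{*}\in E(S)$, whereas $\Ab\mu^{*}\in E(S)+\ker K$ gives only $K\Ab\mu^{*}\in KE(S)$.

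What your argument does prove is sufficiency of the measurability condition above, and that is the form (i) should take. It agrees with the stated condition only in special cases; for example, with $\Theta=I$ and invertible $K$ both reduce to $\operatorname{range}\Ab\subseteq E(S)$.

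\textbf{A related problem in (ii).} Your argument and the paper's bound the rank by the number of uncaptured position directions. Calling these ``destabilising'' directions is unfounded. Take $K=I$ in the same example: then $A=\Ab\succeq0$ has no destabilising direction, yet $\Ab\mu^{*}\notin E(S)$ and the residuals acquire a common factor loading both assets.
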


Part (iii) motivates evaluating certification conditional on the realised configuration
rather than transporting an invariance certificate across equilibrium states. It also
connects to the order-three masking mechanism in \citet{RDorder3}: a representation may
pass pairwise conditional-independence tests while failing a higher-order test after the
population state enters the return equation. This mechanism is absent in the price-taking
single-agent problem because an individual conditioning choice does not alter the data-generating process.

\subsection{Relation to the single-portfolio framework}
\label{sub:closure}

The model extends a sequence of price-taking results in which the conditioning
representation is selected by a single portfolio, allowed to move over time, and then
aggregated across portfolios while prices remain exogenous
\citep{RDstatic2026,RDdynamic2026,RDorder3}. The present paper endogenises the premium
and the population configuration. The relevant connection is a limit result rather
than a change of interpretation: as aggregate capacity vanishes, price feedback and the
configuration dependence of certification vanish with it.

\begin{theorem}[Correspondence with the single-portfolio theory]\label{thm:corr}
Fix fundamentals, costs, and the driver pool, and let $M_C$ denote the constrained
projector of \citet{RDstatic2026}.
\begin{enumerate}\itemsep1pt
\item[(i)] $M(S)$ and $M_C$ are generally different operators. Their associated
potentials coincide when the premium lies in the induced driver span and the static
admissibility constraint is slack; otherwise their difference is the shadow value of
that constraint.
\item[(ii)] As $\cb\downarrow0$, the clearing premium converges uniformly to $\mu_0$,
the certified set loses its dependence on the population configuration, and the
equilibrium correspondence converges to the price-taking selection problem.
\item[(iii)] Representation-estimation error is amplified by the equilibrium price
feedback by an additional term of order $\cb\|(I+K\Ab)^{-1}\|^2\|K\|\|\mu_0\|$.
\item[(iv)] The hedge against predictable representation motion in the dynamic
price-taking problem vanishes to first-order at a settled representation equilibrium.
\item[(v)] In a supercritical equilibrium, the pooled conditioning set acquires
incremental predictive content along the convention direction; the resulting
incremental $R^2$ is strictly below $1/(2|\lambda^*|)<1$.
\end{enumerate}
\end{theorem}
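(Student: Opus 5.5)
The plan is to prove the five parts separately, since they rest on different earlier results.

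\emph{Part (i).} I would write both potentials as $Q$-orthogonal projections of the premium. By \eqref{eq:pot}, $\Delta(S)=\|Q^{-1/2}\mu\|^2$ projected onto $Q^{-1/2}E(S)$. The static $M_C$ of \citet{RDstatic2026} is the same projection, taken over the feasible set cut out by the admissibility constraint. When that constraint is slack and $Q^{-1}\mu$ lies in the induced span, both projections return the whole premium, so the potentials coincide. Otherwise I would write the constrained problem as a Lagrangian. The KKT conditions then show the gap between the two optimal values equals the multiplier times the constraint, which is the shadow value. To see that the operators themselves differ, it suffices to exhibit $\mu$ outside $E(S)$.

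\emph{Part (ii).} This is the core and carries the main obstacle. Uniform convergence of the premium follows from the resolvent identity
$\mu^*-\mu_0=-(I+K\Ab)^{-1}K\Ab\mu_0$.
On the compact family $\mathcal A$ we have $\|\Ab\|\le\cb\,\sup_S\|M(S)\|$, and $\|M(S)\|$ is bounded by Assumption~\ref{as:rank} together with compactness of $\mathcal S$. For small $\cb$ a Neumann series gives $\|(I+K\Ab)^{-1}\|\le 2$. Hence $\|\mu^*-\mu_0\|=O(\cb)$ uniformly in the configuration and, given square integrability, in the state.

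The certified set depends on $\rho$ only through the position $F=\Ab\mu^*=O(\cb)$ entering residuals via $K$. Continuity of the certified-set correspondence, already assumed for Theorem~\ref{thm:exist}, then gives $\mathcal S_\epsilon(\rho)\to\mathcal S_\epsilon(0)$. With $\Phi$ converging uniformly and the certified set converging, the selection potential \eqref{eq:selectionpotential} converges uniformly. Berge's maximum theorem then gives upper hemicontinuity of $\Psi$, so limits of equilibria solve the price-taking problem.

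The obstacle is the passage of certification through the limit. If $\mathcal S_\epsilon$ jumps at $F=0$, the limit holds only in the upper-hemicontinuous sense. I would state the convergence that way, and obtain full convergence only in the discrete case, where $d_{\min}>0$ makes the certified set eventually constant.

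\emph{Part (iii).} I would perturb $M(S)\to M(\hat S)$ and differentiate $\mu^*=(I+K\Ab)^{-1}\mu_0$ in $\Ab$:
$\delta\mu^*=-(I+K\Ab)^{-1}K\,\delta\Ab\,(I+K\Ab)^{-1}\mu_0$.
Since $\delta\Ab=\cb\,\delta M$, bounding norms gives the feedback amplification $\cb\|(I+K\Ab)^{-1}\|^2\|K\|\|\mu_0\|$ per unit of subspace error. This term adds to the direct price-taking error, and Davis--Kahan converts estimation error into $\|\delta M\|$.

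\emph{Part (iv).} At a settled equilibrium the configuration is stationary, so predictable representation motion has zero drift to first order. The dynamic hedge of \citet{RDdynamic2026} is linear in that drift, so it vanishes to first order. Second-order terms are left to the appendix.

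\emph{Part (v).} I would take the convention solution from Theorem~\ref{thm:dichotomy}, an AR(1) extrinsic component along the eigenvector with eigenvalue $\lambda^*$ satisfying $\operatorname{Re}\lambda^*<-\tfrac12$. The incremental $R^2$ is the share of return variance along that direction explained by the lagged convention. The stationarity constraint on that recursion bounds the explained share by $1/(2|\lambda^*|)$, which is strictly below $1$ because $|\lambda^*|>\tfrac12$. Strictness follows from the nonzero innovation variance.
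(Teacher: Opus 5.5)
Your route for (ii)--(iv) matches the paper's: a uniform $O(\cb)$ bound on $\|(I+K\Ab)^{-1}-I\|\,\|\mu_0\|$ followed by Berge, the second resolvent identity with $\hat\Ab-\Ab=\cb(M(\hat S)-M(S))$, and stationarity of the settled configuration.

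The main gap is (v). There the inequality is the whole content, and \emph{the stationarity constraint bounds the explained share by $1/(2|\lambda^*|)$} neither proves it nor says where the number comes from.
\begin{itemize}
\item The convention enters returns as $K(F_{t+1}-F_t)=Ku\,(f_{t+1}-f_t)$. The relevant direction is therefore $v=Ku/\|Ku\|$, not the eigenvector $u$.
\item Along $v$, the self-confirmation identity $\varpi^{-1}-1=1/\lambda^*$ makes the predictable part $\|Ku\|f_t/\lambda^*$. The unpredictable variance is $\|Ku\|^2\sigma_\xi^2+v^\top Qv$. Together these give \eqref{eq:eps3}.
\item Substitute $\sigma_f^2=\sigma_\xi^2/(1-\varpi^{-2})$ and use $(\lambda^*)^2(1-\varpi^{-2})=-(1+2\lambda^*)=2|\lambda^*|-1$. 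With $v^\top Qv$ removed, the ratio equals $1/(2|\lambda^*|)$ exactly, for every $\sigma_\xi$.
\end{itemize}
So $1/(2|\lambda^*|)$ is an identity, not a bound. It is the predictable share $(1-\varpi^{-1})/2$ of the change of a stationary AR(1) given its level. Stationarity ($\lambda^*<-\tfrac12$) only makes it less than one. The strict inequality comes from the fundamental innovation: $v^\top Qv>0$ because $Q\succ0$. The convention innovation $\sigma_\xi$ cancels from the ratio, so it cannot be what makes the bound strict.

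Part (i) also needs repair.
\begin{itemize}
\item $M_C=Q^{-1}-Q^{-1}C^\top(CQ^{-1}C^\top)^{-1}CQ^{-1}$, with $C=U^\top B^\top$, is the mean--variance operator with holdings restricted to $\ker C$. It has rank $n-q$. It is not a further restriction of $M(S)$, which has rank $m$.
\item Comparing ranks is what shows the operators differ: equality forces $U=0$ and $m=n$. Exhibiting some $\mu\notin E(S)$ does not by itself give $M(S)\mu\neq M_C\mu$.
\item Your Lagrangian compares $M_C$ with the \emph{unconstrained} value $\mu^\top Q^{-1}\mu$. The gap is $\eta^{*\top}(CQ^{-1}C^\top)\eta^*=\eta^{*\top}CQ^{-1}\mu$: the multiplier against the constraint violation of the unconstrained optimum. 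At the constrained optimum the constraint holds, so \emph{multiplier times constraint} is zero.
\item $M(S)$ attains the unconstrained value only because $M(S)\mu=Q^{-1}\mu$ on $\operatorname{range}B$. Off that range, the difference of potentials is the shadow value minus the deficit $\mu^\top(Q^{-1}-M(S))\mu\ge0$. The paper records this term separately.
\end{itemize}

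Your fallback in (ii) is wrong on both counts, though your main line there is fine.
\begin{itemize}
\item Suppose $\mathcal S_\epsilon$ jumps at $F=0$ by losing lower hemicontinuity. Then Berge no longer applies, and even upper hemicontinuity of the equilibrium correspondence can fail: limit points are optimal only over the smaller limiting feasible set.
\item In the discrete case, the certified set is eventually constant when $\epsilon(S;\cdot)$ is continuous at zero position and every $S$ has strict slack, $\epsilon(S;0)\neq\epsilon$. The quantity $d_{\min}>0$ concerns switching costs and is irrelevant here.
\end{itemize}
Such a jump is not exotic. $\mathrm{dep}$ projects returns on $F(\rho)$, and projection is invariant to rescaling the regressor. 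With $\Ab=\cb\Ab_1$ one has $F/\cb\to\Ab_1\mu_0(z)$. So a small $F$ removes the impact term but not the added conditioning direction. Convergence of the certified set therefore has to be assumed; it cannot be inferred from $F=O(\cb)$. The paper's proof, which only invokes uniform convergence of payoffs and Berge, also leaves this implicit.
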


The expanded identities and derivations are given in \ref{app:corr}. They are
not additional equilibrium assumptions; they record how the price-taking objects are
recovered from the present model.

\begin{figure}[t]
\centering
\begin{adjustbox}{max width=\linewidth}
\begin{tikzpicture}[font=\small,
  lay/.style={draw,rounded corners,align=left,inner sep=5pt,minimum width=62mm},
  ar/.style={-{Latex},thick}]
\node[lay,draw=pblue,fill=pblue!8] (st) {\textbf{static, one portfolio}\\
  \footnotesize separation, projected Markowitz, frontier potential\\
  \footnotesize \emph{takes as given:} the premium};
\node[lay,draw=pgreen,fill=pgreen!8,below=5mm of st] (dy) {\textbf{dynamic, one portfolio}\\
  \footnotesize moving manifold, hedge against its motion, switches\\
  \footnotesize \emph{takes as given:} the law of that motion};
\node[lay,draw=porange,fill=porange!8,below=5mm of dy] (ag) {\textbf{aggregation across portfolios}\\
  \footnotesize wedge, loss, order-three masking, admissibility\\
  \footnotesize \emph{takes as given:} that pooling does not move prices};
\node[lay,draw=black,very thick,fill=black!5,below=5mm of ag] (eq) {\textbf{this paper: equilibrium}\\
  \footnotesize the premium clears, the configuration clears, capital clears\\
  \footnotesize \emph{endogenises:} the premium, the motion, the pooled signal};
\draw[ar,pblue] (st) -- (dy); \draw[ar,pgreen] (dy) -- (ag); \draw[ar,porange] (ag) -- (eq);
\draw[ar,black,dashed] (eq.east) .. controls +(24mm,0) and +(24mm,0) ..
  node[right,align=left,font=\footnotesize,xshift=1mm]
  {the closure returns\\ the premium each\\ layer assumed} (st.east);
\end{tikzpicture}
\end{adjustbox}
\caption{The four layers. Each of the first three takes as given an object the next
one down leaves free, and the equilibrium returns to the first the premium it
assumed. The dashed arrow is the content of
Theorem~\ref{thm:corr}.}
\label{fig:stack}
\end{figure}

\begin{figure}[t]
\centering
\begin{adjustbox}{max width=\linewidth}
\begin{tikzpicture}[x=1cm,y=1cm,font=\small]
\begin{scope}
  \node[font=\small\bfseries,text=pblue] at (2.2,3.9) {single-agent};
  \draw[-{Latex}] (-0.3,0) -- (4.6,0);
  \draw[-{Latex}] (0,-0.3) -- (0,3.5);
  \draw[-{Latex},very thick,porange] (0,0) -- (1.3,3.0) node[above,font=\footnotesize] {$\mu_0$};
  \draw[very thick,pblue] (0,0) -- (4.4,2.6) node[right,font=\footnotesize] {$S^{*}$};
  \draw[-{Latex},thick,pgreen!60!black] (0,0) -- (2.6,1.54);
  \draw[dashed,pgrey] (1.3,3.0) -- (2.6,1.54);
  \node[font=\footnotesize,align=center,text=pgrey] at (2.3,-0.95)
    {the premium is a datum;\\the agent rotates $S$ to catch it};
\end{scope}
\begin{scope}[xshift=76mm]
  \node[font=\small\bfseries,text=porange] at (2.2,3.9) {multi-agent};
  \draw[-{Latex}] (-0.3,0) -- (4.6,0);
  \draw[-{Latex}] (0,-0.3) -- (0,3.5);
  \draw[-{Latex},thick,porange,opacity=.35] (0,0) -- (1.3,3.0)
       node[left,font=\footnotesize,opacity=1,xshift=1mm] {$\mu_0$};
  \draw[-{Latex},very thick,porange] (0,0) -- (1.75,2.05)
       node[left,font=\footnotesize,xshift=-1mm,yshift=2mm] {$\mu^{*}$};
  \draw[-{Latex},pgrey,thick] (1.32,2.95) .. controls (1.62,2.6) .. (1.72,2.18);
  \node[font=\scriptsize,text=pgrey,align=left,anchor=west] at (2.25,3.45)
    {impact of the\\population's own position};
  \foreach \a/\op in {24/0.30, 30/0.45, 36/0.65, 42/0.95, 50/0.5} {
    \draw[pblue,opacity=\op,thick] (0,0) -- (\a:4.3);
  }
  \node[font=\footnotesize,text=pblue] at (4.62,1.95) {$\rho$};
  \draw[-{Latex},thick,pgreen!60!black] (0,0) -- (2.02,1.82);
  \node[font=\footnotesize,align=center,text=pgrey] at (2.3,-0.95)
    {every rotation moves the premium\\that made it worth rotating};
\end{scope}
\end{tikzpicture}
\end{adjustbox}
\caption{What changes when the population is large. On the left the premium is
exogenous: the portfolio rotates its conditioning subspace to capture as much of a
fixed vector as the geometry allows, which is the problem the static theory solves. On
the right the same rotation, taken by a mass of portfolios, generates a position that shortens
and turns the vector itself, so the object being chased is a function of the choices
chasing it. The distribution $\rho$ over subspaces replaces the single $S$, the
clearing premium $\mu^{*}$ replaces $\mu_0$, and the fixed point of the loop is the
representation equilibrium. Setting the mass to zero returns the left panel, which is
Theorem~\ref{thm:corr}(ii).}
\label{fig:unimulti}
\end{figure}

\subsection{The equilibrium object}

\begin{definition}[Representation equilibrium]\label{def:re}
Given fundamentals $(\mu_0,Q,K)$, costs $(c_{sw},c_r,c_0)$, type endowments
$\{M_\vartheta,S_{0,\vartheta}\}_{\vartheta\in\mathcal T}$ and admissible
representation set $\mathcal S$, a representation equilibrium is a triple
$(\mu^*,\rho^*,\eta^*)$, with $\rho^*=(\rho^*_\vartheta)_\vartheta\in\mathcal X$ and
$\eta^*=(\eta^*_\vartheta)_\vartheta\in\R_+^{|\mathcal T|}$, such that
\begin{align}
&\text{(i) value clearing:}
&&\mu^*=(I+K\Ab(\rho^*))^{-1}\mu_0;
\label{eq:re1}\\[1mm]
&\text{(ii) causal certification:}
&&\operatorname{supp}\rho^*_\vartheta\subseteq\mathcal S_\epsilon(\rho^*)
\quad\text{for every }\vartheta;
\label{eq:re2}\\[1mm]
&\text{(iii) typewise allocation:}
&&u_\vartheta(S;\mu^*,\rho^*)\le\eta^*_\vartheta
\quad\forall S\in\mathcal S_\epsilon(\rho^*),
\notag\\[-1mm]
&&&u_\vartheta(S;\mu^*,\rho^*)=\eta^*_\vartheta
\quad\rho^*_\vartheta\text{-a.e.},
\notag\\[-1mm]
&&&\eta^*_\vartheta\bigl(M_\vartheta-\rho^*_\vartheta(\mathcal S)\bigr)=0,
\label{eq:re3}
\end{align}
where
\begin{equation}
u_\vartheta(S;\mu,\rho):=\Delta(S;\mu)-c_r r(S,\rhos)-c_0
-c_{sw}d(S_{0,\vartheta},S),
\label{eq:typepayoff}
\end{equation}
and
\begin{equation}
\mathcal{S}_\epsilon(\rho):=\bigl\{S\in\mathcal S:\epsilon(S;\rho)\le\epsilon\bigr\},
\qquad
\epsilon(S;\rho):=\max_{i\neq j}\mathrm{dep}\bigl(r^i,r^j\mid S,F(\rho)\bigr).
\label{eq:certified}
\end{equation}
The residual-dependence functional $\mathrm{dep}(r^i,r^j\mid\mathcal C)$ is the
absolute correlation of the residuals after projecting each return on the conditioning
set $\mathcal C$, normalised to vanish exactly under conditional independence in the
working linear class. The statements below use only this normalisation and continuity
properties of the certified-set correspondence. The conditioning set is closed under
the aggregate position it generates, which is what makes certification depend on
$\rho$.
\end{definition}

\begin{remark}[The admissible set is endogenous, and why that matters]\label{rem:cert}
Figure~\ref{fig:equilibrium} shows what clearing means for one representation and
across them; the restriction to $\mathcal{S}_\epsilon$ is essential. Without it the equilibrium
selects the highest net potential over every subspace, certified or not, and the
price-taking limit of Theorem~\ref{thm:corr}(ii) would recover the wrong problem: the
static theory optimises over $\epsilon$ separators, not over all of $\Gr(m,D)$, and a
high potential non separator can dominate a lower potential separator in an objective
that omits the constraint. The first implication is that the constraint is
\emph{endogenous} here and exogenous in the price-taking benchmark: by the reflexive closure, $\epsilon$ is
evaluated on a conditioning set that contains the position, and the position is generated by
$\rho$, so $\mathcal{S}_\epsilon(\rho)$ moves with the configuration. At $\cb=0$ the
position channel vanishes, the dependence on $\rho$ vanishes, and the set is the exogenous
one of the static theory. The second implication is that continuity of the constraint correspondence is an explicit
hypothesis of the existence theorem; Theorem~\ref{thm:exist} identifies the binding-boundary case in which the argument no longer applies.
\end{remark}

Condition~\eqref{eq:re3} combines representation choice and capital clearing in
one system rather than solving them in separate steps. For each incumbent type,
$\eta_\vartheta$ is the shadow value of its effective risk-capacity constraint:
capital is carried only on representations that maximise value net of redundancy,
formation and switching costs, and undeployed capacity is possible only when that
maximal value is zero. The multiplier is type specific because switching costs depend
on the incumbent. Requiring a single market-wide multiplier while allowing
heterogeneous incumbents would impose a restriction not generated by the model.

\begin{figure}[t]
\centering
\begin{adjustbox}{max width=\linewidth}
\begin{tikzpicture}[x=1cm,y=1cm,font=\small]
\begin{scope}
  \draw[-{Latex}] (-0.3,0) -- (5.0,0) node[right,font=\footnotesize] {mass on $S$};
  \draw[-{Latex}] (0,-0.3) -- (0,3.6) node[above,font=\footnotesize] {value};
  \draw[very thick,porange] plot[domain=0:4.6,samples=60] (\x,{3.1/(1+0.36*\x)})
        node[right,font=\footnotesize] {$\Delta$};
  \draw[very thick,pblue] plot[domain=0:4.6,samples=2] (\x,{0.30+0.42*\x})
        node[above right,font=\footnotesize,yshift=-1mm] {$c_r r$};
  \draw[very thick,pgreen!60!black] plot[domain=0:3.6,samples=60]
        (\x,{3.1/(1+0.36*\x)-0.30-0.42*\x}) node[below,font=\footnotesize] {net};
  \draw[dashed,pgrey,thick] (0,1.05) -- (4.7,1.05) node[right,font=\footnotesize] {$\eta$};
  \draw[dotted,thick] (1.62,0) -- (1.62,1.05);
  \fill[black] (1.62,1.05) circle (2pt);
  \node[font=\footnotesize] at (1.62,-0.45) {$\rho^{*}(S)$};
\end{scope}
\begin{scope}[xshift=74mm]
  \draw[-{Latex}] (-0.3,0) -- (5.0,0) node[right,font=\footnotesize] {representations, ranked};
  \draw[-{Latex}] (0,-0.3) -- (0,3.6) node[above,font=\footnotesize] {net value};
  \foreach \i/\h in {0.5/3.05, 1.2/2.65, 1.9/2.25, 2.6/1.75, 3.3/1.05, 4.0/0.62, 4.7/0.30}
     {\draw[very thick,pblue] (\i,0) -- (\i,\h);}
  \draw[dashed,pgrey,thick] (-0.1,1.05) -- (5.0,1.05) node[right,font=\footnotesize] {$\eta$};
  \draw[decorate,decoration={brace,amplitude=4pt},pgreen!60!black]
       (0.3,-0.5) -- (3.5,-0.5) node[midway,below,font=\footnotesize,yshift=-2mm] {carried};
  \draw[decorate,decoration={brace,amplitude=4pt},porange]
       (3.7,-0.5) -- (4.9,-0.5) node[midway,below,font=\footnotesize,yshift=-2mm] {at the corner};
\end{scope}
\end{tikzpicture}
\end{adjustbox}
\caption{The equilibrium, geometrically. Left: for one representation, the gross
potential falls as mass arrives, because the premium it chases is moved by the aggregate position
that arrival creates, while the redundancy charge rises; capital enters until the net
value reaches the shadow value $\eta$, and that intersection is the mass the
representation carries. Right: across representations, capital equalises the net value
on everything carried, and everything whose net value cannot reach $\eta$ sits at the
corner with no capital at all. Raising aggregate capacity lowers $\eta$ and moves the
frontier to the right, which is the extensive margin the framework predicts.}
\label{fig:equilibrium}
\end{figure}

\begin{theorem}[Existence]\label{thm:exist}
Let Assumptions~\ref{as:all}, \ref{as:adm}, \ref{as:rank} and \ref{as:types} hold, and suppose the certified-set correspondence
$\rho\mapsto\mathcal S_\epsilon(\rho)$ is nonempty, compact-valued and continuous
(upper and lower hemicontinuous) on $\mathcal X$. Then for every
$c_{sw},c_r,c_0\ge0$ a representation equilibrium exists, in the continuum case as
in the discrete case. In the discrete case with a nonatomic underlying population the
aggregate equilibrium can be purified so that individual portfolios use pure
representations.

Compactness supplies the fixed-point domain. Regularity of the resolvent
\eqref{eq:regular}, not monotonicity of impact, makes the clearing premium continuous
in the configuration, so the result covers supercritical parameter regions as long as
$-1$ is excluded. The certification hypothesis is substantive. Uniform slack of the
certification inequality on a neighbourhood of every feasible best response is a
convenient sufficient condition for continuity, but slack at one unknown equilibrium is
not a global existence assumption. At a binding certification boundary lower
hemicontinuity can fail; this argument then supplies no equilibrium and no existence
claim is made there.
\end{theorem}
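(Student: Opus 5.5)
The plan is to apply Kakutani--Fan--Glicksberg to the full causal selection map $\mathcal T(\rho)=\Psi_{\mathcal S_\epsilon(\rho)}(\Phi(\rho))$ of \eqref{eq:fullmap}. The domain is $\mathcal X=\prod_\vartheta\mathcal X_\vartheta$ with the weak-$*$ topology on finite positive measures on $\mathcal S$. Since $\mathcal S$ is compact by Assumption~\ref{as:adm}, each $\mathcal X_\vartheta$ is a nonempty, convex, weak-$*$ compact subset of the locally convex space of signed Radon measures, and it is metrisable. The finite product over $\mathcal T$, guaranteed by Assumption~\ref{as:types}, inherits these properties.

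\emph{Continuity of the value layer.} By Assumption~\ref{as:rank}, $S\mapsto M(S)$ is continuous and bounded on $\mathcal S$. Hence $\rho\mapsto\Ab(\rho)$ is weak-$*$ continuous, and its range $\mathcal A$ is compact. By \eqref{eq:regular}, $I+K\Ab$ is invertible on $\mathcal A$, and inversion is continuous, so $\|(I+K\Ab)^{-1}\|$ is uniformly bounded. It follows that $\Phi$ is continuous with compact range. No monotonicity of $K$ is used, which is why supercritical regions are covered. For fixed $C$, I will then show the objective $\mathcal V_C(\rho;\mu)$ is jointly continuous in $(\rho,\mu)$ and concave in $\rho$. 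The linear term is continuous because $\Delta(S;\mu)=\mu^\top M(S)\mu$ and $d(S_{0,\vartheta},\cdot)$ are continuous in $S$. The quadratic term is concave because the kernel $\omega(S,S')=\tfrac1m\langle P_S,P_{S'}\rangle_F$ is a Gram kernel, hence positive semidefinite, so $\iint\omega\,d\rhos\,d\rhos=\tfrac1m\|\int P_S\,d\rhos\|_F^2$ is convex and continuous in $\rhos$. This holds for all $c_r,c_{sw},c_0\ge0$, including zero.

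\emph{Nonempty, convex, closed values.} I will first write the constraint $\operatorname{supp}\rho_\vartheta\subseteq C$ as $\rho_\vartheta(\mathcal S\setminus C)=0$, which defines a convex set $\mathcal X(C)$. When $C$ is closed this set is weak-$*$ closed, because $\mathcal S\setminus C$ is open and $\rho\mapsto\rho(U)$ is lower semicontinuous for open $U$. The argmax of a continuous concave function over a nonempty compact convex set is then nonempty, convex, and compact. Next I need the constraint correspondence $\rho\mapsto\mathcal X(\mathcal S_\epsilon(\rho))$ to be continuous, so that Berge's maximum theorem makes $\mathcal T$ upper hemicontinuous. Upper hemicontinuity of the feasible set follows from upper hemicontinuity of $\mathcal S_\epsilon$ plus the portmanteau lim-sup inequality on closed sets. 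For lower hemicontinuity, given $\rho_k\to\rho$ and a feasible $\nu$ supported in $C=\mathcal S_\epsilon(\rho)$, I must build feasible $\nu_k\to\nu$ supported in $C_k=\mathcal S_\epsilon(\rho_k)$. I will approximate $\nu$ by finitely supported measures and use lower hemicontinuity of $\mathcal S_\epsilon$ to move each atom $S_i\in C$ to a nearby $S_i^k\in C_k$, keeping the masses fixed so the caps $M_\vartheta$ are preserved. This transport step is where the continuity hypothesis is genuinely used, and I expect it to be the main obstacle. It is also exactly where the argument breaks at a binding certification boundary, consistent with the caveat in the statement. Kakutani--Fan--Glicksberg then gives a fixed point $\rho^*\in\mathcal T(\rho^*)$.

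\emph{Recovering the equilibrium conditions.} Setting $\mu^*=\Phi(\rho^*)$ gives \eqref{eq:re1}, and feasibility gives \eqref{eq:re2}. For \eqref{eq:re3}, I will take the first-order (KKT) conditions of the concave program at $\rho^*$. The Gateaux derivative of $\mathcal V_C$ in direction $\delta_S$ for type $\vartheta$ is exactly $u_\vartheta(S;\mu^*,\rho^*)$, because the $\tfrac{c_r}{2}$ quadratic differentiates to $c_r r(S,\rhos)$. Using a Lagrange multiplier $\eta_\vartheta^*\ge0$ for each mass constraint, which exists by Slater's condition since the zero measure is strictly feasible, optimality yields the inequality on $\mathcal S_\epsilon(\rho^*)$, equality $\rho^*_\vartheta$-a.e., and complementary slackness. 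The discrete case is the finite-dimensional special case of the same argument. For purification with a nonatomic population, each type's finitely supported $\rho^*_\vartheta$ can be realised by partitioning the type's nonatomic mass into subsets of measures $\rho^*_\vartheta(\{S\})$, each using the pure representation $S$. Payoffs depend on others only through $\rhos$, so every portfolio holds a best response.
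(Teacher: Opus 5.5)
Your proof is correct, but it runs the fixed-point argument on a different map from the paper.

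\textbf{The paper's route.} The paper applies Fan--Glicksberg to the product of per-type best responses $BR_\vartheta(\rho)=\arg\max_{y\in\mathcal Y_\vartheta(\rho)}\int u_\vartheta(S;\Phi(\rho),\rho)\,dy(S)$. There the redundancy charge $c_r r(S,\rhos)$ is frozen at the current state $\rho$, so the objective is linear in $y$. Convex values are then automatic. Condition \eqref{eq:re3} follows because a capacity-capped linear program places mass only on maximisers of $u_\vartheta$, with $\eta^*_\vartheta=\max\{0,\sup u_\vartheta\}$.

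\textbf{Your route.} You take a fixed point of the full causal map $\mathcal T$ of \eqref{eq:fullmap} itself, with the congestion quadratic internalised in the potential. Convexity of $\mathcal T(\rho)$ then rests on the Gram (positive semidefinite) structure of $\omega$. You also need a Slater/KKT step in measure space to recover \eqref{eq:re3}. The two fixed-point sets coincide because the potential is concave, so first-order conditions are necessary and sufficient; this is the content of Theorem~\ref{thm:layer2}.

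\textbf{What each buys.} The paper's route is independent of the potential structure. It would survive an overlap kernel that is not positive semidefinite, that is, a non-potential population game, and it avoids infinite-dimensional multipliers. Your route proves the fixed-point notion of Definition~\ref{def:layers} directly rather than only Definition~\ref{def:re}. It also supplies an actual argument for a step the paper merely asserts: that continuity of $\rho\mapsto\mathcal S_\epsilon(\rho)$ transfers to the measure-valued feasible correspondence. For upper hemicontinuity you use the portmanteau inequality on closed neighbourhoods of $C$, combined with convergence of total mass on compact $\mathcal S$. For lower hemicontinuity you use dense finitely supported measures, transport their atoms along lower-hemicontinuous selections, and finish with a diagonal argument in the metrisable weak-$*$ topology.

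\textbf{One point to make explicit.} $\mathcal V_C$ in \eqref{eq:selectionpotential} integrates over $C$, and $C=\mathcal S_\epsilon(\rho)$ moves with $\rho$. On feasible measures $\mathcal V_C$ coincides with $\mathcal V_{\mathcal S}$, so the objective entering Berge's theorem is jointly continuous in the choice and the state. You should state this, since it is what justifies the Berge step.
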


Existence follows from a fixed-point argument on the population distribution and does not require large switching frictions. Switching frictions affect configuration determinacy and hysteresis, whereas convention multiplicity concerns stationary position paths at a fixed configuration.

\begin{corollary}[Two multiplicities, and how to tell them apart]\label{cor:hyst}
In the discrete case of Assumption~\ref{as:adm}, suppose every incumbent
$S_{0,\vartheta}$ remains certified and let $G$ bound the gain in
$\Delta-c_r r-c_0$ from changing representation. Since distinct representations are
separated by $d_{\min}>0$, for $c_{sw}\ge G/d_{\min}$ no type has a profitable switch
away from its incumbent. The configuration is therefore history dependent: different
incumbent allocations can support different settled configurations over the same
fundamentals. This is \emph{friction hysteresis}. It differs from convention
multiplicity in three ways. It disappears as $c_{sw}\downarrow0$; it is indexed by
incumbent initial conditions rather than an extrinsic process; and it concerns the
configuration, whereas convention multiplicity concerns stationary position paths at a
fixed configuration.

In the continuum case $d_{\min}=0$, so no finite switching cost freezes every nearby
rotation by this argument. The discrete high-friction result therefore does not extend
to a continuously rotatable representation space.
\end{corollary}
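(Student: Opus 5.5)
The plan is to verify that, in the discrete case with $c_{sw}\ge G/d_{\min}$, the incumbent configuration is itself a representation equilibrium in the sense of Definition~\ref{def:re}. Take the typed configuration $\rho^0$ that places each type's mass at its incumbent, $\rho^0_\vartheta=M_\vartheta\delta_{S_{0,\vartheta}}$, or a scaled-down mass if the incumbent's net value is negative. Set $\mu^0=\Phi(\rho^0)$, which is well defined by the regular resolvent \eqref{eq:regular}. Then check conditions \eqref{eq:re1}--\eqref{eq:re3} directly.

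Condition \eqref{eq:re1} holds by construction. Condition \eqref{eq:re2} is the hypothesis that every incumbent remains certified, read at $\rho^0$. For \eqref{eq:re3}, fix a type $\vartheta$ and any certified $S\neq S_{0,\vartheta}$. Since $c_{sw}d(S_{0,\vartheta},S_{0,\vartheta})=0$, we have
\[
u_\vartheta(S)-u_\vartheta(S_{0,\vartheta})=\bigl[(\Delta-c_r r-c_0)(S)-(\Delta-c_r r-c_0)(S_{0,\vartheta})\bigr]-c_{sw}d(S_{0,\vartheta},S).
\]
The bracket is at most $G$ by definition of $G$. In the discrete case $d(S_{0,\vartheta},S)\ge d_{\min}$, so the switching term is at least $c_{sw}d_{\min}\ge G$. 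Hence $u_\vartheta(S)\le u_\vartheta(S_{0,\vartheta})$.

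Now set $\eta_\vartheta=\max\{u_\vartheta(S_{0,\vartheta}),0\}$. If the incumbent's net value is nonnegative, the full endowment sits there and complementary slackness holds. If it is negative, deploy zero mass, which is also supported by $\eta_\vartheta=0$. In that case I would have to recheck that $r$ and $\mu$ change consistently, so I will treat the bound $G$ as uniform over configurations to absorb this.

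The key technical point is that $G$ must bound the gain uniformly over the configurations and premia under consideration, not just at one point. The payoff depends on $(\mu,\rho)$ through $\Delta$ and $r$, so I will interpret $G$ as
\[
G\ge\sup\bigl|(\Delta-c_r r-c_0)(S)-(\Delta-c_r r-c_0)(S')\bigr|
\]
over admissible $S,S'$ and feasible $\rho$. This supremum is finite because $\mathcal S$ is finite, $\|\mu^*\|$ is bounded on the compact $\mathcal X$ by continuity of $\Phi$, and $\omega\in[0,1]$. With $G$ taken this way, the inequality holds at whatever configuration is reached.

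History dependence then follows from a simple construction. Choose two incumbent assignments $\{S_{0,\vartheta}\}$ and $\{S'_{0,\vartheta}\}$ over the same fundamentals; each is a settled configuration by the argument above, and their aggregate marginals $\rhos$ differ whenever the incumbents differ. This is the main content of the corollary.

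The three contrasts with convention multiplicity are then read off. First, the threshold $G/d_{\min}$ is needed: as $c_{sw}\downarrow0$ the argument lapses. By Theorem~\ref{thm:exist} together with the strict-congestion uniqueness inside a cell (Theorem~\ref{thm:layerjoint}), the configuration no longer depends on incumbents. I will note this only as the mechanism vanishing, not as a claim that multiplicity is always absent. Second, the indexing is by the initial incumbent data, not by any extrinsic process. Third, the object is $\rho$, whereas convention multiplicity concerns the position paths $\{g_t\}$ at fixed $\rho$.

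For the continuum case, note that $d_{\min}=0$, since points arbitrarily close to $S_{0,\vartheta}$ are admissible. By Lipschitz continuity of $\Delta$ and $r$ in $S$, the gain from a small rotation is of first order in $d$, with slope possibly exceeding $c_{sw}$. So the bound $c_{sw}d\ge G$ cannot be enforced uniformly, and the argument does not extend.

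The main obstacle is making the uniform gain bound $G$ legitimate, specifically its independence from the configuration reached. I would handle it first via compactness of $\mathcal X$ and continuity of $\Phi$.
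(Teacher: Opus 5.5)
\textbf{Overall.} Your core step is exactly the paper's proof, and it is correct. Take $G$ to be a bound on the change in $\Delta-c_r r-c_0$ that is uniform over $\mathcal X$; it is finite by compactness and continuity of $\Phi$. Every switch then costs at least $c_{sw}d_{\min}\ge G$, so a certified incumbent is weakly best for its type. In the continuum case $d_{\min}=0$, which removes the argument.

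\textbf{The step that fails.} The paper stops at ``no switch can improve payoff, so the incumbent allocation is settled'' and never touches the capacity margin of \eqref{eq:re3}. You try to verify the full equilibrium conditions, and one step there does not work.

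Suppose $u_\vartheta(S_{0,\vartheta})<0$ at full deployment. Putting zero mass on the incumbent with $\eta_\vartheta=0$ need not be an equilibrium. Withdrawing that capacity lowers $r(S_{0,\vartheta},\rhos)$ by $M_\vartheta$ and moves $\Phi(\rho)$. The incumbent's net value at the new configuration can therefore be strictly positive, which violates $u_\vartheta\le\eta_\vartheta=0$.

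A uniform $G$ does not absorb this. It only controls comparisons across representations at a given configuration, not the mass level at the incumbent.

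\textbf{How to repair it.} Either drop the equilibrium claim and stop where the paper does, or add a fixed point on the intensive margin:
\begin{enumerate}
\item Apply Kakutani to incumbent masses $(m_\vartheta)\in\prod_\vartheta[0,M_\vartheta]$.
\item Let the best response be $M_\vartheta$, $0$, or $[0,M_\vartheta]$ according as $u_\vartheta(S_{0,\vartheta})$ is positive, negative, or zero at the configuration those masses induce. This is continuous by the regular resolvent.
\item At the fixed point set $\eta_\vartheta=\max\{u_\vartheta(S_{0,\vartheta}),0\}$.
\end{enumerate}
Your inequality $u_\vartheta(S)\le u_\vartheta(S_{0,\vartheta})\le\eta_\vartheta$ then delivers \eqref{eq:re3}. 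Condition \eqref{eq:re2} follows from certification of the incumbents at that configuration.

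\textbf{Smaller corrections.}
\begin{enumerate}
\item Distinct incumbent assignments need not give distinct aggregates. For example, two types with equal $M_\vartheta$ that swap incumbents leave $\rhos$ unchanged. State history dependence as an existence claim, by choosing assignments whose incumbent aggregates differ.
\item For $c_{sw}\downarrow0$, Theorem~\ref{thm:layerjoint} is not available in general, since it needs the small-gain condition inside a certification cell. The clean observation is that at $c_{sw}=0$ the label $S_{0,\vartheta}$ drops out of $u_\vartheta$. The equilibrium conditions, and hence the equilibrium set, then no longer depend on the incumbent assignment.
\item Your Lipschitz remark cuts the other way. If the gain is $L$-Lipschitz in $d$ uniformly over $\mathcal X$, then $c_{sw}\ge L$ freezes every rotation in the continuum case as well. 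Confine the continuum claim, as the paper's proof does, to the failure of the $G/d_{\min}$ argument.
\end{enumerate}
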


Theorem~\ref{thm:exist} delivers non-emptiness and not determinacy, and the two must be
kept apart. In the intermediate range of switching costs the equilibrium set is
generally smaller than the whole admissible set and is not characterised here; its
measured counterpart is the churn rate of the population. Indeterminacy in the
information layer is a second phenomenon alongside the conventions of
Section~\ref{sec:frame}, with a different diagnostic, and
Corollary~\ref{cor:hyst} gives the tests that separate them.

\subsection{Transitory and persistent position specifications}
\label{sub:price}

The value layer used a relation between the premium and the level of aggregate
position, and the dichotomy below uses a relation between returns and the change in
that position. They are not two mechanisms. Both follow from a single price equation,
and it settles what $F$ is.

\begin{definition}[Positions, trades, and the price]\label{def:price}
$F_t\in\R^{n}$ is the aggregate \emph{position} of the conditioning population,
holdings and not order flow; the trade over a period is the change $F_{t+1}-F_t$. The
price carries a concession proportional to the inventory the rest of the market must
absorb,
\begin{equation}
p_t=v_t+\Imp(F_t),
\label{eq:price}
\end{equation}
with $v_t$ the fundamental value and $\Imp$ the impact map of
Assumption~\ref{as:all}. The excess return over a period is the price change plus the
fundamental payoff.
\end{definition}

Terminology is fixed as follows. \emph{Position} denotes the standing
holding $F_t$; \emph{trade} is its change $F_{t+1}-F_t$; \emph{demand} is the map from
the premium to the desired position, $\mu\mapsto\Ab\mu$; and \emph{order flow}, the signed volume that a microstructure dataset records, is the trade as executed and appears explicitly in Section~\ref{sec:num}; it supplies the empirical interpretation of Proposition~\ref{prop:timing}. Impact acts
on the position in \eqref{eq:price} and therefore on the trade in returns, which are
price differences.

Equation~\eqref{eq:price} is an inventory statement in the sense of
\citet{HoStoll1981} and \citet{GrossmanMiller1988}: a standing long position of the
crowd sits above fundamental value by the compensation the absorbing side requires.
Two readings follow, and which one applies is a statement about the expected life of
the position rather than about the market mechanism.

\emph{Transitory positions.} If the position taken at $t$ is expected to be unwound
within the period, $\E_t[F_{t+1}]=0$, then the expected excess return from holding is
$\mu_t=\mu_0-\Imp(F_t)$, which with $F_t=\Ab\mu_t$ is the clearing relation
\eqref{eq:res} and its resolvent. This is the reading of Section~\ref{sec:frame}: the
premium is what remains after the crowd's own demand has bid the price away, and the
fundamental position is $\Ab\mu^{*}$.

\emph{Persistent positions.} If the position is expected to persist, the return
carried by \eqref{eq:price} between two dates is
\begin{equation}
r_{t+1}=\mu_0+\varepsilon_{t+1}+K\,(F_{t+1}-F_t),
\label{eq:returns}
\end{equation}
in the linear case, because a return is a price difference and the difference of
\eqref{eq:price} is the trade. This is the reading of the dichotomy, and its
fundamental position is $\Ab\mu_0$ rather than $\Ab\mu^{*}$: with the position expected
to persist there is no unwinding to price in, so the premium is not discounted at the
fundamental solution.

The two fundamental positions differ by the discount $\Ab(\mu_0-\mu^{*})$ carried by
the transitory specification. Each result below states which specification is used. Theorems~\ref{thm:clearing}
and~\ref{thm:nl} and everything built on them use the transitory reading;
Theorem~\ref{thm:dichotomy}, Proposition~\ref{prop:ceiling} and
Proposition~\ref{prop:estab} use the persistent one, since a convention is by
construction a position the market expects to be there tomorrow. Endogenising the expected lifetime of a position would nest both specifications but is
outside the model.

\subsection{The two filtrations}
\label{sub:filtrations}

Three information sets appear and confusing them would let agents trade on shocks that
have not happened.

\begin{definition}[Decision, observation and certification sets]\label{def:filt}
The \emph{decision} filtration is
$\mathcal{G}_t=\sigma\bigl(z_s,\ F_{s-1},\ f_s:\ s\le t\bigr)$: drivers realised up to
$t$, positions carried into $t$, and the extrinsic state $f_t$ of
\eqref{eq:conv} where a convention is being played. Portfolios choose $F_t$ measurably
with respect to $\mathcal{G}_t$, so no innovation dated $t+1$ enters any decision.
The \emph{return} $r_{t+1}$ is realised on $\mathcal{G}_{t+1}$ and carries the
innovations $\varepsilon_{t+1}$ and, under a convention, $\xi_{i,t+1}$. The
\emph{certification} set used by the convention test of Theorem~\ref{thm:dichotomy}(iii) is
$\sigma(f_t,\xi_{i,t+1})$, both terms dated no later than the return they are used to
explain.
\end{definition}

The certification test is a statement about the econometrician and not about the
agents. It regresses $r_{t+1}$ on variables measurable at $t+1$, of which
$\xi_{i,t+1}$ is one, and asks whether the cross-sectional residual is independent.
That is the standard timing of a contemporaneous conditional independence test and
gives the agent nothing: $F_t$ was chosen on $\mathcal{G}_t$, which contains $f_t$ and
not $\xi_{i,t+1}$. Conditioning on the whole path of $z_i$ would be inappropriate here because that
path includes dates beyond $t+1$. The proof and the empirical test therefore use the
information set in Definition~\ref{def:filt}.

\subsection{The convention dichotomy}

\begin{lemma}[Position recursion]\label{lem:rec}
Let returns follow \eqref{eq:returns} and let aggregate demand be mean variance
against conditional expected returns, so that $F_t=\Ab\,\E_t[r_{t+1}]$ with $\Ab$
the capacity operator of Theorem~\ref{thm:clearing}, which reduces to
$\Ab=\cb Q^{-1}$ when the configuration is homogeneous. Then the fundamental
solution is $\bar F=\Ab\mu_0$, and the deviation $g_t:=F_t-\bar F$ satisfies
\begin{equation}
(I+A)\,g_t=A\,\E_t[g_{t+1}],
\qquad
A:=\Ab K\;\;\bigl(=\cb\,Q^{-1}K\ \text{when homogeneous}\bigr),
\label{eq:recraw}
\end{equation}
which, whenever $-1\notin\mathrm{spec}(A)$, is equivalent to
\begin{equation}
g_t=(I+A)^{-1}A\;\E_t[g_{t+1}].
\label{eq:rec}
\end{equation}
\end{lemma}

\begin{proof}
Taking conditional expectations in \eqref{eq:returns},
$\E_t[r_{t+1}]=\mu_0+K\,\E_t[F_{t+1}-F_t]$, so
\begin{equation*}
F_t=\Ab\mu_0+\Ab K\,\E_t[F_{t+1}-F_t]
=\bar F+A\bigl(\E_t[g_{t+1}]-g_t\bigr),
\end{equation*}
since $\E_t[F_{t+1}-F_t]=\E_t[g_{t+1}]-g_t$. Rearranging gives
\eqref{eq:recraw}.
\end{proof}

Theorem~\ref{thm:dichotomy} conditions on the local demand-elasticity operator $\Ab$
at a settled configuration. In the homogeneous parameterisation $\Ab=\cb Q^{-1}$, $Q$
is the fundamental conditional covariance used to calibrate local demand; it is not the
ex post covariance after a convention adds an extrinsic innovation. Consequently,
$\sigma_\xi$ does not alter the threshold in the fixed-capacity theorem.
Proposition~\ref{prop:ceiling} subsequently allows part of risk-bearing capacity to
respond to convention-induced variance, making both capacity and the threshold
amplitude dependent. The reflexive closure serves a separate purpose by governing
causal certification of realised returns; it does not reveal $\xi_{i,t+1}$ before the
portfolio decision. The admissible solution space is covariance-stationary with
$\sup_t\E\|g_t\|^2<\infty$, which includes the Gaussian autoregressions constructed
below.

\begin{assumption}[Spectral regularity]\label{as:spec}
$A$ is diagonalisable, $-1\notin\mathrm{spec}(A)$, and the eigenvalue attaining
$\min\operatorname{Re}\mathrm{spec}(A)$ is simple. Defective operators and the
resonance $\lambda=-1$ are excluded and discussed after the theorem.
\end{assumption}

\begin{condition}[Certifiable direction]\label{cond:cert}
The extremal eigenpair $(\lambda^*,u)$ used to construct a convention is real and
$Ku$ loads at least two assets. Complex conjugate extremal pairs are treated on their
real invariant plane in Corollary~\ref{cor:three}(a).
\end{condition}

\begin{theorem}[Three regimes]\label{thm:dichotomy}
Fix a settled representation equilibrium of Definition~\ref{def:re} and its local
capacity operator $\Ab$. Under Assumptions~\ref{as:all} and \ref{as:spec}, write
$A=\Ab K$, $\tau=-\min\operatorname{Re}\mathrm{spec}(A)$, and partition by its value.

\emph{(i) Strictly subcritical, $\tau<\tfrac12$.} Every eigenvalue has
$\operatorname{Re}\lambda>-\tfrac12$, so $\rho(W)<1$ and the unique square-integrable
stationary equilibrium is the fundamental one; no convention self sustains. This part
does not use diagonalisability, since it runs on matrix powers.

\emph{(ii) Critical, $\tau=\tfrac12$.} The extremal eigenvalue maps to the unit circle,
$|\varphi(\lambda)|=1$, and neither the contraction argument nor the construction of
(iii) applies. The stationary set is nonetheless characterised: along the critical
direction a covariance-stationary solution must have zero conditional variance, so the
equilibrium set is the deterministic family $g_t=\varphi(\lambda)^{-t}g_0u$ with $g_0$
an arbitrary random initial amplitude. At $\lambda=-\tfrac12$ these are the alternating
paths $g_t=(-1)^{t}g_0u$. Multiplicity is present at the boundary but degenerate:
extrinsic randomness enters through the initial draw and never again, so no convention
in the sense of (iii) exists there.

\emph{(iii) Supercritical, $\tau>\tfrac12$.} If Condition~\ref{cond:cert} holds, then with
$\varpi:=\varphi(\lambda^{*})$ and any extrinsic process $z_i$,
\begin{equation}
F_t=\Ab\mu_0+f_t\,u,\qquad
f_{t+1}=\varpi^{-1}f_t+\sigma_\xi\,\xi_{i,t+1},
\label{eq:conv}
\end{equation}
is a square-integrable stationary equilibrium, certified by $\{z_i\}$ and rejected by
$\{z_j\}$ for $j\neq i$. The scale $\sigma_\xi>0$ is free and, \eqref{eq:rec} being
linear, finite mixtures are again equilibria, so the equilibrium set is a linear space
of dimension at least the number of supercritical directions satisfying the same two
conditions.
\end{theorem}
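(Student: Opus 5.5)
Throughout I write $W:=(I+A)^{-1}A$ and $\varphi(\lambda):=\lambda/(1+\lambda)$. By Lemma~\ref{lem:rec} and $-1\notin\mathrm{spec}(A)$, the recursion is $g_t=W\,\E_t[g_{t+1}]$, and $\mathrm{spec}(W)=\varphi(\mathrm{spec}(A))$. The whole classification should then reduce to one elementary fact:
\[
|\varphi(\lambda)|<1\iff|\lambda|<|1+\lambda|\iff\operatorname{Re}\lambda>-\tfrac12 ,
\]
with equality on the line $\operatorname{Re}\lambda=-\tfrac12$. So $\tau<\tfrac12$ gives $\rho(W)<1$, $\tau=\tfrac12$ puts the extremal image on the unit circle, and $\tau>\tfrac12$ puts it outside.

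For (i) I would iterate the recursion using the tower property, $g_t=W^k\E_t[g_{t+k}]$. Conditional Jensen and the stationarity bound $\sup_t\E\|g_t\|^2<\infty$ then give
\[
\E\|g_t\|^2\le\|W^k\|^2\sup_s\E\|g_s\|^2 .
\]
Since $\rho(W)<1$, Gelfand's formula gives $\|W^k\|\to0$, so $g_t=0$ almost surely. This uses only matrix powers, not diagonalisability, as the statement claims.

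For (ii) and (iii) I would use Assumption~\ref{as:spec} to project onto eigen-directions. Take a left eigenvector $\ell_\lambda$ of $A$, which is also a left eigenvector of $W$. The scalar $h_t:=\ell_\lambda^{*}g_t$ then satisfies $h_t=\varphi(\lambda)\,\E_t[h_{t+1}]$. Directions with $\operatorname{Re}\lambda>-\tfrac12$ are killed by the argument of (i), so only the extremal direction survives.

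In the critical case, write $h_{t+1}=\varphi^{-1}h_t+\eta_{t+1}$ with $\E_t\eta_{t+1}=0$. Because $|\varphi|=1$ and $\eta_{t+1}$ is orthogonal to $\mathcal G_t$, this gives
\[
\E|h_{t+1}|^2=\E|h_t|^2+\E|\eta_{t+1}|^2 .
\]
Covariance stationarity forces $\eta\equiv0$, hence $h_t=\varphi^{-t}h_0$. Reassembling gives $g_t=\varphi(\lambda)^{-t}g_0u$, and the case $\lambda=-\tfrac12$ has $\varphi=-1$, which yields the alternating paths. The amplitude $g_0$ is $\mathcal G_0$-measurable but otherwise arbitrary.

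For (iii), Condition~\ref{cond:cert} makes $\lambda^*$ real with $\lambda^*<-\tfrac12$. Then $\varpi=\varphi(\lambda^*)$ is real with $|\varpi|>1$: it lies below $-1$ when $\lambda^*\in(-1,-\tfrac12)$ and above $1$ when $\lambda^*<-1$. Hence the AR(1) in \eqref{eq:conv} has coefficient of modulus below one and is square-integrable and stationary, because $\xi_{i,t+1}$ is independent of $\mathcal G_t$. The fixed-point check is direct: $\E_t[g_{t+1}]=\varpi^{-1}f_tu$, so
\[
W\,\E_t[g_{t+1}]=\varphi(\lambda^*)\varpi^{-1}f_tu=f_tu=g_t .
\]
Measurability holds because $f_t\in\mathcal G_t$ by Definition~\ref{def:filt}.

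For certification, substitute into \eqref{eq:returns}:
\[
r_{t+1}=\mu_0+\varepsilon_{t+1}+Ku\bigl((\varpi^{-1}-1)f_t+\sigma_\xi\xi_{i,t+1}\bigr).
\]
Conditioning on $\sigma(f_t,\xi_{i,t+1})$ removes the last term and leaves the residual $\varepsilon_{t+1}$, whose cross-sectional dependence is the one that certification under \eqref{eq:qsplit} already handles. Conditioning on $z_j$ with $j\neq i$ does not remove the common factor $\sigma_\xi\xi_{i,t+1}Ku$. Because $Ku$ loads at least two assets, this factor induces a nonzero residual covariance between those assets, so $z_j$ is rejected.

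Linearity of \eqref{eq:rec} makes finite sums of such solutions again solutions, and $\sigma_\xi$ is free. Solutions built on linearly independent supercritical eigenvectors are linearly independent, which gives the dimension lower bound.

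The step I expect to be the main obstacle is (ii). Two things need care there. First, orthogonality of the martingale increment must be justified in the complex-eigenvalue case, using $\E|h|^2$ with conjugates. Second, the full $g$, not just its projection, has to be shown to lie in the stated family. The latter requires simplicity of the extremal eigenvalue together with the subcritical argument applied on the complementary invariant subspace, which is where diagonalisability is used.
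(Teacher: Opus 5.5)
Your proposal is correct and follows essentially the same route as the paper. The shared steps are the spectral translation by $\varphi$; Gelfand's formula combined with the $L^2$-contraction of conditional expectation for (i); the orthogonal-increment (variance) decomposition forcing zero conditional variance in (ii); and, for (iii), the AR(1) construction along $u$, the fixed-point check via $Wu=\varpi u$, and certification by conditioning on $\sigma(f_t,\xi_{i,t+1})$. Your left-eigenvector projection in (ii), together with the explicit elimination of the non-extremal directions, makes rigorous a step that the paper's proof leaves implicit when it asserts the critical family is ``this one-parameter family and nothing else.''
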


The boundary is the dimensionless condition
\begin{align}
\tau&=-\min\operatorname{Re}\mathrm{spec}(\Ab K),\notag\\
\tau&=\cb\,\bigl[-\min\operatorname{Re}\mathrm{spec}(Q^{-1}K)\bigr]
\quad\text{in the homogeneous case},
\qquad \tau\gtrless\tfrac12 .
\label{eq:tau}
\end{align}

The statistic $\tau$ classifies the stationary position-path regime but is not, by
itself, an existence condition for a certifiable convention. The construction additionally
requires the spectral regularity in Assumption~\ref{as:spec} and the loading condition in
Condition~\ref{cond:cert}. Complex extremal pairs are handled on their real invariant plane
by Corollary~\ref{cor:three}(a). With a nontrivial Jordan block, the subcritical uniqueness
argument remains valid, whereas the eigenvector construction in part (iii) need not.

The statistic is dimensionless:
$K$ carries units of return per unit of position, $\Ab$ of position per unit of return,
since demand is a risk tolerance times an inverse covariance applied to a
premium, so their product is a pure number and the comparison with $\tfrac12$ is
scale free. It is also invariant to the units in which positions are measured, shares
or notional, because $\Ab$ and $K$ scale inversely under that change.

Figure~\ref{fig:phimap} shows the mechanism, a spectral translation: the map
$\varphi(\lambda)=\lambda/(1+\lambda)$ carries the spectrum of $A$ onto that of
$W:=(I+A)^{-1}A$, and
\begin{equation}
|\varphi(\lambda)|<1
\iff
|\lambda|^2<|1+\lambda|^2
\iff
1+2\operatorname{Re}\lambda>0,
\label{eq:phi}
\end{equation}
so the half plane $\operatorname{Re}\lambda>-\tfrac12$ is exactly the contraction
region of the equilibrium map. Uniqueness then follows without assuming $A$
diagonalisable, and past the boundary the reflected eigenvalue leaves the unit
disc and a stationary sunspot appears. What makes the constructed object an
equilibrium rather than a candidate is that optimality closes exactly: since
$\varpi^{-1}-1=1/\lambda^{*}$, the position deviation agents choose in response to the
assumed deviation is the assumed deviation.

\begin{figure}[t]
\centering
\begin{adjustbox}{max width=\linewidth}
\begin{tikzpicture}[x=1cm,y=1cm,font=\small]
\begin{scope}
\fill[pgreen!12] (-1.3,-2.0) rectangle (2.6,2.0);
\fill[porange!14] (-3.4,-2.0) rectangle (-1.3,2.0);
\draw[-{Latex}] (-3.6,0) -- (2.8,0) node[right] {$\operatorname{Re}\lambda$};
\draw[-{Latex}] (0,-2.2) -- (0,2.2) node[above] {$\operatorname{Im}\lambda$};
\draw[thick,dashed] (-1.3,-2.0) -- (-1.3,2.0);
\node[below] at (-1.3,-2.05) {$-\tfrac12$};
\node[font=\footnotesize,text=pgreen!60!black,align=center] at (1.1,1.6) {$\rho(W)<1$\\unique};
\node[font=\footnotesize,text=porange!85!black,align=center] at (-2.35,1.6)
  {reflected outside\\the disc};
\foreach \p in {(0.9,0.7),(1.8,-0.5),(0.3,-1.2)} {\fill[pgreen!60!black] \p circle (2pt);}
\fill[porange] (-2.0,0) circle (2.4pt);
\node[font=\footnotesize] at (0.0,-2.6) {domain: spectrum of $A$};
\end{scope}
\begin{scope}[xshift=76mm]
\draw[fill=pgreen!12,draw=pgreen!60!black,thick] (0,0) circle (1.7);
\draw[-{Latex}] (-2.7,0) -- (2.9,0) node[right] {$\operatorname{Re}$};
\draw[-{Latex}] (0,-2.2) -- (0,2.2) node[above] {$\operatorname{Im}$};
\node[font=\footnotesize,text=pgreen!60!black] at (0.75,0.75) {unit disc};
\foreach \p in {(0.55,0.42),(0.95,-0.3),(0.2,-0.7)} {\fill[pgreen!60!black] \p circle (2pt);}
\fill[porange] (2.35,0) circle (2.4pt);
\node[font=\footnotesize,text=porange,above=1mm] at (2.35,0) {$\varpi$};
\node[font=\footnotesize] at (0.1,-2.6) {image: spectrum of $W=\varphi(A)$};
\end{scope}
\draw[-{Latex},very thick,pgrey] (3.1,0.4) .. controls (4.2,1.3) and (5.4,1.3) ..
  node[above,midway,font=\footnotesize,text=black] {$\varphi(\lambda)=\dfrac{\lambda}{1+\lambda}$} (6.5,0.4);
\end{tikzpicture}
\end{adjustbox}
\caption{The spectral translation, geometrically. The map $\varphi$ sends the half
plane $\operatorname{Re}\lambda>-\tfrac12$ into the open unit disc and its complement
outside, because $|\varphi(\lambda)|<1$ is equivalent to
$1+2\operatorname{Re}\lambda>0$. Uniqueness is contraction of $W$ on the right, the
threshold is the vertical line on the left, and a destabilizing eigenvalue is the
orange point, whose image leaves the disc and becomes the inverse autoregressive root
of a stationary convention.}
\label{fig:phimap}
\end{figure}

Three regimes must be separated rather than two, and Figure~\ref{fig:regimesformal}
places them on the spectrum. If $K$ is monotone then $A$ is similar to
$\cb\,Q^{-1/2}KQ^{-1/2}$, whose numerical range lies in the closed right half
plane, so $\operatorname{Re}\lambda\ge0$ and part (i) applies. But $K_s$ may be
indefinite while $\min_\lambda\operatorname{Re}\lambda(A)>-\tfrac12$, and
uniqueness still holds, so a market can carry a destabilizing direction without
supporting any convention. The exact boundary is spectral, monotonicity locates a
sufficient interior of it, and the width of the middle regime is estimable rather
than assumed.

\begin{figure}[t]
\centering
\begin{adjustbox}{max width=\linewidth}
\begin{tikzpicture}[x=1cm,y=1cm,font=\small]
\fill[porange!18] (-6.6,-0.28) rectangle (-3.2,0.28);
\fill[pgreen!18] (-3.2,-0.28) rectangle (0,0.28);
\fill[pblue!18] (0,-0.28) rectangle (2.4,0.28);
\draw[-{Latex},thick] (-6.8,0) -- (2.6,0) node[right] {$\operatorname{Re}\lambda$};
\draw[thick] (-3.2,0.28) -- (-3.2,-0.28) node[below] {$-\tfrac12$};
\draw[thick] (0,0.28) -- (0,-0.28) node[below] {$0$};
\node[align=center,font=\footnotesize,text=porange!80!black] at (-4.9,0.85)
  {supercritical:\\continuum of conventions};
\node[align=center,font=\footnotesize,text=pgreen!60!black] at (-1.6,0.85)
  {subcritical:\\unique};
\node[align=center,font=\footnotesize,text=pblue!80!black] at (1.2,0.85)
  {monotone impact:\\unique};
\node[align=center,font=\footnotesize] at (-1.6,-1.05)
  {$K_s$ may be indefinite here, and \eqref{eq:tau} still fails};
\end{tikzpicture}
\end{adjustbox}
\caption{The dichotomy on the spectrum of the reduced operator. The blue region
is where monotone impact places the spectrum; the green region is indefinite
impact with feedback too weak to matter; the orange region is where identical
fundamentals support a continuum of conventions.}
\label{fig:regimesformal}
\end{figure}

\begin{proposition}[Endogenous capacity and the convention ceiling]
\label{prop:ceiling}
Let a share $\phi\in[0,1)$ of risk-bearing capacity keep the baseline local
elasticity calibrated to $Q$, while the remaining share updates its covariance to the
additional risk generated by the convention. Under the convention \eqref{eq:conv} of amplitude
$\sigma_f^{2}=\sigma_\xi^{2}/(1-\varpi^{-2})$, the covariance faced by that
capital is $\Sigma(\sigma_f)=Q+\sigma_f^{2}\,(Ku)(Ku)^{\top}$, the aggregate
capacity operator is
\begin{equation}
\Ab(\sigma_f)=\cb\bigl[\phi\,Q^{-1}+(1-\phi)\,\Sigma(\sigma_f)^{-1}\bigr],
\label{eq:capacity}
\end{equation}
and the statistic
$\tau(\sigma_f):=-\min\operatorname{Re}\mathrm{spec}\bigl(\Ab(\sigma_f)K\bigr)$
is continuous, with $\tau(0)$ equal to the statistic \eqref{eq:tau} of the
homogeneous case, and
\begin{equation}
\lim_{\sigma_f\to\infty}\tau(\sigma_f)
=-\min\operatorname{Re}\mathrm{spec}\bigl(\Ab_\infty K\bigr),
\qquad
\Ab_\infty=\cb\bigl[\phi Q^{-1}+(1-\phi)Q^{-1}\Pi_{u}^{\perp}\bigr],
\label{eq:limitcap}
\end{equation}
where $\Pi_u^{\perp}$ is the $Q$ orthogonal projection off the direction $Ku$.
If $\tau(0)>\tfrac12$ and $\lim_{\sigma_f\to\infty}\tau(\sigma_f)<\tfrac12$, then
the set of self-confirming amplitudes is bounded. Since $\tau$ need not be monotone in
the amplitude, the ceiling is defined by a tail condition and not by a first crossing,
\begin{equation}
\bar\sigma:=\inf\Bigl\{\sigma\ge0:\ \sup_{s\ge\sigma}\tau(s)\le\tfrac12\Bigr\}<\infty ,
\label{eq:ceilingdef}
\end{equation}
and no convention of amplitude above $\bar\sigma$ is an equilibrium. Below
$\bar\sigma$ the sustainable set need not be an interval: where $\tau$ oscillates
across the boundary the sustainable amplitudes form a union of bands, and $\bar\sigma$
bounds their union. If $\tau$ is in addition nonincreasing, the first crossing and
\eqref{eq:ceilingdef} coincide and the sustainable set is $(0,\bar\sigma]$. \end{proposition}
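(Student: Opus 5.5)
Three claims need proof: continuity of $\tau$ with the stated endpoints, the limit formula \eqref{eq:limitcap}, and boundedness of the self-confirming amplitudes together with the structure of that set.

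\emph{Continuity.} Write $v:=Ku$ and $s:=\sigma_f^2$. By Sherman--Morrison,
\[
\Sigma(\sigma_f)^{-1}=Q^{-1}-\frac{s\,Q^{-1}vv^{\top}Q^{-1}}{1+s\,v^{\top}Q^{-1}v}.
\]
This is a rational function of $s$ with no poles on $s\ge0$, because $Q\succ0$ gives $v^{\top}Q^{-1}v\ge0$. Hence $\Ab(\sigma_f)K$ depends continuously on $\sigma_f$. Eigenvalues of a matrix depend continuously on its entries as an unordered multiset (roots of the characteristic polynomial). Since $\min\operatorname{Re}$ over that multiset is a continuous function of it, $\tau$ is continuous. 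At $\sigma_f=0$ we have $\Sigma=Q$, so $\Ab(0)=\cb Q^{-1}$ and $\tau(0)$ is the homogeneous statistic \eqref{eq:tau}.

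\emph{The limit.} If $v=0$, the claim is trivial: $\Sigma=Q$ for all amplitudes and $\Pi_u^\perp=I$. Otherwise, as $s\to\infty$ the correction term converges to $Q^{-1}vv^{\top}Q^{-1}/(v^{\top}Q^{-1}v)$. The limit of $\Sigma^{-1}$ is therefore
\[
Q^{-1}\Bigl(I-\frac{vv^{\top}Q^{-1}}{v^{\top}Q^{-1}v}\Bigr)=Q^{-1}\Pi_u^{\perp},
\]
which identifies $\Pi_u^\perp$ as the $Q^{-1}$-weighted projection off $v$. This is the sense in which ``$Q$ orthogonal'' must be read, and I would state that convention explicitly. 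The limit $\Ab_\infty$ follows, and continuity of the spectrum passes the limit to $\tau$.

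\emph{Boundedness and the ceiling.} A convention of amplitude $\sigma_f$ is self-confirming only if Theorem~\ref{thm:dichotomy}(iii) applies at the capacity operator that amplitude induces. That requires $\tau(\sigma_f)>\tfrac12$. At $\tau\le\tfrac12$, parts (i) and (ii) show that no convention with $\sigma_\xi>0$ exists: (i) gives uniqueness, and (ii) gives only innovation-free paths.

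Because the limit of $\tau$ is below $\tfrac12$, there is $\sigma_1$ with $\tau(s)<\tfrac12$ for all $s\ge\sigma_1$. So the set in \eqref{eq:ceilingdef} is nonempty and $\bar\sigma\le\sigma_1<\infty$. By definition of the infimum and continuity of $\tau$, $\tau(s)\le\tfrac12$ for every $s>\bar\sigma$, so no amplitude above $\bar\sigma$ is sustainable.

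The sustainable set is contained in the open set $\{\tau>\tfrac12\}\cap(0,\bar\sigma]$. An open subset of the line is a countable union of intervals, which gives the union-of-bands statement. If $\tau$ is nonincreasing, that set is an initial interval, and the first crossing equals $\bar\sigma$. The closed endpoint $(0,\bar\sigma]$ needs care: at $\bar\sigma$ itself $\tau=\tfrac12$, which is critical, so I would note that the endpoint is included only as a limit.

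\emph{Main obstacle.} The real subtlety is the self-consistency loop. The amplitude determines $\Ab$, which determines $\lambda^*$ and hence $\varpi$, and $\varpi$ in turn ties $\sigma_f$ to $\sigma_\xi$. One must also check that Assumption~\ref{as:spec} and Condition~\ref{cond:cert} continue to hold at the amplitude-dependent operator.

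I would handle this by treating $\sigma_f$ as the primitive. Given $\sigma_f$ with $\tau(\sigma_f)>\tfrac12$ and regularity at $\Ab(\sigma_f)$, Theorem~\ref{thm:dichotomy}(iii) yields a convention with $\varpi$ real and $|\varpi|>1$. Setting $\sigma_\xi^2=\sigma_f^2(1-\varpi^{-2})>0$ then produces that amplitude.

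Strictly, this also requires the convention direction at $\Ab(\sigma_f)$ to be the same $u$ that defines $\Sigma$. I would impose this as a fixed-point requirement on the eigendirection. Only the necessity direction, ``sustainable $\Rightarrow\tau>\tfrac12$'', is needed for the ceiling, and it holds regardless of that requirement.
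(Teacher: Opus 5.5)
Your proposal is correct and follows essentially the same route as the paper. Both use Sherman--Morrison for $\Sigma(\sigma_f)^{-1}$ and its limit $Q^{-1}\Pi_u^{\perp}$, continuity of the spectrum for $\tau$, and the upward-closed tail set whose infimum $\bar\sigma$ is finite, with parts (i)--(ii) of Theorem~\ref{thm:dichotomy} excluding every amplitude at which $\tau\le\tfrac12$.

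Your additional remarks are points the paper's proof passes over, not gaps in your argument. First, $\Pi_u^{\perp}$ is the $Q^{-1}$-weighted projection off $Ku$. Second, $\tau(\bar\sigma)=\tfrac12$ is critical, so the monotone-case set is really $(0,\bar\sigma)$ rather than $(0,\bar\sigma]$. Third, sufficiency requires the extremal eigendirection of $\Ab(\sigma_f)K$ to coincide with the $u$ defining $\Sigma$.
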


Proposition~\ref{prop:ceiling} relaxes the fixed-capacity assumption of
Theorem~\ref{thm:dichotomy}. Once part of risk-bearing capacity updates to the
additional variance created by the convention, that capacity falls as the convention
grows and the threshold becomes amplitude dependent. Under the tail condition of the
proposition, the sustainable amplitudes are bounded above by
\eqref{eq:ceilingdef}; no monotonicity of $\tau(\sigma_f)$ is required. If
$\phi=1$, no capacity updates to convention-induced variance and the fixed-capacity
free scale of Theorem~\ref{thm:dichotomy}(iii) returns.

\begin{remark}[A survival-probability interpolation]\label{rem:survival}
Suppose a position survives to the next date with conditional probability parameter
$p\in[0,1]$, so that $\E_t[F_{t+1}]=pF_t$. In the linear model the required premium is
$\mu_p=\mu_0-(1-p)KF$. With $F=\Ab\mu_p$,
\begin{equation}
\mu_p=\bigl[I+(1-p)K\Ab\bigr]^{-1}\mu_0 .
\label{eq:survival}
\end{equation}
The transitory specification of Theorems~\ref{thm:clearing}--\ref{thm:nl} is recovered
at $p=0$. As $p\uparrow1$, $\mu_p\to\mu_0$ and the fundamental position converges to
$\Ab\mu_0$, which is the persistent-position specification used in
Theorem~\ref{thm:dichotomy}. The two price equations are therefore limiting cases of a
single reduced-form persistence parameter; the model does not endogenise $p$.
\end{remark}

\subsection{The no-manipulation class and the uniqueness region}
\label{sub:nomanip}

The no-dynamic-arbitrage restrictions commonly imposed on permanent impact place the
reduced spectrum in the closed right half-plane. They therefore define a sufficient
subset of the subcritical region, although the converse does not hold: an indefinite
aggregate response can remain subcritical when its negative real part is too small to
cross the spectral boundary.

\begin{corollary}[Rotation, no manipulation, state dependence]\label{cor:three}
{\normalfont(a)} If $A$ has a complex pair with $\operatorname{Re}\lambda<-\tfrac12$
and eigenvector with independent real and imaginary parts, a stationary self
confirming convention exists on the real invariant plane they span, driven by a
two dimensional autoregression, and \eqref{eq:tau} is unchanged.
{\normalfont(b)} If the impact operator lies in the no-dynamic-arbitrage class,
$A$ has spectrum in the closed right half plane at every capacity and no
convention exists. {\normalfont(c)} If the conditional covariance is state
dependent, $\tau$ is a field over states, so the same market can be subcritical
in calm regimes and supercritical in stressed ones.
\end{corollary}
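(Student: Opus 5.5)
I will treat the three parts separately, each as a consequence of the spectral translation \eqref{eq:phi} applied to the position recursion \eqref{eq:rec} of Lemma~\ref{lem:rec}.

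\emph{Part (a).} Let $\lambda=a+ib$ with $a<-\tfrac12$, $b\neq0$, and eigenvector $u=u_R+iu_I$ with $u_R,u_I$ linearly independent. The plane $P=\spn\{u_R,u_I\}$ is real and invariant under $A$, hence under $W=(I+A)^{-1}A$; regularity holds because $-1\notin\mathrm{spec}(A)$. On $P$, $W$ acts in the basis $(u_R,u_I)$ as the real $2\times2$ rotation--dilation $R$ representing $\varpi=\varphi(\lambda)$, and $|\varpi|>1$ by \eqref{eq:phi}. I would take the ansatz $g_t=U h_t$ with $U=[u_R\ u_I]$ and
\[
h_{t+1}=R^{-1}h_t+\Sigma_\xi^{1/2}\xi_{t+1},
\]
with $\xi$ extrinsic i.i.d.\ noise in $\R^2$. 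Then $\E_t[g_{t+1}]=UR^{-1}h_t$, so $W\E_t[g_{t+1}]=URR^{-1}h_t=g_t$, and \eqref{eq:rec} holds. Since $\rho(R^{-1})=|\varpi|^{-1}<1$, the autoregression has a stationary square-integrable solution. The threshold is unchanged because only $\operatorname{Re}\lambda$ enters \eqref{eq:phi}. This follows Theorem~\ref{thm:dichotomy}(iii), with the scalar $\varpi^{-1}$ replaced by $R^{-1}$.

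\emph{Part (b).} I would first identify the no-dynamic-arbitrage class with symmetric positive semidefinite permanent impact, following \citet{HubermanStanzl2004} and \citet{Gatheral2010}; this gives monotone $K$, i.e.\ $K_s\succeq0$. With $\Ab\succeq0$, I would use the similarity and numerical-range argument of Theorem~\ref{thm:clearing}. The nonzero spectrum of $\Ab K$ coincides with that of $K\Ab$ and with that of the congruence $\Ab^{1/2}K\Ab^{1/2}$. For any $x$, $\operatorname{Re}\,x^*\Ab^{1/2}K\Ab^{1/2}x=y^*K_sy\ge0$ with $y=\Ab^{1/2}x$. Hence every eigenvalue satisfies $\operatorname{Re}\lambda\ge0>-\tfrac12$, so $\tau\le0<\tfrac12$ at every capacity $\cb$. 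Theorem~\ref{thm:dichotomy}(i), which does not need diagonalisability, then gives uniqueness and no convention.

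\emph{Part (c).} This is definitional once the covariance is state dependent. With $Q(z)$, the homogeneous capacity is $\Ab(z)=\cb Q(z)^{-1}$ and $\tau(z)=-\min\operatorname{Re}\mathrm{spec}(\cb Q(z)^{-1}K)$. Eigenvalues depend continuously on matrix entries, so $\tau$ is a continuous field whenever $Q(\cdot)$ is continuous. To show that both regimes can occur, I would give an explicit two-state example. Take $K$ with an indefinite symmetric part and a negative eigenvalue of $Q^{-1}K$ along the direction where stress reduces variance, so that in the stressed state the relevant inverse-covariance weight rises and $\tau$ crosses $\tfrac12$. Applying Theorem~\ref{thm:dichotomy} statewise, at the frozen local $\Ab$, then gives uniqueness in calm states and conventions in stressed ones.

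I expect the main obstacle to be in (a): checking that the real form closes exactly. The step that needs care is that $R$ commutes with the action of $W$ on $P$; both are polynomials in the same real representation of $\lambda$, so this should hold. I also need the analogue of Condition~\ref{cond:cert}, namely that $K$ maps $P$ onto at least two assets so that certification is nontrivial. I will state this as an inherited hypothesis rather than prove it.
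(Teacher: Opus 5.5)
Your proposal is correct and follows the paper's own route. In (a) the paper likewise transports the scalar self-confirmation identity to the real invariant plane via the real canonical form, and your explicit check that $WU=UR$ with $\rho(R^{-1})=|\varpi|^{-1}<1$ is that transport written out. In (b) the paper uses Step~4 of the proof of Theorem~\ref{thm:clearing} to place $\mathrm{spec}(A)$ in the closed right half plane and then applies Theorem~\ref{thm:dichotomy}(i). In (c) it only notes that $\tau$ is a measurable function of the driver state, so your two-state example actually crossing $\tfrac12$ goes slightly beyond what the paper proves.
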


For impact calibrated inside the no-manipulation class, convention multiplicity is
excluded at every capacity covered by the model. Outside that class, the spectral
statistic determines whether the aggregate response is nevertheless subcritical.
The next subsection gives a mechanism through which the relevant aggregate response can
be indefinite even when direct individual round trips remain costly.

\subsection{Intermediation and indefinite aggregate cross-impact}
\label{sub:dealer}

Two operators must be distinguished. The first is the direct execution schedule faced
by an individual participant; no-manipulation restrictions apply to this object. The
second is the mean-field price response after intermediaries hedge an aggregate change
in the population's position. The spectral threshold is built from the latter. The
effective aggregate operator below is therefore not interpreted as a standalone
execution kernel available to a single strategic trader.

\begin{proposition}[Dealer hedging can generate an indefinite aggregate operator without individual manipulation]\label{prop:dealer}
Let direct execution depth be summarised by $\Lambda\succ0$, so that an individual
trade $x$ followed by its reversal has strictly positive quadratic cost
$2x^\top\Lambda x$. A dealer absorbing the population trade $\Delta F$ chooses a hedge
$h$ by
\begin{equation}
\min_h\ \frac12(h-\Delta F)^\top R(h-\Delta F)+\frac12h^\top Ch,
\qquad R\succ0,\quad C\succ0,
\label{eq:dealeropt}
\end{equation}
where $R$ prices residual inventory risk and $C$ hedge execution or balance-sheet
cost. The optimum is $h=H\Delta F$ with
\begin{equation}
H=(R+C)^{-1}R.
\label{eq:hedgetransfer}
\end{equation}
If the residual imbalance transmitted to prices is $\Delta F-h$, the population faces
\begin{equation}
K_{\mathrm{eff}}=\Lambda(I-H).
\label{eq:dealerK}
\end{equation}
Although the eigenvalues of $H$ lie in $(0,1)$, $H$ need not be symmetric when the
inventory-risk and hedge-cost metrics do not commute. Consequently the symmetric part
of $K_{\mathrm{eff}}$ can be indefinite even though $\Lambda,R,C$ are all positive
definite. The reduced operator of Theorem~\ref{thm:dichotomy} is built from
$K_{\mathrm{eff}}$, whereas an infinitesimal portfolio's direct round-trip cost remains
governed by $\Lambda$. The construction is therefore a mean-field aggregate-response
example; it does not assert that $K_{\mathrm{eff}}$ itself satisfies the usual
no-manipulation condition when offered as an individual execution kernel.
\end{proposition}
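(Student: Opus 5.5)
The statement has three parts, and each is a finite-dimensional linear-algebra fact, so I would prove them in order.

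\emph{Step 1: the hedge.} The objective in \eqref{eq:dealeropt} is strictly convex in $h$, because $R+C\succ0$. Setting its gradient to zero gives $R(h-\Delta F)+Ch=0$, hence $h=(R+C)^{-1}R\,\Delta F$. This is \eqref{eq:hedgetransfer}. Substituting the residual $\Delta F-h=(I-H)\Delta F$ into the direct depth $\Lambda$ then yields \eqref{eq:dealerK}. No further argument is needed there.

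\emph{Step 2: the spectrum of $H$.} I would note that $H=(R+C)^{-1}R$ is similar to the symmetric matrix
\[
R^{1/2}(R+C)^{-1}R^{1/2}=\bigl(I+R^{-1/2}CR^{-1/2}\bigr)^{-1}.
\]
The matrix $R^{-1/2}CR^{-1/2}$ is congruent to $C$, so it is positive definite. Hence the eigenvalues of $H$ are $1/(1+\kappa)$ with $\kappa>0$, and they lie in $(0,1)$. The eigenvalues of $I-H$ are then $\kappa/(1+\kappa)\in(0,1)$.

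\emph{Step 3: non-symmetry.} $H$ is symmetric if and only if $(R+C)^{-1}R=R(R+C)^{-1}$. That is equivalent to $R$ commuting with $R+C$, hence with $C$. So whenever the two metrics do not commute, $H$ is non-normal.

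\emph{Step 4: the individual round trip.} A trade $x$ and its reversal each cost $x^\top\Lambda x$ under direct quadratic execution, so the total is $2x^\top\Lambda x>0$ for $x\neq0$. This uses only $\Lambda\succ0$ and does not involve the hedged aggregate response. This step is essentially a bookkeeping statement of the model's assumption.

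\emph{Step 5: indefiniteness of the symmetric part.} This is the main obstacle, because the claim is an existence ("can be") statement and needs an explicit example. I would work in $n=2$, take $\Lambda=I$ (or a suitably ill-conditioned $\Lambda$), and choose non-commuting $R$ and $C$ so that $I-H$ is strongly non-normal. For instance, take $R=\operatorname{diag}(1,\delta)$ and $C$ a rotated diagonal matrix. A triangular-like $I-H$ with small diagonal entries but a large off-diagonal entry has symmetric part $\begin{pmatrix}a&b/2\\ b/2&d\end{pmatrix}$, which is indefinite when $b^2/4>ad$.

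I would compute $I-H=(R+C)^{-1}C$ explicitly. Then I would tune the parameters so that the off-diagonal asymmetry dominates the product of the diagonal entries, and check that the determinant of the symmetric part is negative.

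If $\Lambda=I$ does not suffice, I would exploit $\Lambda$. Since $K_{\mathrm{eff}}=\Lambda(I-H)$, a $\Lambda$ with large condition number, misaligned with the eigenvectors of $H$, amplifies the skewness. I would exhibit one concrete numerical triple $(\Lambda,R,C)$ and verify the negative determinant directly.

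Finally, I would remark that the reduced operator of Theorem~\ref{thm:dichotomy} uses $K_{\mathrm{eff}}$ by construction. That is a definitional closing step, not something to prove.
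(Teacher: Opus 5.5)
Steps 1--4 are correct, and your route is the paper's. The paper derives $H$ from the first-order condition and then supports the ``can be indefinite'' claim only with a two-asset triple $(\Lambda,R,C)$, using a non-scalar $\Lambda$, for which the symmetric part of $K_{\mathrm{eff}}$ has eigenvalues $-0.3541$ and $2.1393$. Your Steps 2--3 prove two facts the paper only asserts: the spectrum of $H$ lies in $(0,1)$, and $H$ is symmetric iff $RC=CR$. For ``non-normal'', add that a normal matrix with real spectrum is symmetric. What is missing is the example itself. It is the entire content of the existence claim, so Step 5 has to be carried out, not just described.

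Your first choice $\Lambda=I$ does suffice, and it is worth keeping, because it gives a cleaner attribution than the paper's triple. With $\Lambda=I$, commuting $R,C$ make $I-H=(R+C)^{-1}C$ symmetric positive definite. Any indefiniteness must therefore come from the non-commuting dealer metrics, which is exactly the ``consequently'' in the statement. With a non-scalar $\Lambda$, by contrast, $\Lambda(I-H)$ can have an indefinite symmetric part even when $R$ and $C$ commute, simply as a product of two positive definite matrices.

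To tune the example, set $\Lambda=I$ and $v=(I-H)x$; then $x^\top(I-H)x=v^\top(I+RC^{-1})v$. So the symmetric part of $K_{\mathrm{eff}}$ is indefinite iff $\tfrac12(RC^{-1}+C^{-1}R)$ has an eigenvalue below $-1$. Misalignment alone is therefore not enough; the scale of $R$ relative to $C$ matters. For instance, with $C=\bigl(\begin{smallmatrix}2&-1\\-1&1\end{smallmatrix}\bigr)$, the choice $R=\mathrm{diag}(1,\tfrac1{20})$ fails. Taking $R=\mathrm{diag}(20,1)$ with the same $C$ works: $K_{\mathrm{eff}}=\tfrac1{43}\bigl(\begin{smallmatrix}3&-1\\-20&21\end{smallmatrix}\bigr)$ has eigenvalues in $(0,1)$, yet $x=(2,1)^\top$ gives $x^\top K_{\mathrm{eff}}x=-9/43<0$.
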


The construction proves possibility, not prevalence. A two-asset instance is
\begin{align*}
\Lambda&=\begin{pmatrix}1.0448&-1.3482\\-1.3482&3.1902\end{pmatrix},
& R&=\begin{pmatrix}0.5082&0.8210\\0.8210&2.5214\end{pmatrix},\\
C&=\begin{pmatrix}2.7698&-0.3783\\-0.3783&0.2078\end{pmatrix}.&&
\end{align*}
All three matrices are positive definite. Equations~\eqref{eq:hedgetransfer}--
\eqref{eq:dealerK} nevertheless give eigenvalues $-0.3541$ and $2.1393$ for the
symmetric part of $K_{\mathrm{eff}}$. Thus intermediary hedging can create the sign
pattern required for a supercritical aggregate response without offering an individual
participant a profitable direct round trip. Whether hedge transfers of this magnitude
occur in market data is an empirical question, and the result applies to the temporary
inventory block identified in Proposition~\ref{prop:timing}, not to a permanent
information-revelation component.

\begin{remark}[The boundary is specific to instantaneous impact]\label{rem:prop}
Equation~\eqref{eq:returns} carries only the instantaneous component of impact.
With a decaying propagator \citep{Bouchaud2004, BouchaudFarmerLillo2010} the
deviation recursion acquires the propagator kernel, the map $\varphi$ is replaced
by the transfer function of that kernel evaluated on the unit circle, and the
critical locus moves off $-\tfrac12$ in a way that depends on the decay. The
three regime structure and the role of monotonicity survive, since they follow
from the numerical range argument and not from the specific kernel, but the
numerical value of the boundary does not, and we do not claim it outside the
instantaneous specification.
\end{remark}

\subsection{Attainability and sustainability}

The dichotomy says which stationary position-path equilibria exist. It does not say which of them a
population that learns rather than knows would ever reach, and the two questions
have different answers.

\begin{proposition}[Neutral learnability of the dominant convention]\label{prop:estab}
Let agents hold the perceived law of motion $g_t=b\,f_t$ with $f$ the extrinsic
autoregression of \eqref{eq:conv} built on the eigenvalue $\lambda^{*}$, and let $T$ be
the map returning the coefficient the economy realises when they act on $b$. Then
$T(b)=\varpi^{-1}Wb$ with $\varpi=\varphi(\lambda^{*})$, and the eigenvalues of the E
stability map $DT-I$ are
\begin{equation}
\frac{\varphi(\lambda_j)}{\varpi}-1,\qquad j=1,\dots,n .
\label{eq:estab}
\end{equation}
Along the convention direction the eigenvalue is exactly zero. Along the others the real part is
negative whenever the convention is built on the dominant direction,
\begin{equation}
\bigl|\varphi(\lambda_j)\bigr|<\bigl|\varphi(\lambda^{*})\bigr|
\qquad\text{for every }j\ \text{with}\ \lambda_j\neq\lambda^{*},
\label{eq:dominance}
\end{equation}
which is sufficient in general and necessary as well when the ratios
$\varphi(\lambda_j)/\varpi$ are real, the case covered by the real eigenpair of
Theorem~\ref{thm:dichotomy}(iii); the exact condition is
$\operatorname{Re}[\varphi(\lambda_j)/\varpi]<1$, and a complex ratio of modulus above
one can still satisfy it. Under \eqref{eq:dominance} the convention family is neutrally, never asymptotically, E stable:
least-squares learning neither converges to it from the fundamental equilibrium nor
away from it once in place. If the exact condition fails on some direction the map has an unstable
direction and the convention is not even neutrally stable, so learning drifts toward
the dominant one.
\end{proposition}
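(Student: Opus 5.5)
The plan is to reduce the learning problem to the linear recursion of Lemma~\ref{lem:rec} and then read off the E-stability eigenvalues from the spectral translation $\varphi$.

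\emph{Step 1: the T-map.} Agents hold the perceived law $g_t=b f_t$, with $b\in\R^n$ and $f$ the autoregression \eqref{eq:conv}. Under this law their forecast is $\E_t[g_{t+1}]=b\,\E_t[f_{t+1}]=\varpi^{-1}b f_t$, since $\xi_{i,t+1}$ is not $\mathcal G_t$-measurable (Definition~\ref{def:filt}). Substituting into \eqref{eq:rec} gives the actual law $g_t=W\varpi^{-1}b f_t$. Hence $T(b)=\varpi^{-1}Wb$, which is linear, and $DT-I=\varpi^{-1}W-I$.

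\emph{Step 2: the eigenvalues.} Under Assumption~\ref{as:spec}, $A$ is diagonalisable. Since $W=(I+A)^{-1}A$ is a rational function of $A$, it shares the eigenvectors of $A$ and has eigenvalues $\varphi(\lambda_j)$. This yields \eqref{eq:estab}. On the convention direction $u$ we have $\varphi(\lambda^*)/\varpi=1$, so that eigenvalue is exactly zero. This is the fixed-point identity $\varpi^{-1}-1=1/\lambda^*$ already noted after \eqref{eq:phi}.

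\emph{Step 3: the other directions.} For $j$ off the convention direction, $\operatorname{Re}[\varphi(\lambda_j)/\varpi-1]<0$ is equivalent to $\operatorname{Re}[\varphi(\lambda_j)/\varpi]<1$, which is the stated exact condition. Under \eqref{eq:dominance} we have $|\varphi(\lambda_j)/\varpi|<1$. Since the real part is bounded by the modulus, the real part is below one, so the condition is sufficient.

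For necessity in the real case: $\varpi$ is real because $\lambda^*$ is real (Condition~\ref{cond:cert}). If the ratio $\varphi(\lambda_j)/\varpi$ is real, then the exact condition reads $\varphi(\lambda_j)/\varpi<1$. Real $\lambda_j$ may make the ratio negative with modulus above one. I would therefore either state necessity for ratios in $(-\infty,1)$ with positive sign, or invoke the dominance of $\lambda^*$ as the extremal real part. This sign issue is the step I expect to need the most care. My first attempt would be to show that real supercritical eigenvalues $\lambda<-1$ give $\varphi(\lambda)>1$, so all such ratios share the sign of $\varpi$.

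A complex counterexample is easy to give: take a ratio $re^{i\theta}$ with $r>1$ and $\cos\theta<1/r$.

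\emph{Step 4: neutrality and instability.} Because $DT-I$ has a zero eigenvalue along $u$, the ODE $\dot b=T(b)-b$ of \citet{EvansHonkapohja2001} has a line of rest points $\{cu\}$. The other modes contract onto this line, and the motion along it is absent. This gives neutral but never asymptotic stability: each point of the family is reached only in its transverse directions. Least-squares learning therefore neither converges to a given convention from $b=0$, the fundamental solution, which lies on the line at $c=0$ and so is itself neutral, nor leaves one once it is in place.

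If some $j$ violates the exact condition, that mode of the linear ODE grows. The learning coefficient then drifts along the eigenvector with the largest $\operatorname{Re}[\varphi(\lambda_j)/\varpi]$, that is, toward the dominant direction.
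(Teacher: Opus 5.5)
Your Steps 1--3 follow the paper's proof: you obtain the same $T$-map by substituting the perceived-law forecast into \eqref{eq:rec}, and you use the same spectral translation $W=\varphi(A)$ on the eigenbasis supplied by Assumption~\ref{as:spec}. You also reduce stability on each direction to $\operatorname{Re}[\varphi(\lambda_j)/\varpi]<1$, which \eqref{eq:dominance} implies, exactly as the paper does. Your Step~4 makes explicit, through the line of rest points $\{cu\}$ of $\dot b=T(b)-b$, the neutrality the paper asserts in one sentence. One small correction to Step~1: the forecast is $\varpi^{-1}bf_t$ because $\E_t[\xi_{i,t+1}]=0$, not because $\xi_{i,t+1}$ fails to be measurable.

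You are right that the necessity step needs care, and you must take your first option there. The sign-sharing lemma you plan to try first is false, and extremality of $\lambda^*$ does not rescue it. Real supercritical eigenvalues lie on both sides of the resonance: $\varphi$ maps $(-\infty,-1)$ onto $(1,\infty)$ but $(-1,-\tfrac12)$ onto $(-\infty,-1)$.

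Take $A$ diagonalisable with spectrum $\{-3,-0.8\}$ and build the convention on $\lambda^*=-3$, which is real, simple and extremal. Then $\varpi=\tfrac32$ and $\varphi(-0.8)=-4$, so the ratio is $-\tfrac83$ and the E-stability eigenvalue is $-\tfrac{11}{3}<0$. Yet $|\varphi(-0.8)|=4>|\varpi|$. The ratio is real, \eqref{eq:dominance} fails, and the exact condition holds, so dominance is not necessary for real ratios in general.

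What you can prove is this. A negative real ratio always satisfies the exact condition, and for a positive real ratio $r$ one has $r<1\iff|r|<1$. Hence \eqref{eq:dominance} is necessary only over those $j$ for which $\varphi(\lambda_j)/\varpi$ is real and positive. This is the qualifier the paper's own instability argument carries (``when real and of the same sign''), even though its preceding sentence asserts the equivalence for all real ratios.

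The same restriction applies to your final sentence. The fastest-growing mode is the one with largest $\operatorname{Re}[\varphi(\lambda_j)/\varpi]$, and that is the direction of largest $|\varphi|$ only among same-sign ratios. With spectrum $\{-3,-2,-0.8\}$ and the convention on $-3$, the unstable direction is $-2$ (ratio $\tfrac43$), whereas $|\varphi|$ is largest at $-0.8$.
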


Condition \eqref{eq:dominance} is substantive. Since
$|\varphi(\lambda)|=|\lambda|/|1+\lambda|$ increases near the resonance $\lambda=-1$
and declines away from it, the E-stable convention is associated with the dominant
transformed root rather than necessarily with the most negative eigenvalue. For example,
with $A=\mathrm{diag}(-0.6,-0.75)$, a convention based on $-0.6$ has $\varpi=-1.5$ while
$\varphi(-0.75)=-3$, so \eqref{eq:estab} has eigenvalue $+1$ in the second direction and
is unstable. A convention based on $-0.75$ instead gives $-\tfrac12$ in the other
direction. Neutral learnability therefore selects the dominant transformed direction
among supercritical candidates.

The distinction is between sustainability and attainability. Past the threshold, a
convention can be self-confirming once coordinated, but least-squares learning does not
generate drift from the fundamental equilibrium toward a nonzero convention amplitude.
Conversely, under the dominance condition, the convention direction is neutral rather
than locally repelling. Coordination therefore requires an initial condition or common
coordination device outside the learning rule. In 200 simulated supercritical
economies satisfying the dominance condition, the eigenvalue along the convention
direction is numerically zero to machine precision and the remaining E-stability
eigenvalues have negative real parts.

\subsection{Stability of the verdict, and two consequences}

The threshold combines three estimated objects, and a verdict that flipped under
plausible estimation error would not be an empirical statement.

\begin{theorem}[Transfer of the estimation defect]\label{thm:transfer}
Let $\Imp(F)=KF$. For the premium perturbation, consider a homogeneous
representation class with $\Ab=\cb M(S)$ and
$\hat\Ab=\cb M(\hat S)$, and let
$\epsilon:=\|M(\hat S)-M(S)\|$. Then
\begin{equation}
\|\hat\mu-\mu^{*}\|
\;\le\;
\cb\;\bigl\|(I+K\Ab)^{-1}\bigr\|^{2}\;\|K\|\;\|\mu_0\|\;\epsilon
\;+\;O(\epsilon^{2}).
\label{eq:transfer}
\end{equation}
For the threshold perturbation, use the homogeneous reduction
$A=\cb Q^{-1}K$ and $\hat A=\cb\hat Q^{-1}\hat K$. Let
$\eta_K:=\|\hat K-K\|$, $\eta_Q:=\|\hat Q^{-1}-Q^{-1}\|$, and suppose $A$ is
diagonalisable with eigenvector matrix $V$ and simple extremal eigenvalue $\lambda_*$. Define
\begin{align}
\delta&:=\|\hat Q^{-1}\hat K-Q^{-1}K\|
\le\|Q^{-1}\|\,\eta_K+\eta_Q\,\|\hat K\|,\notag\\
L&:=\cb\,\kappa(V),
\qquad
\operatorname{sep}_*:=\min_{\lambda_j\ne\lambda_*}|\lambda_j-\lambda_*|.
\label{eq:marginobjects}
\end{align}
If $L\delta<\operatorname{sep}_*/2$, then
\begin{equation}
|\hat\tau-\tau|\le L\delta.
\label{eq:margin}
\end{equation}
Hence a market whose margin $|\tau-\tfrac12|$ exceeds $L\delta$ cannot be
misclassified within that perturbation neighbourhood. If the extremal eigenvalue sits
in a Jordan block of size $k>1$, linear sensitivity is unavailable in general and a
perturbation of size $\delta$ can move it by order $\delta^{1/k}$; the corresponding
local margin requirement has the H\"older form
\begin{equation}
\Bigl|\tau-\tfrac12\Bigr|\;>\;C_k\,\delta^{1/k},
\label{eq:holder}
\end{equation}
with a block-dependent constant $C_k$.
\end{theorem}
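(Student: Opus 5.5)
The theorem has two independent parts, and I would prove them separately: the premium bound by a resolvent identity, and the threshold bound by Bauer--Fike plus an eigenvalue-counting argument.

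\emph{Premium bound.} Write $X:=I+K\Ab$ and $\hat X:=I+K\hat\Ab$. By Assumption~\ref{as:adm}, $X$ is invertible, and $\mu^{*}=X^{-1}\mu_0$ by Theorem~\ref{thm:clearing}.

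\begin{itemize}
\item Since $\hat\Ab-\Ab=\cb\,(M(\hat S)-M(S))$, we get $\|\hat X-X\|\le\cb\|K\|\epsilon$.
\item For $\epsilon$ small enough that $\cb\|K\|\epsilon\,\|X^{-1}\|<1$, a Neumann series shows $\hat X$ is invertible, with $\hat X^{-1}=X^{-1}+O(\epsilon)$.
\item The second resolvent identity gives
\[
\hat\mu-\mu^{*}=\hat X^{-1}(X-\hat X)X^{-1}\mu_0=-\cb\,\hat X^{-1}K\bigl(M(\hat S)-M(S)\bigr)X^{-1}\mu_0 .
\]
\item Taking norms gives the bound $\cb\|\hat X^{-1}\|\,\|K\|\,\epsilon\,\|X^{-1}\|\,\|\mu_0\|$.
\item Replacing $\|\hat X^{-1}\|$ by $\|X^{-1}\|+O(\epsilon)$ moves the correction into the $O(\epsilon^{2})$ term, which is \eqref{eq:transfer}.
\end{itemize}

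\emph{Threshold bound.} First, bound $\delta$ by the splitting
\[
\hat Q^{-1}\hat K-Q^{-1}K=Q^{-1}(\hat K-K)+(\hat Q^{-1}-Q^{-1})\hat K ,
\]
which gives the inequality in \eqref{eq:marginobjects}. Then $E:=\hat A-A$ satisfies $\|E\|=\cb\delta$. Since $A=V\,\mathrm{diag}(\lambda_j)V^{-1}$, the Bauer--Fike theorem places every eigenvalue of $\hat A$ in some disc $D_j$ of radius $\kappa(V)\|E\|=L\delta$ about an eigenvalue $\lambda_j$ of $A$.

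\begin{itemize}
\item \emph{Upper bound $\hat\tau\le\tau+L\delta$.} Every $\hat\lambda$ satisfies $\operatorname{Re}\hat\lambda\ge\operatorname{Re}\lambda_j-L\delta\ge-\tau-L\delta$. This direction needs neither simplicity nor separation.
\item \emph{Lower bound $\hat\tau\ge\tau-L\delta$.} Here I must show that $\hat A$ actually has an eigenvalue in the disc $D_*$ about $\lambda_*$. Because $L\delta<\operatorname{sep}_*/2$, $D_*$ is disjoint from every other disc.
\item Apply Bauer--Fike along the path $A_t=A+tE$, $t\in[0,1]$, whose perturbation bound $tL\delta$ never exceeds $L\delta$. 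Eigenvalues vary continuously in $t$, and $D_*$ stays isolated throughout.
\item So the number of eigenvalues of $A_t$ in $D_*$ is constant in $t$. At $t=0$ it equals the algebraic multiplicity of $\lambda_*$, which is one by simplicity.
\item Hence $\hat A$ has exactly one eigenvalue $\hat\lambda_*\in D_*$, and $\min\operatorname{Re}\mathrm{spec}(\hat A)\le\operatorname{Re}\hat\lambda_*\le-\tau+L\delta$.
\end{itemize}

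Together the two bounds give \eqref{eq:margin}. The no-misclassification statement then follows immediately: if $|\tau-\tfrac12|>L\delta$, then $\hat\tau$ lies on the same side of $\tfrac12$ as $\tau$.

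\emph{Jordan case.} Let the extremal eigenvalue sit in a Jordan block of size $k$, and write $A=VJV^{-1}$. The relevant perturbation is $V^{-1}EV$. Either the Puiseux expansion of the characteristic polynomial near $\lambda_*$, or a local Ostrowski--Elsner-type estimate, gives eigenvalue motion of order $(\kappa(V)\cb\delta)^{1/k}$. Repeating the upper and lower bounds with radius $C_k\delta^{1/k}$ in place of $L\delta$ yields \eqref{eq:holder}; I would state this only locally, with $C_k$ left block-dependent.

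I expect the counting step in the lower bound to be the main obstacle, since Bauer--Fike alone does not guarantee that an eigenvalue of $\hat A$ lies near $\lambda_*$. The homotopy with isolated discs is the route I would take first. A fallback is a Gershgorin argument applied to $V^{-1}\hat AV$.
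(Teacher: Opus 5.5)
Your proposal is correct and follows the paper's proof. The premium bound \eqref{eq:transfer} comes from the second resolvent identity plus continuity of inversion, and the threshold bound \eqref{eq:margin} from Bauer--Fike along the homotopy $A+t(\hat A-A)$, with the disc around the simple extremal eigenvalue kept isolated by $L\delta<\operatorname{sep}_*/2$; the Jordan case rests on the same $z^{k}-\delta$ (Puiseux) heuristic. Your split into one-sided bounds, noting that $\hat\tau\le\tau+L\delta$ needs neither simplicity nor separation, only spells out the paper's one-sentence remark on real parts and is not a different route.
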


The relevant empirical criterion is the estimation error relative to the distance from the spectral boundary. Two implications follow. Estimation error can
produce a false supercritical classification; \eqref{eq:transfer} gives a perturbation
margin under which that classification is protected. And past the boundary the equilibrium manufactures its own certification
obstruction, because the supercritical configuration creates a masking at order
three, a set of conditioning variables whose pairwise conditional independences
hold while the triplewise one fails, so a protocol that tests only pairs will
certify an inadmissible representation \citep{RDorder3}. A failure to certify in a
supercritical market is therefore a prediction of the theory rather than an
accident of the econometrics.

Before stating the next consequence, two channels must be kept separate. When capital
arrives on a representation, the premium moves because the arriving position changes
the price; this is the crowding discount \eqref{eq:disc}, derived from clearing with no
additional congestion primitive. Separately, capacity devoted to the same conditioning
subspaces uses overlapping research, data and model infrastructure. That second scarcity
is priced by $c_r r(S,\rhos)$ in Definition~\ref{def:rep}. It lives in driver space and
is deliberately not another market-impact term.

The distinction is economically observable. Two representations can induce aligned
asset exposures and therefore crowd each other through prices even if their driver
subspaces are different. Conversely, two portfolios can consume essentially the same
information infrastructure while implementing different books. The response operator
$\Theta$ maps the first object into the second but does not identify them unless it is
injective.

\begin{lemma}[Driver-space redundancy and shared information capacity]\label{lem:redundancy}
The overlap kernel of \eqref{eq:driveroverlap} is positive semidefinite. For any
$a_1,\ldots,a_N\in\R$ and $S_1,\ldots,S_N\in\mathcal S$,
\begin{equation}
\sum_{i,j=1}^N a_i a_j\,\omega(S_i,S_j)
=\frac1m\left\|\sum_{i=1}^N a_iP_{S_i}\right\|_F^2\ge0.
\label{eq:psd}
\end{equation}
Moreover, associate with representation $S$ the unit research-resource vector
$v(S):=\operatorname{vec}(P_S)/\sqrt m$ and suppose shared data, validation, and model
infrastructure has quadratic capacity cost
\begin{equation}
C_I(\rhos):=\frac{c_r}{2}
\left\|\int_{\mathcal S}v(S)\,d\rhos(S)\right\|_2^2 .
\label{eq:infocost}
\end{equation}
Then the marginal cost of adding capacity to representation $S$ is
\begin{equation}
\frac{\delta C_I}{\delta\rhos}(S)
=c_r\int_{\mathcal S}\omega(S,S')\,d\rhos(S')
=c_r r(S,\rhos).
\label{eq:infomarginal}
\end{equation}
Hence the redundancy term is both the marginal price of a basis-invariant information
capacity technology and the kernel that gives the information-layer potential its
concavity.
\end{lemma}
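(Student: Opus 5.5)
The plan is to prove all three claims from the single identity $\operatorname{tr}(P_SP_{S'})=\langle P_S,P_{S'}\rangle_F$. This holds because each projector $P_S=V(V^\top V)^{-1}V^\top$ is symmetric, so $\operatorname{tr}(P_SP_{S'})=\operatorname{tr}(P_S^\top P_{S'})$. Two preliminary facts come first. The projector $P_S$ depends only on $S$ and not on the basis $V$, since $VR$ gives the same projector, so $\omega$ and $v(S)$ are well defined on $\Gr(m,D)$. Moreover $\|P_S\|_F^2=\operatorname{tr}(P_S)=m$, so $v(S)$ is a unit vector and $\omega(S,S')=\langle v(S),v(S')\rangle$. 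This also recovers the range $\omega\in[0,1]$: the lower bound follows from $\operatorname{tr}(P_SP_{S'})=\|P_SP_{S'}\|_F^2\ge0$, and the upper bound from Cauchy--Schwarz.

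For \eqref{eq:psd}, I expand the squared Frobenius norm by bilinearity:
\[
\Bigl\|\sum_i a_iP_{S_i}\Bigr\|_F^2=\sum_{i,j}a_ia_j\langle P_{S_i},P_{S_j}\rangle_F=m\sum_{i,j}a_ia_j\,\omega(S_i,S_j).
\]
Dividing by $m$ gives the identity, and nonnegativity is immediate. Equivalently, $\omega$ is the Gram kernel of the feature map $v$, hence positive semidefinite.

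For the marginal cost I write $C_I(\rhos)=\tfrac{c_r}{2}\|\bar v\|^2$ with $\bar v:=\int v\,d\rhos$. This integral is a Bochner integral of a continuous, bounded, vector-valued map over a compact $\mathcal S$, and it is finite because $\rhos$ has finite mass. I then perturb $\rhos$ to $\rhos+h\delta_S$, or more generally in any direction $\nu$. Expanding the square gives
\[
C_I(\rhos+h\nu)=C_I(\rhos)+h\,c_r\langle\bar v,\textstyle\int v\,d\nu\rangle+O(h^2).
\]
The first variation is therefore $c_r\langle \bar v,v(S)\rangle$. Moving the inner product inside the integral gives $c_r\int\langle v(S),v(S')\rangle d\rhos(S')=c_r r(S,\rhos)$, which is \eqref{eq:infomarginal}.

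The closing sentence then follows by observing two things. First, the redundancy term in \eqref{eq:selectionpotential} is exactly $-C_I$ restricted to $C$. Second, $-C_I$ is a negative quadratic form in the measure, and by the identity above it equals $-\tfrac{c_r}{2}\|\int v\,d\rhos\|^2$. It is therefore concave in $\rhos$; in the finite-support case this is the displayed Gram-form bound applied to differences of measures. The rest of $\mathcal V_C$ is linear in $\rho$ at fixed $\mu$, so the whole potential is concave. There is no real obstacle here. The only point needing care is justifying the functional derivative and the interchange of the inner product with the integral, and continuity and boundedness of $v$ on compact $\mathcal S$ handle both.
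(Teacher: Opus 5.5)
Your proof is correct and follows essentially the paper's route. The paper has no standalone proof of this lemma; inside the proof of Theorem~\ref{thm:layer2} it uses exactly your Gram-kernel identity $\omega(S,S')=\langle \operatorname{vec}(P_S)/\sqrt m,\operatorname{vec}(P_{S'})/\sqrt m\rangle$ to obtain positive semidefiniteness and the concavity of the redundancy term, and your first-variation computation of $C_I$ and your checks of basis invariance and $\omega\in[0,1]$ fill in steps the paper leaves implicit.
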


The quadratic specification is a reduced-form implementation of shared information
capacity. Its role is to show that the population-game kernel can arise from a scarcity
that remains after asset exposures and execution costs are held fixed.

\begin{remark}[Two congestions in different spaces]\label{rem:twospaces}
Shared representations can also generate an execution externality through aligned
desired demands. Such an externality is an asset-space object and depends on the depth
or impact geometry. By contrast, $\omega(S,S')$ is defined in driver space. The two
objects need not coincide when $\Theta$ is non-injective: representations that differ
inside $\ker\Theta$ can induce identical asset exposures while retaining distinct
driver-space overlap. Position crowding is therefore priced through the asset-market
response, whereas the direct redundancy term prices overlap in the conditioning
architecture.
\end{remark}

\begin{proposition}[Impact invariance of the direct redundancy price]\label{prop:invar2}
Holding the driver geometry and aggregate capacity marginal $\rhos$ fixed, scaling the
impact operator $K$ changes the clearing premium and therefore the price-mediated
crowding discount, but it leaves the direct redundancy charge
$c_r r(S,\rhos)$ unchanged. The two channels are therefore separately identified only
with variation that moves impact independently of driver-space overlap (or vice versa).
\end{proposition}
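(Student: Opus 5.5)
The plan is to read both claims off the explicit formulas already in place, treating the driver-space and asset-space objects separately. Fix $\rhos$ and the driver geometry, so the projectors $P_S$ for $S\in\mathcal S$ and the measure $\rhos$ on $\mathcal S$ are given. Replace $K$ by $sK$ with $s>0$, and keep $s$ in the range where the resolvent regularity \eqref{eq:regular} holds for $sK$; this is automatic under monotone $K$ by the numerical range argument of Theorem~\ref{thm:clearing}.

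The first step is the invariance. By Definition~\ref{def:rep}, the charge is
$$c_r r(S,\rhos)=c_r\int_{\mathcal S}\tfrac1m\operatorname{tr}(P_SP_{S'})\,d\rhos(S').$$
This depends only on the Euclidean projectors in driver space, on $c_r$, and on $\rhos$. None of these involves $K$, $Q$, $\Theta$ or the premium. So the map $s\mapsto c_r r(S,\rhos)$ is constant. By Lemma~\ref{lem:redundancy}, the same quantity is the marginal cost $\delta C_I/\delta\rhos$ of the information technology \eqref{eq:infocost}, which is also $K$-free. This gives the economic reading: the charge is a price of information capacity, not an execution term.

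The second step is to show that the other channel does move. The capacity operator $\Ab(\rho)=\int M(S)\,d\rhos(S)$ from \eqref{eq:maps} is independent of $s$, since $M(S)$ depends only on $Q$ and $\Theta$. The premium is $\mu^*(s)=(I+sK\Ab)^{-1}\mu_0$. Differentiating the resolvent gives
$$\partial_s\mu^*=-(I+sK\Ab)^{-1}K\Ab\,\mu^*(s),$$
which is nonzero whenever $K\Ab\mu^*\neq0$. In the homogeneous class, \eqref{eq:disc} makes this explicit: the first-order discount is $-2s\cb\,(M\mu_0)^\top K(M\mu_0)$, which is linear in $s$. So the potential $\Delta(S;\mu^*(s))$ changes, except in the degenerate case $K\Ab\mu^*=0$. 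That case should be named explicitly as the one in which the price channel is locally inactive.

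The last step is the identification statement. Write the type payoff \eqref{eq:typepayoff} as $u=\Delta(S;\mu^*(K,\rhos))-c_r r(S,\rhos)-\dots$. Variation in $K$ at fixed $\rhos$ moves only the first term, which identifies the price-mediated channel. Conversely, variation in driver-space overlap that leaves $\Ab$, and hence $\mu^*$, unchanged moves only the second term. This is possible because $\Theta$ may be non-injective, so representations differing inside $\ker\Theta$ share the same $M(S)$ (Remark~\ref{rem:twospaces}). If impact and overlap are instead observed to covary along a single direction, the two terms enter $u$ additively with coefficients that cannot be separated. A one-line linear-in-parameters example should suffice to exhibit that observational equivalence.

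The main obstacle is interpretive rather than technical. One must make precise that "holding $\rhos$ fixed" is a comparative static at a given configuration and not an equilibrium comparison. In equilibrium, $\rhos$ itself responds to $K$ through $\mathcal T$, so the charge $c_r r$ evaluated at the new equilibrium would change indirectly. The statement should therefore be phrased as a partial derivative at fixed $\rhos$, with the equilibrium response flagged as a separate effect.
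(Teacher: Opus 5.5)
Your proposal is correct and follows the paper's own proof: you read the charge off \eqref{eq:driveroverlap} as $K$-free, you show the premium moves through the resolvent, and you obtain separate identification because the two channels enter \eqref{eq:typepayoff} additively. The explicit derivative $\partial_s\mu^*=-(I+sK\Ab)^{-1}K\Ab\mu^*$, the $\ker\Theta$ construction for the converse variation, and the fixed-$\rhos$ caveat (which the paper states as ``not invariance of total equilibrium welfare or total crowding'') are refinements of that same argument.

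One small correction. $K\Ab\mu^*\neq0$ guarantees that the premium moves, but not that every potential moves. The reason is that $\partial_s\Delta(S;\mu^*)=2\mu^{*\top}M(S)\,\partial_s\mu^*$ vanishes whenever $\partial_s\mu^*$ lies in the kernel of the rank-$m$ operator $M(S)$. So the degenerate case you name characterises the premium channel, not each representation's potential.
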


\subsection{Status of the results, and the measurement protocol}

\ref{app:map} records, for each result, what it establishes and whether it is
new here or an application of an existing theorem. \ref{app:proofs} proves
the results marked new; for the others it gives the step that makes the classical result
applicable.

\subsection{What the theory predicts, in testable form}

The model yields four testable implications. First, entry of capital lowers the clearing
premium along the common speculative direction, with a second-order certainty-equivalent
loss for a portfolio that continues to trade against the zero-mass premium. Second, the
relaxation time of subcritical deviations diverges as the spectral boundary is approached.
Third, in the supercritical regime, alternative extrinsic processes imply different
certification outcomes and the coordinated direction has a characteristic return-autocorrelation pattern. Fourth, at fixed driver-space overlap, the direct redundancy
charge is invariant to a rescaling of market impact, whereas the price-mediated crowding
channel is not. The numerical section studies the first three implications; the fourth
requires independent variation in depth and driver overlap. \ref{app:memory}
reports an ancillary long-memory implication of heterogeneous switching costs.

\section{Verification, and the conditions for measurement}
\label{sec:num}

The numerical exercises verify the analytical results under data-generating processes
that satisfy their assumptions and evaluate the measurement protocol under controlled
perturbations. They are numerical validation rather than evidence about market prevalence.
\ref{app:sim} reports the simulation design, and
\ref{app:robust} reports the estimation-error experiments.

Two routine verifications, that the agent by agent market reaches the analytical
resolvent and that deviating from the clearing premium is strictly costly with the
predicted quadratic shape, are reported in \ref{app:routine}, together with
the measured phase surface and its critical exponent.

\subsection{The boundary, and critical slowing down}

Sweeping the statistic $\tau$ of \eqref{eq:tau} over $[0.05,0.95]$ on 61
grid points, with the destabilizing eigenvalue placed by construction so that
$\tau$ is a control variable rather than an estimate, reproduces both branches of
Theorem~\ref{thm:dichotomy}. Below the boundary the best response multiplier
matches $\varphi(\lambda)$ of \eqref{eq:phi} to machine zero at all 61
points. What is not algebraic is the rate, collected in Table~\ref{tab:slowdown}:
iterations to convergence rise from
$7$ at $\tau=0.05$ to $51$ at $0.41$, $103$ at $0.455$, $307$ at $0.485$, and
exceed the cap of $2\times10^{5}$ exactly at $\tau=0.5$ (Figure~\ref{fig:E2}).
This is the critical slowing down the dichotomy predicts as the spectral radius
of the equilibrium map approaches one, and it carries a warning for measurement:
within roughly $0.05$ of the boundary a market that is formally unique relaxes so
slowly that, over any horizon shorter than its relaxation time, it is
observationally a market that is not.

Above the boundary the constructed convention is stationary and quantitatively
right. Across the 30 supercritical grid points the simulated stationary
variance matches the theoretical value with median relative error $3.3\%$ and
maximum $9.1\%$, and the autoregressive coefficient is negative throughout, as
$\varpi^{-1}<0$ requires. Classification by side of the boundary is exact: the
last strictly subcritical grid point is $\tau=0.485$; $\tau=0.5$ is the critical case
of Theorem~\ref{thm:dichotomy}(ii), where the iteration does not converge because
deviations neither decay nor grow; and the first supercritical point is $\tau=0.515$.

\begin{table}[t]
\centering\small
\begin{tabular}{@{}p{34mm}rrrrrr@{}}
\toprule
statistic $\tau$ & 0.05 & 0.41 & 0.44 & 0.455 & 0.485 & 0.50 \\
\midrule
iterations to converge & 7 & 51 & 77 & 103 & 307 & $>2\times10^{5}$ \\
\bottomrule
\end{tabular}
\caption{Critical slowing down, measured. Iterations required to reach the fundamental
equilibrium as the statistic approaches its critical value from below, on the swept
grid of Section~\ref{sec:num}. The last column is the iteration cap: at the boundary
the market does not converge at all.}
\label{tab:slowdown}
\end{table}

\begin{figure}[t]
\centering
\includegraphics[width=0.86\textwidth]{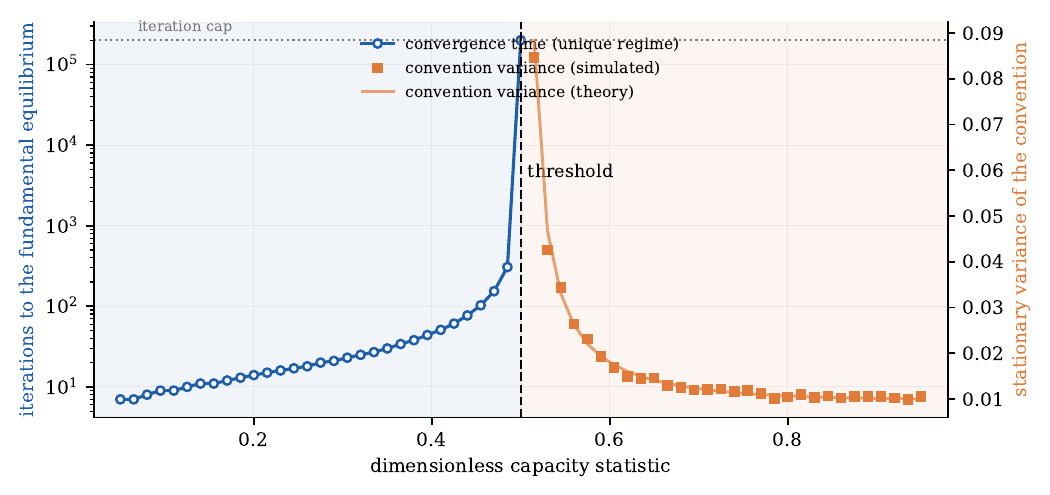}
\caption{The phase transition. Left axis: iterations to reach the fundamental
equilibrium on the subcritical side, diverging at the iteration cap exactly at the
boundary. Right axis: stationary variance of the constructed convention above it,
simulated against theory. The statistic is swept and both branches are measured.}
\label{fig:E2}
\end{figure}

\subsection{Numerical verification of the convention ceiling}

Proposition~\ref{prop:ceiling} predicts that conventions are bounded once part of
risk-bearing capacity updates to the additional variance created by the convention. In
the monotone calibration used in this experiment the tail ceiling
\eqref{eq:ceilingdef} coincides with the first downward crossing of one half. To avoid mechanically reproducing the analytical ceiling, the numerical exercise uses
two computational routes. The ceiling is obtained by bisection on the statistic. An
independent calculation instead solves for the \emph{self-consistent} amplitude, the
fixed point of
$\sigma_f^{2}=\sigma_\xi^{2}/(1-\varpi(\sigma_f)^{-2})$, in which the autoregressive
root depends on the amplitude through the capacity that prices the convention. The comparison asks whether the self-consistent amplitude remains at or below the
ceiling for every driving scale and whether any fixed point exists above it.

The result is in Figure~\ref{fig:ceiling} and Table~\ref{tab:ceiling}. Across the
grid there are no violations: the self-consistent amplitude rises with the driving
scale, saturates exactly at the ceiling, and no fixed point exists beyond it. At a
baseline-capacity share of $0.6$ the ceiling is infinite, which is the degenerate
case the proposition identifies, where the capital that would price the convention is
too small to bring the statistic back below the boundary, and there the self
consistent amplitude grows without meeting an obstruction.

\begin{table}[t]
\centering\small
\begin{tabular}{@{}p{46mm}rrr@{}}
\toprule
baseline-capacity share $\phi$ & $0.0$ & $0.3$ & $0.6$ \\
\midrule
ceiling, by bisection on the statistic & 0.903 & 1.331 & $\infty$ \\
largest self-consistent amplitude & 0.903 & 1.331 & 7.093 \\
amplitudes above the ceiling & 0 & 0 & 0 \\
\bottomrule
\end{tabular}
\caption{The ceiling by two routes. In this monotone calibration the first row
computes the tail ceiling by the equivalent first downward crossing of one half; the
second solves a fixed point in amplitude without using that crossing. They agree to
three digits where the ceiling is finite, and no self-consistent amplitude exceeds it.}
\label{tab:ceiling}
\end{table}

\begin{figure}[t]
\centering
\includegraphics[width=0.94\textwidth]{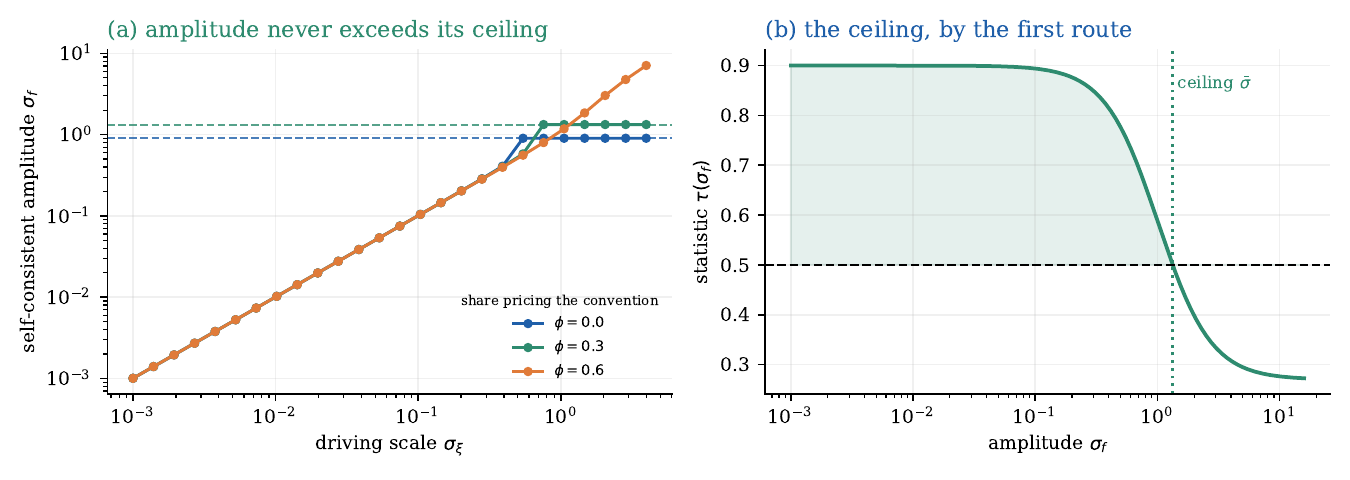}
\caption{The convention ceiling. Panel (a): self-consistent amplitude against driving
scale for three baseline-capacity shares $\phi$, with the corresponding ceiling
as a dashed line; the amplitude saturates at the ceiling and never crosses it. Panel
(b): the statistic against amplitude for the middle case, with the ceiling marked; the
shaded region is where a convention can sustain itself.}
\label{fig:ceiling}
\end{figure}

\subsection{The certification signature}

Theorem~\ref{thm:dichotomy}(iii) predicts an observable rather than the bare
existence of multiple equilibria: two markets identical in fundamentals but
coordinated on different extrinsic processes differ in which process certifies.
Two supercritical economies with identical fundamentals, coordinated on different
processes over 2,500 periods, produce convention paths that are
statistically indistinguishable unconditionally, correlation $0.04$, stationary
variances $0.099$ and $0.104$, first-order autocorrelation $-0.35$ in both.
Conditionally they separate completely. Re-estimating the separator from the
simulated data as if the generating process were unknown, residual cross
dependence is $0.062$ for the first market on its own process against $0.488$ on
the other, and $0.510$ against $0.068$ for the second (Table~\ref{tab:E3}, Table~\ref{tab:E3moments} and
Figure~\ref{fig:E3}).

Repeating the experiment as a distribution rather than a draw, on an independent
implementation using an independent random-number stream, 40 supercritical economies
of 8 assets over 4,000 periods
give mean residual dependence $0.037$ under the certifying separator against
$0.194$ under the alternative, with the ordering correct in 80 of 80
runs. The two distributions are not disjoint, the largest score under the true
separator, $0.054$, slightly exceeding the smallest under the false one, $0.044$,
because when the destabilizing direction loads only two assets weakly the
convention term is hard to detect at any sample size. The certification scores separate the two processes in all simulated runs, although the
score distributions overlap when the destabilising direction loads only a small number
of assets. Which process certifies remains an observable property of the simulated
convention.

\begin{table}[t]
\centering\small
\begin{tabular}{@{}>{\raggedright\arraybackslash}p{58mm}rr@{}}
\toprule
& market A & market B \\
\midrule
stationary variance of the convention & 0.0994 & 0.1045 \\
first-order autocorrelation & $-0.349$ & $-0.355$ \\
correlation between the two paths & \multicolumn{2}{c}{0.037} \\
periods & \multicolumn{2}{c}{2500} \\
\midrule
mean residual dependence, 40 economies & \multicolumn{2}{c}{0.037 certifying, 0.194 alternative} \\
correct ordering & \multicolumn{2}{c}{80 of 80 runs} \\
largest score under the certifying separator & \multicolumn{2}{c}{0.054} \\
smallest score under the alternative & \multicolumn{2}{c}{0.044} \\
\bottomrule
\end{tabular}
\caption{Two conventions over identical fundamentals. The upper block reports the
unconditional moments of the pair, which are indistinguishable; the lower block the
distribution over 40 economies, in which the ordering is correct in every run and
the two score distributions nonetheless overlap.}
\label{tab:E3moments}
\end{table}

\begin{table}[t]
\centering\small
\begin{tabular}{lcc}
\toprule
& condition on process A & condition on process B \\
\midrule
market A, plays A & \textbf{0.062} & 0.488 \\
market B, plays B & 0.510 & \textbf{0.068} \\
\bottomrule
\end{tabular}
\caption{The certification battery. Residual cross dependence after conditioning
on each candidate separator, in two markets with identical fundamentals playing
different conventions.}
\label{tab:E3}
\end{table}

\begin{figure}[t]
\centering
\includegraphics[width=\textwidth]{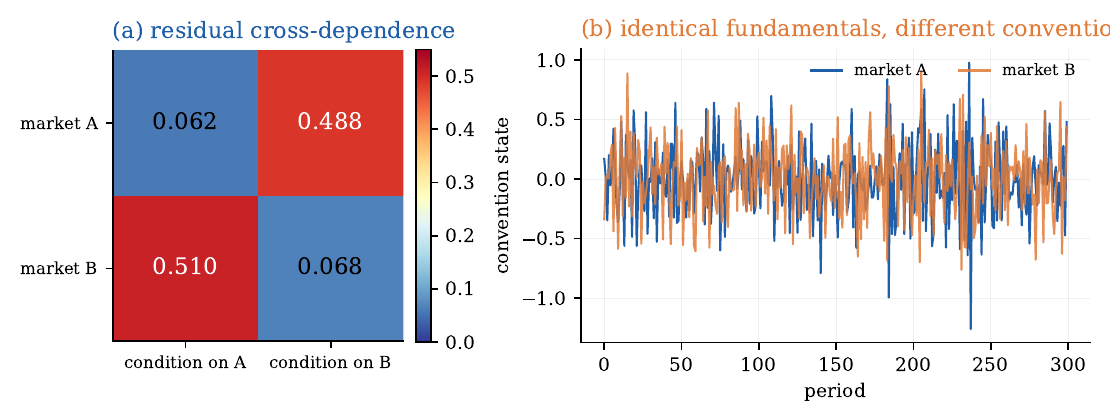}
\caption{Self confirmation. Panel (a): residual cross dependence after
conditioning. Panel (b): the first 300 periods of the two convention
states, indistinguishable unconditionally and separated completely by the
conditional test.}
\label{fig:E3}
\end{figure}

\subsection{Where the theory's edges are, measured}
\label{sub:edges}

Two statements of Section~\ref{sec:frame}, listed in Table~\ref{tab:status} among the
objects that have no single-agent counterpart, are qualitative as stated. The
experiments below quantify whether the corresponding regions are economically nontrivial
in the simulated designs.

The first is the claim that an ill-behaved impact operator is necessary for
multiplicity and not sufficient, which is empty if the middle regime has no width.
Sampling 1,400 economies with indefiniteness and capacity drawn over wide
ranges, and classifying each by monotonicity of the symmetrised impact operator and
by the statistic, the middle regime is populated: $18.3\%$ of economies with an
indefinite operator are subcritical, so uniqueness holds although trading against
oneself can pay. The boundary in the plane of capacity against destabilisation is the
hyperbola $\cb\,d=\tfrac12$ of Figure~\ref{fig:impact}a, and the middle regime is the
region between it and the monotone locus.

The second is the alignment claim following \eqref{eq:disc}, that heterogeneity acts
on the crowding discount only through the alignment of the class funds in the
geometry of $K$, so that diversity as such does not lower it. Two families of
economies, identical except that the common speculative fund is aligned with the
least liquid direction of the impact operator or with the most liquid one, produce
discount curves that separate immediately in capacity and by a factor of about three
at the right of the grid (Figure~\ref{fig:impact}b). The difference is driven by
alignment rather than aggregate mass.

\begin{figure}[t]
\centering
\includegraphics[width=\textwidth]{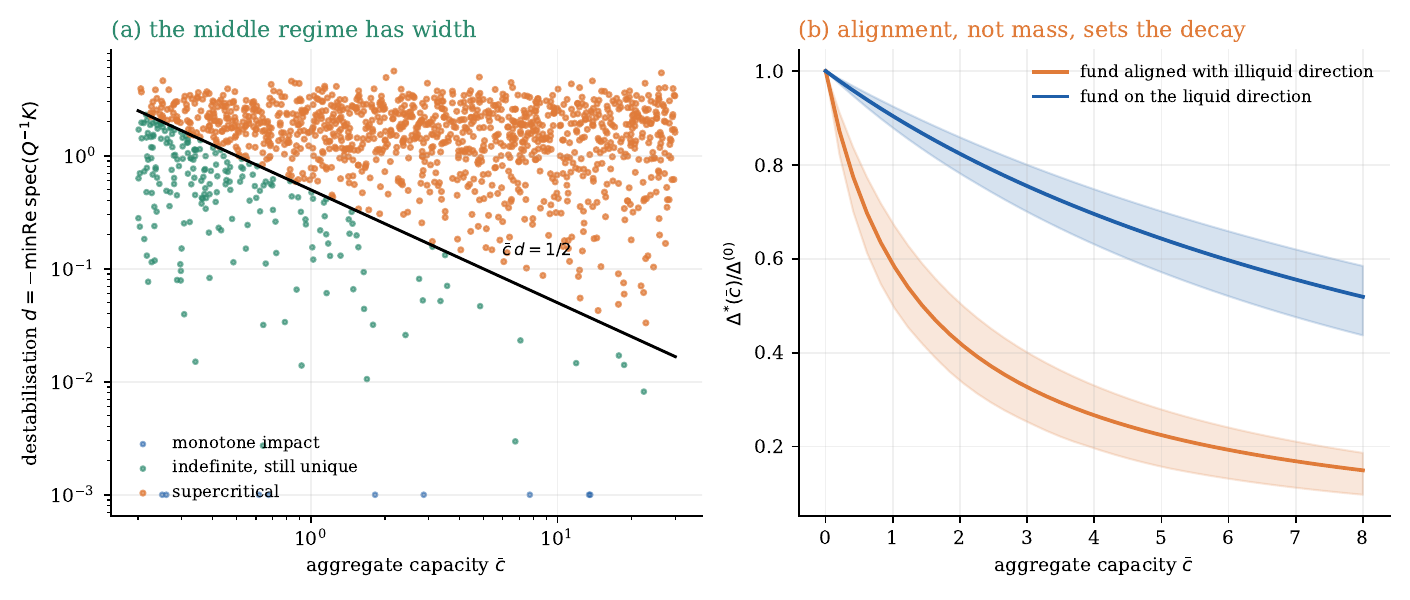}
\caption{Measured edges of the theory. Panel (a): 1,400 random economies in
the plane of aggregate capacity against destabilisation per unit of capacity, with the
boundary $\cb\,d=\tfrac12$; monotone economies have $d\le0$ and are drawn at the axis
floor, indefinite but subcritical economies occupy the region between, and they are
$18.3\%$ of all indefinite ones. Panel (b): the crowding discount at capacity for
funds aligned with the least and the most liquid direction of the impact operator,
mean and 10th--90th percentile band over 30 economies.}
\label{fig:impact}
\end{figure}

\subsection{Nonnormal and near-defective operators}
\label{sub:adversarial}

The preceding experiments use well-conditioned operators. The next experiment evaluates
the spectral classification under ill-conditioning. Three families of reduced operators
are drawn: symmetric, hence normal; nonnormal with a controlled
eigenvector condition number; and near defective, obtained by collapsing the gap
between two eigenvalues while making their eigenvectors nearly parallel, which gives a
median condition number above 1,000. For each, the regime predicted by the
statistic is compared with the behaviour measured by iterating the equilibrium map
until a deviation halves or doubles, and the comparison is repeated with the operator
observed through noise.

With the exact operator the statistic classifies the measured behaviour in $700$ of
$700$ economies in every family, near defective included. Nonnormality and proximity to
a Jordan block therefore leave the exact-operator classification unchanged in these
designs, while increasing sensitivity to estimation error as predicted by
Theorem~\ref{thm:transfer}
and where the H\"older caveat of \eqref{eq:holder} applies. Under an operator observed
with 10\% relative noise the agreement falls to $0.991$ for symmetric,
$0.940$ for near defective and $0.901$ for nonnormal families, and in every family the
loss is concentrated close to the boundary, consistent with the margin result of
Theorem~\ref{thm:transfer} (Figure~\ref{fig:adversarial}).

\begin{figure}[t]
\centering
\includegraphics[width=0.92\textwidth]{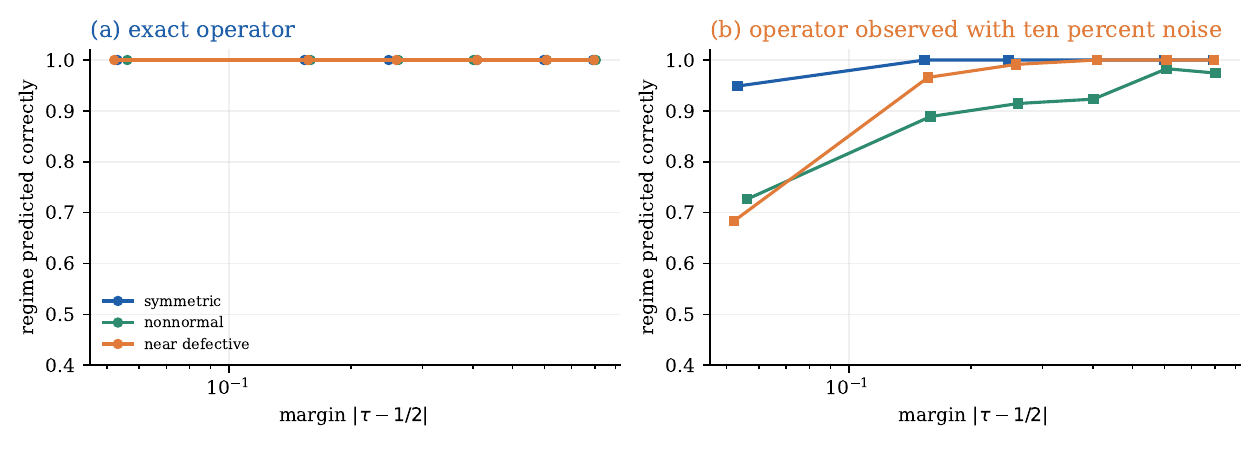}
\caption{Robustness to nonnormality and near-defectiveness. Share of economies in which the regime predicted by the
statistic matches the behaviour measured by iterating the equilibrium map, against the
margin to the boundary, for symmetric, nonnormal and near defective operators. Panel
(a): exact operator, agreement is one everywhere. Panel (b): operator observed with 10\%
relative noise; the loss is concentrated at small margins and is largest for
nonnormal operators, whose eigenvalues are the most sensitive to perturbation.}
\label{fig:adversarial}
\end{figure}

\subsection{An empirical signature consistent with representation crowding}
\label{sub:onemeasure}

The available data do not permit estimation of the threshold, but they do permit an
empirical diagnostic for driver-space residual comovement. Congestion in information predicts that the residual comovement
of the portfolios most exposed to a driver exceeds what the same statistic returns for
portfolios chosen at random, and the single-agent framework predicts zero, since under
separation the residuals of a certified conditioning set are cross-sectionally
independent. That contrast needs prices and drivers and nothing else.

The empirical panel contains 152 U.S. S\&P constituents drawn from Bloomberg price
data with at least 98\% coverage over the May 2010--July 2023 overlap, $T=3258$.
The conditioning universe is a declared set of 30 Bloomberg driver series selected on
economic grounds rather than by the fit statistic; \ref{app:data} lists the
series and transformations. Rates, spreads, breakevens, and level or count series are
differenced, while price-like series are log-differenced. Eighteen of the 30 candidate
series take non-positive values somewhere in the sample, so a uniform log
transformation is not defined. The out-of-sample coefficient of determination of the
driver set for the panel is $0.19$.

The statistic is the mean residual correlation among the $k=\max(8,\lfloor n/20\rfloor)$
assets most exposed to each driver; with $n=152$, $k=8$. Residuals are obtained from a
fit estimated on the first half of the sample and evaluated on the second, so the
statistic is out-of-sample. The null permutes the exposure map, correlating residuals of
assets \emph{not} selected by the driver exposure ranking, which breaks the link being
tested while preserving the residual covariance and exposure distribution. A separate
commercially licensed monthly market-wide equity-lending dataset spanning 1996--2024
supplies the cross-sectional median borrow rate and the share of names with borrow rates
above 20\%. The main regime split uses the median borrow rate; the second measure is a
robustness proxy. Because the lending universe is market-wide while the equity panel is
a subset of the S\&P universe, these variables are used to rank months rather than to
make statements about lending-rate levels. The high and low regimes are constructed
with equal numbers of observations.

The measured excess over the null is $+0.092$ in the high-borrowing subsample and
$+0.047$ in the low one, with 15 and 8 of the 30 drivers, respectively, exceeding two
standard deviations of their regime-specific nulls. The diagnostic is therefore
positive, driver-specific, and measurable out-of-sample. This pattern is inconsistent
with a literal zero-residual-comovement benchmark, but the exercise does not identify
$c_r$, the conditioning sets portfolios actually choose, or the causal channel from
crowding to the premium; for that it would need holdings, an instrument, or exogenous
variation in capacity.

Three limitations apply. The difference between the two subsamples,
$+0.046$, is not significant against a block permutation placebo that preserves the
serial dependence of daily residual correlations, $p=0.19$, so the level is measured
and its dependence on deployed capacity is not: that comparison is reported as a
direction and not as a result. The regimes are calendar clustered, so the comparison
confounds capacity with period. The equity-lending proxy is market-wide while the empirical equity panel is
a subset of the S\&P universe, so the proxy is used for ordering months rather than
for a statement about levels. What the exercise establishes is the existence and magnitude of the
object, not the comparative static the theory attaches to it.

\subsection{Identification}
\label{sub:ident}

Beyond the data requirements there are three identification problems that a study of
this framework must confront, and two of them are created by the theory itself.

\emph{Separating the two blocks of impact.} By Proposition~\ref{prop:timing} the
operator entering the statistic is the temporary block, so the empirical work must
split $K$ into a part that permanently revises the efficient price and a part that
reverts. Order flow and information are correlated, so the split is not a regression of
returns on flow. The identification used in microstructure is the long-run impulse
response of a vector autoregression in trades and quote revisions: the permanent block
is the limit of the cumulative response of the efficient price to a flow innovation,
and the temporary block is the difference between the contemporaneous and the long run
response \citep{Hasbrouck1991, GlostenHarris1988}. It requires trade and quote data and
delivers the decomposition without imposing ex ante that observed flow is uninformed.

\emph{Endogenous selection of the conditioning set.} The certified set
$\mathcal{S}_\epsilon(\rho)$ of \eqref{eq:certified} is generated by the equilibrium:
the configuration produces a position, the position changes residual dependence, and
that determines which representations certify. An estimator that selects a separator on
the same data that the feedback contaminates will therefore recover an object partly
manufactured by the mechanism it is meant to detect. Two designs can mitigate this endogeneity. Selection and evaluation can be
run on disjoint segments of each window, so the configuration used to certify is not
the configuration whose flow is being measured. And exogenous variation in capacity,
from index reconstitutions, mandate changes or regulatory limits, shifts $\rho$ without
shifting fundamentals, which is the variation that identifies the direction of the
feedback. Neither design is exercised here.

\emph{What is identified, and up to what.} Prices and drivers identify the induced
exposure subspace $E(S)=\Theta S$ and not the representation $S$ itself, since two
driver subspaces differing inside $\ker\Theta$ produce identical exposures, identical
demand and identical returns. Within the exposure subspace, only the span is
identified and not any basis, which is why the state space is a Grassmannian. Anything
finer, meaning which drivers a portfolio actually watches rather than which exposures
it ends up with, is identified only from holdings or from an instrument that moves one
driver set without moving the others. Claims about representations should therefore be
read as claims about informational equivalence classes, and the paper's certification
test respects that: it asks which extrinsic process certifies, which is a property of
the class.

\subsection{What a measurement of the statistic would require}
\label{sub:pilots}

A direct estimate of the threshold requires variables that are not jointly available in
the empirical panel used here. The necessary data and identification conditions are therefore
stated explicitly before any threshold estimate is attempted.

Table~\ref{tab:data} collects the requirements. The estimable half needs a driver
panel that is long relative to its width, since the
leading conditioning subspace is recovered at the square root of sample size with a
constant that degrades as the number of candidate drivers approaches the number of
observations, and it needs a cross section smaller than the sample, since a statistic
ordered by spectral rank is defined only on the identified part of the spectrum.
Where the object of interest is a fixed income cross section the returns must be
excess returns, and comparisons across instruments must be normalised by duration,
otherwise the leading direction can be dominated by the duration profile of the panel. The
decisive half needs signed cross-asset order flow, from which the impact operator is
estimated, and position-level capital, from which the capacity is. Both halves must
be exercised on the same panel, and the panel must be multi-asset: the informational
overlap the theory prices is cross-asset by construction, and a single asset class in
which every participant conditions on the same two or three factors has no dispersion
of configurations to measure.

\begin{table}[t]
\centering\small
\begin{tabularx}{\linewidth}{@{}>{\raggedright\arraybackslash}p{31mm}>{\raggedright\arraybackslash}X>{\raggedright\arraybackslash}X@{}}
\toprule
Object & Data requirement & Identification condition \\
\midrule
Representations & prices and drivers; panel long relative to driver count
  & first eigengap separated from degeneracy \\
Dispersion and overlap & the same, across several asset classes
  & configurations dispersed in the cross section \\
Position crowding & holdings or a matched positioning proxy
  & proxy matched to the panel, not market wide \\
Impact operator & signed cross-asset order flow
  & flow innovation separable from public-signal innovations \\
Aggregate capacity & position-level effective risk capacity, including capital and risk scaling
  & measured on the same portfolio universe as impact \\
\midrule
\textbf{Threshold verdict} & impact and capacity jointly, on one multi-asset panel
  & margin above the safe radius of Theorem~\ref{thm:transfer} \\
\bottomrule
\end{tabularx}
\caption{What a measurement of the statistic requires. The impact estimator is
implemented and validated on simulated data; a market-data application remains constrained
by both data availability and the identification conditions in Section~\ref{sub:ident}.}
\label{tab:data}
\end{table}

\FloatBarrier

\section{Conclusion}
\label{sec:conc}

This paper makes the conditioning architecture an equilibrium object. A population
allocates risk-bearing capacity across representations; those representations induce
positions; positions move prices; and the resulting market state changes both the value
and, through causal certification, the feasible set of representations. Representation
equilibrium therefore closes a configuration--price loop subject to an endogenous
admissibility constraint, rather than a price equation alone.

Five questions remain distinct. Under compactness, a regular resolvent and continuity
of the certified-set correspondence, representation equilibrium exists at every switching
cost covered by the theorem. Within a stable certification cell, the aggregate
configuration is determinate when informational congestion is strong enough relative to
price feedback, $L_\Phi L_\Delta/\sigma<1$; this is a local small-gain result, not global
uniqueness across certification boundaries. Attainability instead concerns the adjustment
process. Conditional on a settled configuration, stationary position paths obey a
separate spectral classification: the reduced operator is subcritical when $\tau<1/2$,
critical at $\tau=1/2$, and can support a continuum of certifiable self-confirming
conventions when $\tau>1/2$ and the destabilising eigenstructure is regular. Causal
admissibility is separate again: the reflexive position is safe to condition on only
under the timing and temporary-impact restriction of Proposition~\ref{prop:timing}.

The supercritical mechanism does not require profitable individual round trips. Direct
execution costs can remain positive definite while intermediary hedging generates an
indefinite aggregate response. Nor does multiplicity imply spontaneous coordination:
under least-squares learning the dominant convention is neutral rather than attracting.
Endogenous risk capacity can in turn push the system back into the subcritical region
and bound sustainable convention amplitudes.

The empirical claims are deliberately narrower. Direct measurement of the threshold
requires signed cross-asset order flow and portfolio-level effective risk capacity on the
same multi-asset panel, together with identification of temporary rather than
informationally permanent impact. The real-data exercise instead documents an
out-of-sample signature consistent with driver-space representation crowding, while
Section~\ref{sub:ident} specifies the additional identification needed to recover the
causal crowding channel and the threshold. The empirical contribution is therefore not
an estimate of how often markets are supercritical, but a set of falsifiable implications
and measurement conditions for determining when they are.

\newpage

\appendix
\section{The layer structure}
\label{app:layers}

\begin{theorem}[Layer one: the value map is well posed]\label{thm:layer1}
Assume the regular-resolvent condition \eqref{eq:regular}. Then for every
$\rho\in\mathcal X$ the premium $\Phi(\rho)$ exists and is unique. On the compact
configuration space it is Lipschitz in the aggregate capacity marginal:
\begin{equation}
\|\Phi(\rho)-\Phi(\rho')\|
\le L_\Phi\|\rhos-\rho'^{\Sigma}\|_{TV},\qquad
L_\Phi:=R_*^2\|K\|\|\mu_0\|\kappa_A,
\label{eq:lphi}
\end{equation}
where $R_*:=\sup_{\rho\in\mathcal X}\|(I+K\Ab(\rho))^{-1}\|<\infty$ and
$\kappa_A:=\sup_{S\in\mathcal S}\|M(S)\|$. Monotone impact is a global sufficient
condition for the regular-resolvent assumption by Theorem~\ref{thm:clearing}.
\end{theorem}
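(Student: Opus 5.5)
Existence and uniqueness at a fixed configuration are immediate: by \eqref{eq:regular}, $-1\notin\mathrm{spec}(K\Ab(\rho))$, so $I+K\Ab(\rho)$ is invertible and $\Phi(\rho)=(I+K\Ab(\rho))^{-1}\mu_0$ is the unique solution of $\mu=\mu_0-K\Ab(\rho)\mu$. For the Lipschitz bound we need two ingredients: finiteness of $R_*$ and a perturbation identity for the resolvent.

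\emph{Finiteness of $R_*$.} Assumption~\ref{as:rank}, with the compactness of $\mathcal S$ from Assumption~\ref{as:adm}, gives continuity of $S\mapsto M(S)$ on $\mathcal S$. Indeed, $B^\top Q^{-1}B$ stays invertible and depends continuously on a local basis, and $M$ is basis invariant. Hence $\kappa_A<\infty$. The set $\mathcal A=\{\Ab(\rho):\rho\in\mathcal X\}$ is the image of $\mathcal X$ under $\rho\mapsto\int M\,d\rhos$. This map is weak-$*$ continuous because $M$ is continuous on the compact $\mathcal S$. Since $\mathcal X$ is weak-$*$ compact (Banach--Alaoglu, with total masses bounded by $M_\vartheta$), $\mathcal A$ is a compact subset of $\R^{n\times n}$, and it is bounded by $M\kappa_A$. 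The map $\Ab\mapsto\|(I+K\Ab)^{-1}\|$ is continuous on the open set where the inverse exists, and \eqref{eq:regular} places $\mathcal A$ inside that set. So its supremum over the compact $\mathcal A$ is attained and finite. This compactness step is the main point needing care: \eqref{eq:regular} is only pointwise, and it is compactness of $\mathcal A$ that turns it into a uniform bound.

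\emph{Resolvent identity.} Write $G=I+K\Ab(\rho)$ and $G'=I+K\Ab(\rho')$. Then
\[
\Phi(\rho)-\Phi(\rho')=(G^{-1}-G'^{-1})\mu_0=G^{-1}K\bigl(\Ab(\rho')-\Ab(\rho)\bigr)G'^{-1}\mu_0 .
\]
This is the second resolvent identity, and it yields
\[
\|\Phi(\rho)-\Phi(\rho')\|\le R_*^2\|K\|\|\mu_0\|\,\|\Ab(\rho)-\Ab(\rho')\| .
\]
Finally, $\Ab(\rho)-\Ab(\rho')=\int M(S)\,d(\rhos-\rho'^{\Sigma})(S)$. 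Bounding this Bochner integral by the sup of the integrand times the total variation gives $\|\Ab(\rho)-\Ab(\rho')\|\le\kappa_A\|\rhos-\rho'^{\Sigma}\|_{TV}$, which is \eqref{eq:lphi}. The estimate depends on $\rho$ only through $\rhos$, consistent with the statement.

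\emph{Last sentence.} This is exactly the numerical-range argument of Theorem~\ref{thm:clearing}. Monotone $K$ and $\Ab\succeq0$ put the nonzero spectrum of $K\Ab$, equal to that of $\Ab^{1/2}K\Ab^{1/2}$, in the closed right half plane, which excludes $-1$. So \eqref{eq:regular} holds for every $\Ab\in\mathcal A$.
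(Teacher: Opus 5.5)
Your proof is correct and follows the paper's route: invertibility of $I+K\Ab(\rho)$ from \eqref{eq:regular}, the second resolvent identity, and the bound $\|\Ab(\rho)-\Ab(\rho')\|\le\kappa_A\|\rhos-\rho'^{\Sigma}\|_{TV}$. Your weak-$*$ compactness argument for $R_*<\infty$ spells out what the paper compresses into ``the regular-resolvent condition and compactness of $\mathcal X$'': continuity of $\rho\mapsto\Ab(\rho)$ makes $\mathcal A$ compact, which turns the pointwise condition into a uniform bound.
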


The result restates Theorem~\ref{thm:clearing} as a property of the value map. Given a
configuration, the premium is the fixed point of demand against impact, and monotonicity
is a sufficient condition for uniqueness \citep{Kyle1985,HubermanStanzl2004}.

\begin{theorem}[Layer two: concavity and aggregate determinacy on a fixed certified set]
\label{thm:layer2}
Fix a premium $\mu$ and a nonempty compact certified set $C\subseteq\mathcal S$.
The typed selection equilibria are exactly the maximisers of
$\mathcal V_C(\rho;\mu)$ in \eqref{eq:selectionpotential}. The functional is concave
because the overlap kernel \eqref{eq:driveroverlap} is positive semidefinite; hence the
selection correspondence is nonempty, convex and compact.

For aggregate determinacy, work in the discrete case and write
$C=\{S_1,\ldots,S_k\}$. Let $x_i:=\rhos(S_i)$, let $\mathcal X_C^\Sigma$ be the set of
aggregate vectors generated by feasible typed allocations, and define the
incumbent-assignment value
\begin{equation}
G_C(x;\mu):=
\max_{\substack{\rho\in\mathcal X:\ \operatorname{supp}\rho_\vartheta\subseteq C\\
                 \rho^\Sigma=x}}
\sum_{\vartheta}\sum_{i=1}^k
\bigl[\Delta(S_i;\mu)-c_0-c_{sw}d(S_{0,\vartheta},S_i)\bigr]\rho_{\vartheta i}.
\label{eq:assignmentvalue}
\end{equation}
Then $G_C(\cdot;\mu)$ is concave and the aggregate selection problem is
\begin{equation}
\max_{x\in\mathcal X_C^\Sigma}
\left\{G_C(x;\mu)-\frac{c_r}{2}x^\top\Omega x\right\},
\qquad \Omega_{ij}:=\omega(S_i,S_j).
\label{eq:aggregatepotential}
\end{equation}
If there is $\sigma>0$ such that
$c_r h^\top\Omega h\ge\sigma\|h\|_2^2$ for every aggregate feasible tangent
direction $h$, then \eqref{eq:aggregatepotential} is $\sigma$-strongly concave and its
aggregate maximiser $x^\Sigma(\mu)$ is unique. The typed decomposition attaining that
aggregate need not be unique.
\end{theorem}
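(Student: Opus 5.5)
The plan is to treat Theorem~\ref{thm:layer2} as a statement about a concave program over a convex, weak-$*$ compact set. The program is the maximisation of $\mathcal V_C(\cdot;\mu)$ over typed measures supported in $C$. The argument has four parts: characterise the typed selection equilibria as the maximisers, establish concavity, obtain nonemptiness, convexity and compactness, and then pass to the aggregate problem in the discrete case.

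\emph{Equilibria are maximisers.} First I will check that $\mathcal V_C(\cdot;\mu)$ is a potential for the typed game. The first variation of $\mathcal V_C$ in direction $\delta\rho_\vartheta$ at a point $S$ is $u_\vartheta(S;\mu,\rho)$ of \eqref{eq:typepayoff}. The linear part contributes $\Delta-c_0-c_{sw}d$. The quadratic part contributes $-c_r r(S,\rhos)$, using the symmetry of $\omega$ and Lemma~\ref{lem:redundancy}, equation \eqref{eq:infomarginal}. The feasible set is a product of truncated positive cones, which is convex. Given concavity, the KKT conditions for maximising over this set are therefore necessary and sufficient. These conditions are exactly the typewise allocation system \eqref{eq:re3} at fixed $\mu$ and $C$: $u_\vartheta\le\eta_\vartheta$ on $C$, equality $\rho_\vartheta$-a.e., and complementary slackness on the capacity bound. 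This identifies selection equilibria with maximisers.

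\emph{Concavity and existence.} Concavity holds because the linear part is affine in $\rho$ and the quadratic part is $-\tfrac{c_r}{2}$ times a nonnegative form, by \eqref{eq:psd} extended from finite sums to measures. The extension will use weak-$*$ approximation of measures by atomic ones together with continuity of $\omega$, or equivalently the representation $\tfrac{1}{m}\|\int P_S\,d\rhos\|_F^2$. For compactness, $\Delta(\cdot;\mu)$, $d$ and $\omega$ are continuous on the compact set $C$; this needs Assumption~\ref{as:rank} for $\Delta$. Hence $\mathcal V_C$ is weak-$*$ continuous, and the feasible set is weak-$*$ compact by Banach--Alaoglu, being bounded total mass on a compact metric space. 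The argmax is then nonempty and compact, and it is convex because the objective is concave.

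\emph{Aggregate reduction.} In the discrete case, the quadratic term depends on $\rho$ only through the aggregate $x=\rhos$. The two-stage maximisation, first over type decompositions with fixed $x$ and then over $x$, therefore gives \eqref{eq:aggregatepotential}. Concavity of $G_C$ is the standard fact that the value of a linear program is concave in its right-hand side $x$. Concretely, take a convex combination of feasible decompositions attaining $G_C(x)$ and $G_C(x')$; it is feasible for the combined aggregate, and the objective is linear. The set $\mathcal X_C^\Sigma$ is the linear image of a polytope, so it is a compact polytope. Under the curvature hypothesis, the objective minus $-\tfrac{\sigma}{2}\|x\|^2$ remains concave along feasible directions. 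The function is therefore $\sigma$-strongly concave on $\mathcal X^\Sigma_C$, and a strongly concave function has a unique maximiser: if two maximisers existed, their midpoint would strictly improve the value. Non-uniqueness of the typed decomposition is immediate from the LP in \eqref{eq:assignmentvalue}, which can have a face of optima, for example when two types have equal switching distances.

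\emph{Main obstacle.} I expect the hard step to be the rigorous KKT equivalence in the continuum case. This requires the variational inequality for concave functionals on cones of measures with an a.e.\ equality condition. I would handle it by perturbing with Dirac masses at $S\in C$ and by shifting mass between Borel sets.
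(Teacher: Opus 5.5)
Your proposal is correct and follows essentially the same route as the paper. The shared steps are: the first variation of $\mathcal V_C$ is $u_\vartheta$; the Gram (positive semidefinite) structure of $\omega$ gives concavity and a nonempty, compact, convex argmax; a convex combination of attaining typed allocations gives concavity of $G_C$; the redundancy term depends on $\rho$ only through the aggregate; and the curvature bound gives $\sigma$-strong concavity and a unique aggregate maximiser. Your additions fill in details the paper leaves implicit: weak-$*$ compactness, the use of Assumption~\ref{as:rank} for continuity of $\Delta$, the explicit variational-inequality reading of the KKT system, and a concrete example of a non-unique typed decomposition.
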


This has the structure of a population congestion game. The
population potential is Beckmann-like \citep{BeckmannMcGuireWinsten1956}: representation
value falls as aggregate capacity moves onto overlapping conditioning subspaces. The
qualification ``aggregate'' matters. Heterogeneous incumbents can create several
micro-allocations across types with the same aggregate $\rhos$; those allocations are
irrelevant for the premium, redundancy charge and certification whenever those objects
depend on the population only through $\rhos$. The equilibrium object pinned down by
strict congestion is therefore the market configuration, not necessarily the identity
of the portfolios carrying each representation.

\begin{theorem}[The joint problem: aggregate determinacy inside a certification cell]
\label{thm:layerjoint}
Work in the discrete case of Assumption~\ref{as:adm}. Let $\mathfrak C$ be a nonempty
closed set of configurations on which the certified set is constant,
$\mathcal S_\epsilon(\rho)=C$, and suppose the aggregate best-response map sends
$\mathfrak C$ into itself. Assume the regular-resolvent condition
\eqref{eq:regular}, and let $\sigma>0$ be the aggregate strong-concavity modulus of
Theorem~\ref{thm:layer2}. Write $x(\rho)$ for the aggregate mass vector on $C$.
There are finite constants $L_\Phi,L_\Delta$ such that
\begin{equation}
\|\Phi(x)-\Phi(x')\|\le L_\Phi\|x-x'\|_2,
\qquad
\|\Delta_C(\mu)-\Delta_C(\mu')\|_2
\le L_\Delta\|\mu-\mu'\|,
\label{eq:layerlipschitz}
\end{equation}
where $\Delta_C(\mu):=(\Delta(S_i;\mu))_{S_i\in C}$. If $q:=L_\Phi L_\Delta/\sigma<1$, equivalently
\eqref{eq:jointcontraction}, then the aggregate map $x\mapsto\Psi_C^\Sigma(\Phi(x))$ is a contraction on the
aggregate image of $\mathfrak C$. It has a unique aggregate fixed point there and
iterated aggregate best response converges to it geometrically from every starting
configuration in the cell. Type-level decompositions attaining the same aggregate may
remain multiple.

The theorem makes no uniqueness claim across certification cells. If a path crosses a
boundary where $\mathcal S_\epsilon(\rho)$ changes, the feasible set changes and the
contraction must be re-established on the new cell. Existence of the full causal fixed
point is supplied separately by Theorem~\ref{thm:exist} under its global correspondence
assumption.
\end{theorem}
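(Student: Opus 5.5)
The plan has four steps: establish the two Lipschitz bounds in \eqref{eq:layerlipschitz}, prove that the strongly concave aggregate argmax is Lipschitz in its linear payoff vector with constant $1/\sigma$, compose these into a contraction, and close with Banach's theorem on the aggregate image of the cell.

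\emph{Lipschitz constants.} For $L_\Phi$, I would use Theorem~\ref{thm:layer1} in the discrete case. There $\Ab(x)=\sum_i x_iM(S_i)$, so the resolvent identity gives
$$\Phi(x)-\Phi(x')=(I+K\Ab(x))^{-1}K\bigl(\Ab(x')-\Ab(x)\bigr)(I+K\Ab(x'))^{-1}\mu_0 .$$
Bounding this by $R_*^2\|K\|\|\mu_0\|\kappa_A\|x-x'\|_1$ and using $\|\cdot\|_1\le\sqrt k\|\cdot\|_2$ gives a finite $L_\Phi$. The constant $R_*$ is finite because the resolvent is regular, by \eqref{eq:regular}, on the compact set of capacity operators. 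For $L_\Delta$, note that $\Delta(S_i;\mu)=\mu^\top M(S_i)\mu$ is quadratic, so
$$|\Delta(S_i;\mu)-\Delta(S_i;\mu')|\le\kappa_A(\|\mu\|+\|\mu'\|)\|\mu-\mu'\| .$$
Premia range over $\Phi$ of a compact set, which is bounded by $R_*\|\mu_0\|$. Hence $L_\Delta=2\sqrt k\,\kappa_A R_*\|\mu_0\|$ works.

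\emph{Stability of the aggregate selection.} Inside the cell the certified set is the fixed set $C$, so Theorem~\ref{thm:layer2} applies. The aggregate best response $x^\Sigma(\mu)$ is the unique maximiser of the $\sigma$-strongly concave problem \eqref{eq:aggregatepotential}. The premium enters $G_C(x;\mu)$ only through the additive linear term. Indeed, the switching-cost part of the assignment value does not depend on $\mu$, so
$$G_C(x;\mu)=\Delta_C(\mu)^\top x+G^0_C(x),$$
where $G^0_C$ is concave and $\mu$-independent. I would then apply the standard argmax-stability argument. Write the first-order variational inequalities at $x=x^\Sigma(\mu)$ and $x'=x^\Sigma(\mu')$ using supergradients, test each against the other point, and add them. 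Monotonicity of $-\partial G^0_C$ together with $\sigma$-strong concavity of the quadratic term along the feasible tangent $h=x-x'$ gives
$$\sigma\|x-x'\|_2^2\le(\Delta_C(\mu)-\Delta_C(\mu'))^\top(x-x')\le\|\Delta_C(\mu)-\Delta_C(\mu')\|_2\|x-x'\|_2 .$$
Hence $\|x^\Sigma(\mu)-x^\Sigma(\mu')\|_2\le L_\Delta\|\mu-\mu'\|/\sigma$.

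\emph{Contraction and conclusion.} Composing the two bounds gives $\|\Psi^\Sigma_C(\Phi(x))-\Psi^\Sigma_C(\Phi(x'))\|_2\le q\|x-x'\|_2$ on the aggregate image of $\mathfrak C$. That image is closed, since $\mathfrak C$ is closed and the aggregation map is linear and continuous on the compact space $\mathcal X$. By hypothesis the best-response map sends $\mathfrak C$ into itself. Banach's fixed-point theorem then gives a unique aggregate fixed point, and iterated best response converges geometrically at rate $q$. The typed decomposition is left free, as in Theorem~\ref{thm:layer2}.

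The step I expect to be the main obstacle is the strong-monotonicity inequality. Strong concavity is assumed only along aggregate feasible tangent directions, and $G_C$ is a value function that may be nonsmooth. I would first check that $x-x'$ is such a tangent direction, which holds because $\mathcal X^\Sigma_C$ is convex. Then I would use supergradients of the concave $G^0_C$ and normal cones of the polytope $\mathcal X^\Sigma_C$ rather than gradients, so that the normal-cone terms have the right sign and drop out.
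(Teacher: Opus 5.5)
Your proposal is correct and follows the paper's proof step for step. The steps are: the resolvent identity for $L_\Phi$ (you get the factor $\sqrt{k}\,\kappa_A$ where the paper uses $(\sum_i\|M(S_i)\|^2)^{1/2}$ via Cauchy--Schwarz); the quadratic form of $\Delta$ on the bounded premium range for $L_\Delta$; the $1/\sigma$ sensitivity of the strongly concave aggregate problem, whose only $\mu$-dependence is the linear coefficient $\Delta_C(\mu)$; composition to $q$; and Banach's theorem on the compact aggregate image of the cell. Your variational-inequality derivation fills in a $1/\sigma$ sensitivity step that the paper only cites. If you write $f_\mu$ for the aggregate objective at premium $\mu$, you can instead add the two quadratic-growth inequalities $f_\mu(x)\ge f_\mu(x')+\tfrac{\sigma}{2}\|x-x'\|_2^2$ and $f_{\mu'}(x')\ge f_{\mu'}(x)+\tfrac{\sigma}{2}\|x-x'\|_2^2$, which gives the same bound without supergradients or normal cones.
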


Condition~\eqref{eq:jointcontraction} is a local small-gain restriction: the numerator
measures feedback from configuration to premium and back to representation value, while
the denominator measures the restoring curvature generated by redundancy pricing. The
condition applies within a stable certification cell. Its failure removes the contraction
and hence the local uniqueness guarantee, but does not by itself imply non-existence; it
is also distinct from friction hysteresis and from stationary position-path multiplicity.

\begin{figure}[t]
\centering
\begin{adjustbox}{max width=\linewidth}
\begin{tikzpicture}[font=\small,
  lay/.style={draw,rounded corners,minimum width=52mm,minimum height=17mm,align=center,inner sep=5pt},
  ar/.style={-{Latex},very thick}]
\node[lay,draw=pblue,fill=pblue!8] (info) {\textbf{information layer}\\
  typed capacity on $\Gr(m,D)$\\[1mm]
  \footnotesize aggregate selection unique if $\sigma>0$ on fixed $C$};
\node[lay,draw=porange,fill=porange!8,right=34mm of info] (val) {\textbf{value layer}\\
  clearing premium $\mu^{*}$\\[1mm]
  \footnotesize regular resolvent; monotone $K$ is sufficient};
\draw[ar,pblue] ([yshift=4mm]info.east) -- node[above,font=\footnotesize]{$\Phi$: aggregate, clear}
  ([yshift=4mm]val.west);
\draw[ar,porange] ([yshift=-4mm]val.west) -- node[below,font=\footnotesize]{$\Psi$: value, select}
  ([yshift=-4mm]info.east);
\node[below=13mm of $(info)!0.5!(val)$,font=\footnotesize,align=center,text=pgreen!60!black]
  {aggregate fixed point unique inside a certification cell if $L_\Phi L_\Delta/\sigma<1$:\\congestion dominates reflexivity};
\node[below=32mm of $(info)!0.5!(val)$,lay,draw=pgrey,fill=black!4,minimum width=80mm] (flow)
  {\textbf{stationary position-path layer, at a settled configuration}\\[1mm]
   \footnotesize subcritical spectrum: unique; supercritical regular direction: conventions};
\draw[ar,pgrey] (info.south) .. controls +(0,-14mm) and +(-30mm,0) .. (flow.west);
\end{tikzpicture}
\end{adjustbox}
\caption{The architecture. The information and value layers form the configuration fixed point, while causal certification changes the feasible set. Aggregate determinacy is a within-cell result. The spectral dichotomy lives one level below, in stationary position paths conditional on a settled configuration.}
\label{fig:layers}
\end{figure}

\section{The verdict under estimation error}
\label{app:robust}

The robustness experiment propagates measurement error in the impact operator and
conditional covariance into the threshold statistic. Across 300 economies with
200 perturbations each, relative Frobenius errors of 2\%, 5\%, 10\%, and 20\% produce an upward bias in the statistic ranging from $0.1\%$ to $9\%$, with its
standard deviation increasing from $1.7\%$ to $30\%$. Consistent with
Theorem~\ref{thm:transfer}, classification accuracy depends primarily on the margin to the
boundary: no misclassification occurs for margins above $0.3$ in the designs considered;
for margins between $0.1$ and $0.3$, the rate remains at or below $3.3\%$ even at 20\%
perturbation and increases materially only within $0.1$ of the boundary.
Representation recovery converges at approximately the square-root rate, with a measured
log--log slope of $-0.53$. Under square-root impact, outside the linear specification of
the theorem, the statistic estimated from simulated prices, drivers, and signed trades has
correlation $0.79$--$0.99$ with its population value, depending on the share of signed-trade
variation independent of public signals.

\section{Map of the results}
\label{app:map}

{\small
\begin{longtable}{@{}>{\raggedright\arraybackslash}p{25mm}>{\raggedright\arraybackslash}p{55mm}>{\raggedright\arraybackslash}p{43mm}@{}}
\caption{Map of the results. The table separates results proved here from results whose
mathematical engine is an existing theorem; the last column records the proof route.}
\label{tab:status}\\
\toprule
Result & Establishes & Status and proof \\
\midrule
\endfirsthead
\multicolumn{3}{l}{\footnotesize\itshape Table~\ref{tab:status}, continued}\\
\toprule
Result & Establishes & Status and proof \\
\midrule
\endhead
\midrule
\multicolumn{3}{r}{\footnotesize\itshape continued overleaf}\\
\endfoot
\bottomrule
\endlastfoot
Definition~\ref{def:rep} & representation, induced exposures, potential
  & new; the response operator is the only bridge between the two spaces \\
Definition~\ref{def:price} & one price equation, its transitory and persistent readings
  & new; inventory pricing in the sense of \citet{GrossmanMiller1988} \\
Definition~\ref{def:filt} & decision, return and certification information sets
  & new; in text \\
Theorem~\ref{thm:clearing} & clearing premium, its closed form, the crowding discount
  & new; appendix \\
Theorem~\ref{thm:nl} & uniqueness under concave impact
  & monotone operator theory \citep{Minty1962, Browder1963} after a symmetric square
    root change of variables, without which the map is not monotone; appendix \\
Lemma~\ref{lem:rec} & the position recursion, derived from demand and impact
  & new; in text \\
Theorem~\ref{thm:dichotomy} & the dichotomy and the threshold
  & new; uses \citet{BlanchardKahn1980}; appendix \\
Corollary~\ref{cor:three} & rotation, no manipulation, state dependence
  & new; appendix \\
Proposition~\ref{prop:dealer} & dealer hedging makes the aggregate operator indefinite
  while individual round trips stay costly & new; in text \\
Section~\ref{sub:ident} & three identification problems, two created by the theory
  & new; in text \\
Proposition~\ref{prop:ceiling} & conventions are bounded once capacity is endogenous
  & new; appendix \\
Proposition~\ref{prop:estab} & conventions are sustainable, not attainable by learning
  & new; appendix \\
Theorem~\ref{thm:transfer} & the margin under estimation error: Lipschitz under spectral separation, H\"older near defective blocks
  & new; Bauer--Fike from \citet{Bhatia1997}; appendix \\
Theorem~\ref{thm:exist} & existence under the stated regularity, at any switching cost
  & new; Fan--Glicksberg fixed point; discrete purification via \citet{Schmeidler1973}; appendix \\
Corollary~\ref{cor:hyst} & friction hysteresis is not convention multiplicity
  & new; appendix \\
Remark~\ref{rem:cert} & the certified set is endogenous, and where existence can fail
  & new; in text \\
Theorem~\ref{thm:layer1} & the value map is single valued and Lipschitz
  & restates Theorem~\ref{thm:clearing}; appendix \\
Theorem~\ref{thm:layer2} & fixed-certified-set selection is a concave potential game;
  aggregate mass is unique under curvature & new; structure from
  \citet{Rosenthal1973, Sandholm2001}; appendix \\
Theorem~\ref{thm:layerjoint} & within-cell aggregate determinacy when congestion dominates reflexivity
  & new; appendix \\
Theorem~\ref{thm:corr} & correspondence with the single portfolio theory
  & new; appendix \\
Proposition~\ref{prop:timing} & conditioning on the aggregate position is collider safe only where impact
  is temporary & new; appendix \\
Proposition~\ref{prop:causal} & equilibrium invalidates single-agent certification
  & new; appendix \\
Proposition~\ref{prop:invar2} & the direct redundancy price is distinct from the price-mediated crowding channel
  & new; appendix \\
Flow memory (\ref{app:memory}) & long memory from switching heterogeneity
  & aggregation imported from \citet{Granger1980}; referenced \\
\end{longtable}}

\section{Supporting computations}
\label{app:routine}

This appendix reports two numerical checks that support the derivations but are not
used as market evidence.

\subsection{Clearing and the cost of deviation}

The clearing experiment uses 100 heterogeneous portfolios over 20 assets,
each carrying a 3-dimensional representation drawn from a pool of candidate driver
subsets. Wealth and risk tolerance are dispersed, and impact is monotone, with symmetric
and asymmetric implementations. The simulator does not impose the closed-form
resolvent. Portfolios optimise, positions aggregate, prices update, and the procedure is
iterated. The relative step falls from $1.2\times10^{-2}$ to
$8.3\times10^{-13}$ in 7 rounds and converges to the resolvent of
Theorem~\ref{thm:clearing}; a population of 500 portfolios converges to the
same premium.

The deviation experiment holds the population at equilibrium and lets one portfolio
optimise against the zero-capacity premium instead of the clearing premium. Its
certainty equivalent falls at every capacity, from $1.1\times10^{-8}$ at capacity
$0.05$ to $3.4\times10^{-5}$ at capacity $3$. The log--log slope is $1.96$, compared
with the theoretical value $2$. Over the same grid, the premium displacement rises from
$5.9\times10^{-4}$ to $3.3\times10^{-2}$ and is first-order in capacity, whereas the
certainty-equivalent loss is second-order in the displacement. On this grid the
spectral radius of $K\Ab$ remains below $0.085$; at larger capacity the numerical
iteration requires damping even though the algebraic clearing condition remains
well-defined whenever the resolvent is regular.

\subsection{The phase surface and its critical exponent}

The one-dimensional threshold sweep crosses the boundary along a single path in
parameter space. A separate grid varies aggregate capacity and destabilisation jointly
and records the half-life of a deviation under the equilibrium iteration.
Figure~\ref{fig:phase}a shows that the measured half-life diverges along the hyperbola
$\cb d=\tfrac12$. In the supercritical region deviations do not decay. Near the
boundary the half-life follows a power law in the distance to the boundary, with a
measured exponent $-0.96$ (Figure~\ref{fig:phase}b), close to the theoretical exponent
$-1$. The latter follows from
$\rho(W)\approx1+4(\tfrac12-\tau)$ near criticality.

\begin{figure}[t]
\centering
\includegraphics[width=.94\linewidth]{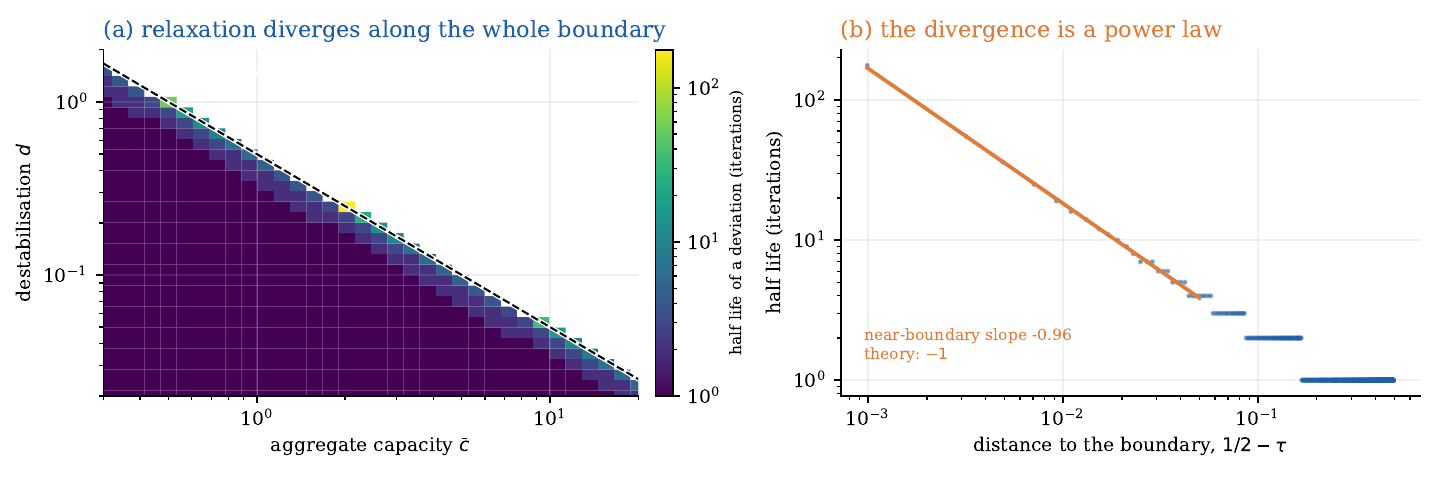}
\caption{The spectral boundary as a surface. Panel (a) reports the measured half-life
of a deviation over a grid of capacity and destabilisation, with the theoretical
boundary superimposed. Panel (b) reports the same half-lives against distance to the
boundary and the local power-law fit.}
\label{fig:phase}
\end{figure}

\section{The single-portfolio correspondence}
\label{app:corr}

Theorem~\ref{thm:corr} is expanded here to display the operators and bounds used in its
five statements.

\begin{theorem*}[Correspondence, expanded]
Fix fundamentals, costs, and the driver pool.

\emph{(i) Two projectors, one potential.} Let $S\in\Gr(m,D)$ induce
$B=\Theta V$ as in \eqref{eq:induced}. Let $U\in\R^{m\times q}$ span the inadmissible
driver motions of \citet{RDstatic2026}, set $C:=U^\top B^\top$, and define
$M_C:=Q^{-1}-Q^{-1}C^\top(CQ^{-1}C^\top)^{-1}CQ^{-1}$. Then
\begin{equation}
\operatorname{rank}M(S)=m,\qquad \operatorname{rank}M_C=n-q.
\label{eq:ranks}
\end{equation}
Thus the two operators coincide only in a degenerate case. If
$\mu\in\operatorname{range}B$, then
\begin{equation}
\mu^\top M(S)\mu=\mu^\top Q^{-1}\mu
=\mu^\top M_C\mu+\eta^{*\top}(CQ^{-1}C^\top)\eta^*,
\qquad
\eta^*=(CQ^{-1}C^\top)^{-1}CQ^{-1}\mu.
\label{eq:potid}
\end{equation}
The two potential values coincide when $CQ^{-1}\mu=0$; otherwise the difference is the
shadow value of the static admissibility constraint. If
$\mu\notin\operatorname{range}B$, the additional deficit is
$\mu^\top(Q^{-1}-M(S))\mu\ge0$.

\emph{(ii) Zero-capacity limit.} As $\cb\downarrow0$,
$\Phi(\rho)\to\mu_0$ uniformly on the compact configuration space. The aggregate
position vanishes, the certified set becomes independent of the population
configuration, and the reflexive closure becomes vacuous. The equilibrium
correspondence therefore converges, in the sense of the maximum theorem, to the
price-taking selection problem of \citet{RDstatic2026}; path by path with an exogenous
$\rho_t$, the corresponding dynamic limit is \citet{RDdynamic2026}.

\emph{(iii) Amplified tolerance.} Let
$\epsilon=\|M(\hat S)-M(S)\|$ and let $\delta w^{sa}(\epsilon)$ denote the
single-portfolio displacement at a fixed premium. Then
\begin{equation}
\delta w^{eq}(\epsilon)=\delta w^{sa}(\epsilon)
 +\gamma M(S)(\hat\mu-\mu^*),\qquad
\|\hat\mu-\mu^*\|\le \mathcal G\epsilon+O(\epsilon^2),
\qquad
\mathcal G:=\cb\|(I+K\Ab)^{-1}\|^2\|K\|\|\mu_0\|.
\label{eq:amplify}
\end{equation}
The amplification vanishes with capacity and grows as the resolvent approaches
singularity.

\emph{(iv) The dynamic hedge.} At a settled representation equilibrium the endogenous
configuration is stationary. The hedge against predictable representation motion in
\citet{RDdynamic2026} therefore vanishes to first-order. Off equilibrium, its relevant
state variable is the transition path of the population configuration.

\emph{(v) Endogenous predictive coupling.} In a supercritical equilibrium playing
\eqref{eq:conv}, set $v:=Ku/\|Ku\|$. The incremental predictive $R^2$ of the pooled
conditioning set for the next return along $v$ is
\begin{equation}
\varepsilon=
\frac{\|Ku\|^2\sigma_f^2/(\lambda^*)^2}
{\|Ku\|^2[\sigma_f^2/(\lambda^*)^2+\sigma_\xi^2]+v^\top Qv},
\qquad
\sigma_f^2=\frac{\sigma_\xi^2}{1-\varpi^{-2}},
\label{eq:eps3}
\end{equation}
and satisfies
\begin{equation}
\varepsilon<\bar\varepsilon(\lambda^*):=\frac{1}{2|\lambda^*|}<1.
\label{eq:epsbound}
\end{equation}
The bound approaches one as $\lambda^*\downarrow-\tfrac12$ and remains strictly below
one for $Q\succ0$ because $v^\top Qv>0$.
\end{theorem*}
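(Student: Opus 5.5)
I will prove the five parts of the expanded correspondence in order, treating each as a short linear-algebra or limit argument that rests on results already established.

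For (i), the rank claims are immediate. $M(S)=Q^{-1}B(B^\top Q^{-1}B)^{-1}B^\top Q^{-1}$ has range $Q^{-1}\operatorname{range}B$, which has dimension $m$ by Assumption~\ref{as:rank}. $M_C$ is $Q^{-1/2}(I-P)Q^{-1/2}$, with $P$ the orthogonal projector onto the row space of $CQ^{-1/2}$, which has rank $q$ provided $C$ has full row rank; I will state this as the standing non-degeneracy of the static constraint. This gives rank $n-q$.

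For the identity in (i), take $\mu=Bc$. Then
\[
M(S)\mu=Q^{-1}B(B^\top Q^{-1}B)^{-1}B^\top Q^{-1}Bc=Q^{-1}\mu ,
\]
so $\mu^\top M(S)\mu=\mu^\top Q^{-1}\mu$. Expanding $\mu^\top M_C\mu$ gives
\[
\mu^\top Q^{-1}\mu-\eta^{*\top}(CQ^{-1}C^\top)\eta^* ,
\]
where $\eta^*$ is the Lagrange multiplier of $\max\{2\mu^\top w-w^\top Qw: Cw=0\}$. That multiplier vanishes if and only if $CQ^{-1}\mu=0$. When $\mu\notin\operatorname{range}B$, the deficit $\mu^\top(Q^{-1}-M(S))\mu\ge0$ follows because $Q^{1/2}M(S)Q^{1/2}$ is an orthogonal projector.

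For (ii), write $\Ab(\rho)=\cb\tilde A(\rho)$ with $\|\tilde A\|\le\kappa_A$ on the compact configuration space. The Neumann bound
\[
\|\Phi(\rho)-\mu_0\|\le \|(I+K\Ab)^{-1}\|\,\|K\Ab\|\,\|\mu_0\|\le R_*\,\cb\,\|K\|\,\kappa_A\,\|\mu_0\|
\]
then gives uniform convergence, with $R_*$ uniformly bounded near $\cb=0$. The position $F=\Ab\mu^*$ tends to $0$ uniformly. The certified-set functional $\epsilon(S;\rho)$ conditions on $(S,F(\rho))$, so it converges to its value with the position removed. I will assume, as in Theorem~\ref{thm:exist}, that the certified-set correspondence is continuous jointly in $(\rho,\cb)$; under that assumption Berge's maximum theorem applied to $\mathcal V_C$ yields upper hemicontinuous convergence of the equilibrium correspondence. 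This continuity hypothesis is the genuine obstacle: lower hemicontinuity of $\mathcal S_\epsilon$ can fail at binding boundaries, so the convergence is stated only where the certification inequality is slack or the correspondence is assumed continuous.

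For (iii), differentiate $\mu^*=(I+K\Ab)^{-1}\mu_0$ in $\Ab$. This gives
\[
\hat\mu-\mu^*=-(I+K\hat\Ab)^{-1}K(\hat\Ab-\Ab)(I+K\Ab)^{-1}\mu_0 ,
\]
which is the resolvent identity. Since $\hat\Ab-\Ab=\cb(M(\hat S)-M(S))$, this yields the bound $\mathcal G\epsilon+O(\epsilon^2)$, and also Theorem~\ref{thm:transfer}'s premium bound. The displacement decomposition follows by writing $\gamma M(\hat S)\hat\mu-\gamma M(S)\mu^*$ and grouping terms.

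For (iv), at a settled equilibrium the configuration, and hence $\Ab$ and the premium, are time invariant. The drift of the representation manifold entering the dynamic hedge of \citet{RDdynamic2026} is therefore zero. Since that hedge is linear in the drift to first order, it vanishes to first order.

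For (v), compute the one-step projection. From \eqref{eq:returns} and \eqref{eq:conv}, the component of $r_{t+1}$ along $v$ is
\[
v^\top\varepsilon_{t+1}+\|Ku\|(f_{t+1}-f_t).
\]
Its predictable part is $\|Ku\|(\varpi^{-1}-1)f_t=\|Ku\|f_t/\lambda^*$, using $\varpi^{-1}-1=1/\lambda^*$. Dividing this variance by the total variance gives \eqref{eq:eps3}. Dropping $\sigma_\xi^2$ and $v^\top Qv$ from the denominator is not enough for the bound, so I will bound it directly. Write
\[
\varepsilon<\frac{\sigma_f^2/\lambda^{*2}}{\sigma_f^2/\lambda^{*2}+\sigma_\xi^2}
=\frac{1}{1+\lambda^{*2}(1-\varpi^{-2})} ,
\]
and use
\[
1-\varpi^{-2}=1-\Bigl(1+\tfrac{1}{\lambda^*}\Bigr)^2=-\frac{1+2\lambda^*}{\lambda^{*2}} .
\]
The expression therefore equals
\[
\frac{1}{1-(1+2\lambda^*)}=\frac{1}{-2\lambda^*}=\frac{1}{2|\lambda^*|} ,
\]
which is less than $1$ since $\lambda^*<-\tfrac12$. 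The inequality is strict because $v^\top Qv>0$.
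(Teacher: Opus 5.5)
Your proposal is correct, and for (i)--(iii) it follows the paper's route. It uses the range of $M(S)$, the congruence $M_C=Q^{-1/2}(I-P)Q^{-1/2}$, the quadratic-form identity for $Q^{-1}-M_C$ and the projection argument for the deficit. It then uses a uniform resolvent bound plus Berge's theorem for (ii), and the second resolvent identity for (iii), which is Step~1 of the paper's proof of Theorem~\ref{thm:transfer}. Your full-row-rank proviso on $C$ is automatic rather than an extra assumption. By Assumption~\ref{as:rank}, $B^\top$ maps onto $\R^m$, so $\operatorname{rank}C=\operatorname{rank}U=q$.

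\textbf{Gap in (i).} The clause that the two operators coincide only in a degenerate case is not proved. Equality $M(S)=M_C$ forces only $m=n-q$ through ranks, and that is not degenerate by itself. The paper's argument compares ranges, $\operatorname{range}M_C=\ker C$ and $\operatorname{range}M(S)=\operatorname{span}(Q^{-1}B)$. For $w=Q^{-1}Ba$ one has $Cw=U^\top(B^\top Q^{-1}B)a$, which vanishes for all $a$ only if $U=0$. Then $q=0$ and $M_C=Q^{-1}$, so equality forces $m=n$. Equivalently, $\ker C\supseteq\ker B^\top$, which has dimension $n-m$ and meets $\operatorname{span}(Q^{-1}B)$ only at $0$.

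\textbf{Smaller remarks.} In (ii), your joint-continuity hypothesis on the certified set is what the paper's appeal to the maximum theorem uses tacitly without stating it. Making it explicit is an improvement, not a deviation.

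In (iii), your ``grouping terms'' gives the second term with $M(S)$ exactly only if $\delta w^{sa}(\epsilon)=\gamma(M(\hat S)-M(S))\hat\mu$. If the fixed premium is $\mu^*$, the decomposition holds up to $\gamma(M(\hat S)-M(S))(\hat\mu-\mu^*)=O(\epsilon^2)$.

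The paper's written proof stops after (ii), so your arguments for (iv) and (v) have no counterpart there. Your derivation of \eqref{eq:eps3} and \eqref{eq:epsbound} is correct. The identity $\lambda^{*2}(1-\varpi^{-2})=-(1+2\lambda^*)$ also shows that $1/(2|\lambda^*|)$ is the limit of $\varepsilon$ as $v^\top Qv/\|Ku\|^2\to0$, so the bound is sharp.
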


The correspondence identifies the price-taking theory as the zero-capacity limit of the
population model and records the additional feedback terms that appear away from that
limit. In particular, the same position feedback that changes representation value also
changes the certification problem.

\section{Order-flow memory from heterogeneous switching}
\label{app:memory}

\begin{proposition}[From switching costs to order-flow memory]\label{prop:memory}
Suppose a portfolio trades in the direction of its representation while that
representation is maintained. Switching occurs at the first-passage of a driftless
diffusion with volatility $\sigma_a$ through a barrier $c^a_{sw}$, and the ratio
$c^a_{sw}/\sigma_a$ has a regularly varying cross-sectional tail with index $\alpha$.
For $1<\alpha<2$, aggregate signed order flow has autocorrelation proportional to
$\tau^{-\beta}$ with
\begin{equation}
\beta=\alpha-1\in(0,1).
\label{eq:memexp}
\end{equation}
Under the standard efficiency relation between persistent order flow and transient
impact, the corresponding propagator decays as $\tau^{-(1-\beta)/2}$. For $\alpha>2$
the relevant sojourn moments are finite and this long-memory mechanism is absent.
\end{proposition}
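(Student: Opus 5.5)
The plan is to reduce aggregate signed order flow to a superposition of independent on/off (renewal) sign processes, one per portfolio, and then read the autocorrelation exponent off the tail of the sojourn-length law by a Tauberian argument, as in the aggregation route of \citet{Granger1980}.

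\textbf{Step 1 (individual flow as a renewal process).} By hypothesis, portfolio $a$ trades with a fixed sign $s^a_k\in\{\pm1\}$ while its representation is maintained. Its flow is therefore $x^a_t=s^a_{N^a(t)}$, where $N^a$ counts switches. I take the sign after a switch to be independent with mean zero. Conditional on type $a$, the autocovariance at lag $\tau$ is then the probability that no renewal falls in $(t,t+\tau]$. In the stationary (equilibrium-renewal) version this is the residual-life survival function
\begin{equation*}
\gamma_a(\tau)=\frac{1}{\E T_a}\int_\tau^\infty \Pr(T_a>u)\,du .
\end{equation*}
Here $T_a$ is the first-passage sojourn of the driftless diffusion, started at $0$, through the barrier $c^a_{sw}$. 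I use the two-sided barrier $\pm c^a_{sw}$, so that $\E T_a<\infty$.

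\textbf{Step 2 (sojourn law of one type).} Brownian scaling and the Laplace transform $\E e^{-sT_a}=1/\cosh\bigl(r_a\sqrt{2s}\bigr)$, with $r_a:=c^a_{sw}/\sigma_a$, give the representation $T_a\overset{d}{=}r_a^{\,2}\,T_1$. Here $T_1$ is the exit time at unit ratio; it has exponential tails and finite moments of all orders.

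\textbf{Step 3 (mixing over the cross-section).} The aggregate sojourn $T=r^2T_1$ mixes over the regularly varying ratio $r$. By Breiman's lemma its tail is governed by that of $r$. The aggregate autocovariance is the capacity-weighted average of $\gamma_a$. Equivalently, it is the residual-life survival of the pooled renewal, normalised by its mean. If $\Pr(T>u)$ is regularly varying with index $\kappa\in(1,2)$, Karamata's theorem gives
\begin{equation*}
\int_\tau^\infty\Pr(T>u)\,du\sim \frac{\tau\Pr(T>\tau)}{\kappa-1},
\end{equation*}
hence an autocorrelation proportional to $\tau^{-(\kappa-1)}$. For $\kappa>2$ the second moment of the sojourn is finite and the memory is summable; this gives the absence claim. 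The propagator statement then follows by substituting $\beta$ into the standard efficiency relation $G(\tau)\propto\tau^{-(1-\beta)/2}$ of \citet{Bouchaud2004}, which I would import rather than reprove.

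\textbf{Main obstacle.} The whole exponent rests on the identification $\kappa=\alpha$, that is, on the sojourn tail index equalling the index of $c^a_{sw}/\sigma_a$. Step 2 as written delivers $T\propto r^2$. Breiman then gives $\kappa=\alpha/2$, and hence $\beta=\alpha/2-1$ on the range $2<\alpha<4$, not $\alpha-1$ on $1<\alpha<2$. I would try first to show that the intended barrier is crossed by the diffusion at a rate linear in the ratio. This is the case if the switching state accumulates as a drift-free diffusion whose barrier is expressed in units of $\sigma_a\sqrt{\cdot}$, equivalently if the paper's $c^a_{sw}$ is the squared-distance cost, so that $T\propto c^a_{sw}/\sigma_a$ and $\kappa=\alpha$. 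Failing that, the statement as worded needs the exponent map $\alpha\mapsto\alpha/2$ inserted before Step 3. I would flag this rather than force the stated constant: the stated conclusion follows from Steps 1 and 3 exactly when this scaling step delivers index $\alpha$ for the sojourn law.
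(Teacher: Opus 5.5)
Your flagged obstacle is real, and the paper does not close it. Its only justification is an appeal to the aggregation mechanism of \citet{Granger1980} with regular variation. Its stated model-specific content is the identification of the mixing variable with $c^a_{sw}/\sigma_a$ itself, and that identification presupposes a correlation time linear in the ratio. For a driftless diffusion the two-sided exit time is $r_a^{2}T_1$, exactly as in your Step 2. (The one-sided passage time has infinite mean for every portfolio, so your two-sided choice is the only one that gives a stationary flow.) The index $\alpha$ on the ratio therefore becomes $\alpha/2$ on the time scale. Nothing in the hypotheses converts this into $\beta=\alpha-1$ on $(1,2)$. A correct version must state the index for the correlation-time scale $(c^a_{sw}/\sigma_a)^{2}$, equivalently index $2\alpha$ on the ratio. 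You are right not to force the constant, but the proposal therefore does not prove the proposition as worded.

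Your fallback $\beta=\alpha/2-1$ on $(2,4)$ is not what Steps 1--2 deliver either, because the ``Equivalently'' in Step 3 is false. Put $h(s):=(\E T_1)^{-1}\int_s^\infty\Pr(T_1>v)\,dv$. Step 2 gives $\gamma_a(\tau)=h(\tau/r_a^{2})$, and with weights independent of the ratio the aggregate autocorrelation is
\[
\E\bigl[h(\tau/r^{2})\bigr]=\int_0^\infty\bigl(-h'(s)\bigr)\Pr\bigl(r^{2}>\tau/s\bigr)\,ds .
\]
By Potter's bounds and the exponential decay of $h$, this is regularly varying with index $-\alpha/2$: long memory for $\alpha<2$, with no shift by one. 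The stationary residual-life survival of a single renewal with i.i.d.\ sojourns $r^{2}T_1$, with $r$ drawn from the cross-sectional law, is a different object. It equals $\E[r^{2}h(\tau/r^{2})]/\E[r^{2}]$, a mean-sojourn-weighted average that decays like $\tau^{1-\alpha/2}$.

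The extra $-1$ is a length bias. It arises only if each portfolio contributes variance proportional to its persistence, as Granger's fixed-innovation AR(1) units do, or if the ratio is redrawn at every switch, as in metaorder models. If ``pooled renewal'' means the sojourn sampled per switching event across the population, that law is rate weighted. Its density is proportional to $f(r)/r^{2}$, with $f$ the cross-sectional density, so the sojourn tail index is $\alpha/2+1$ and Karamata returns $\alpha/2$ again. With your unit-amplitude sign flows, even a linear scaling would give $\beta=\alpha$. Obtaining $\beta=\alpha-1$ on $(1,2)$ therefore needs two added hypotheses: the index stated for the correlation time, and mean-sojourn weighting or redrawing. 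Fix one aggregation scheme and carry the exponent through it consistently.
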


The result applies the aggregation mechanism of \citet{Granger1980}, with regular
variation as in \citet{BinghamGoldieTeugels1987}. Its model-specific content is the
identification of the mixing variable with switching cost relative to representation
volatility. The empirical restriction links the cross-sectional tail of representation
sojourns, order-flow autocorrelation decay, and impact-propagator decay. The available
data do not permit a joint test of these three quantities.

\section{Empirical data construction}
\label{app:data}

The raw price and driver histories used in Section~\ref{sub:onemeasure} are Bloomberg
series. The retained equity panel contains U.S. S\&P constituents with at least 98\%
coverage over the common May 2010--July 2023 window. The 30-series conditioning
universe is declared independently of the fit ranking and is listed below using the
series descriptions in the source workbook.

{\small
\begin{tabularx}{0.95\linewidth}{@{}>{\raggedright\arraybackslash}p{27mm}X@{}}
\toprule
Block & Series descriptions \\
\midrule
Term structure & US Generic Govt 2 Yr; US Generic Govt 5 Yr; US Generic Govt 10 Yr; EUR SWAP ANN (VS 6M) 2Y; EUR SWAP ANN (VS 6M) 10Y; GERMANY GOVT BND 10 YR DBR. \\
Credit & Corporate; Credit; U.S. Corporate High Yield; Pan-European High Yield; EM USD Aggregate. \\
Volatility and funding & Cboe Volatility Index; Eonia Capitalization Index 7 D. \\
Foreign exchange & DOLLAR INDEX SPOT; Euro Spot; Japanese Yen Spot; British Pound Spot. \\
Commodities & BBG Commodity; Generic 1st 'CO' Future; Gold Spot \$/Oz; LME COPPER SPOT (\$). \\
Inflation & BEIG1T; U.S. TIPS; RF Global Inflation-Linked. \\
Real activity & US Initial Jobless Claims SA; S\&P CoreLogic Case-Shiller 20-. \\
International equity & DAX INDEX; NIKKEI 225; HANG SENG INDEX; FTSE 100 INDEX. \\
\bottomrule
\end{tabularx}}

Rates, spreads, breakevens, and level or count series are differenced; price-like
indices, foreign-exchange series, commodities, and equities are log-differenced when
their support is positive. The exposure ranking is estimated by ridge regression on
the first half of the sample and evaluated on the second. For each driver, the
residual-comovement diagnostic uses the $k=\max(8,\lfloor n/20\rfloor)$ most exposed
assets and a permutation of the exposure ranking as its null.

The capacity split uses a separate commercially licensed monthly market-wide
equity-lending dataset spanning 1996--2024. Its two summary variables are the
cross-sectional median borrow rate and the share of names with borrow rates above
20\%. The main split uses the median borrow rate. Because that lending universe is
broader than the retained equity panel, the proxy is used only to rank months into
high- and low-borrowing regimes. The raw licensed series are not redistributed.

\section{The simulation engine}
\label{app:sim}

All simulated results run on one engine, with the market environment, the agents, the
representations, the impact map and the diagnostics as components, so that each
experiment is a configuration rather than a separate program. Table~\ref{tab:params}
gives the configurations. Run configurations, economy-level outputs, and random seeds
are maintained in the separate replication archive described above.

\begin{table}[H]
\centering\small
\begin{tabularx}{\linewidth}{@{}>{\raggedright\arraybackslash}p{28mm}r>{\raggedright\arraybackslash}p{25mm}>{\raggedright\arraybackslash}X@{}}
\toprule
experiment & assets & drivers or grid & configuration \\
\midrule
clearing and incentive & 20 & pool of 3-dimensional subsets
  & 100 portfolios; dispersed wealth and risk tolerance; monotone impact; symmetric and asymmetric arms; capacity $[0.05,3]$ \\
phase transition & 6 & 61 grid points
  & destabilising eigenvalue placed by construction; statistic $[0.05,0.95]$; iteration cap $2\times10^{5}$ \\
boundary surface & 6 & $34\times34$ grid
  & half life measured by iterating the equilibrium map until the deviation halves \\
convention ceiling & 6 & 26 driving scales
  & three baseline-capacity shares $\phi$; ceiling by bisection; amplitude by fixed point \\
certification & 8 & paired economies
  & 2,500 periods for the pair; 40 economies over 4,000 periods; independent implementation and seed \\
measured edges & 10 & 1400 economies
  & indefiniteness and capacity over wide ranges; 30 economies per alignment arm \\
estimation error & 10 & 300 economies
  & 200 perturbations each at relative Frobenius errors of 2, 5, 10 and 20 percent \\
\bottomrule
\end{tabularx}
\caption{Configurations of the simulation engine. Every row is the same code with
different settings.}
\label{tab:params}
\end{table}

\section{Proofs}
\label{app:proofs}

Proofs are given for the results classified as new in Table~\ref{tab:status}. For
the imported results we give only the step that makes the classical theorem
applicable.

\begin{proof}[Proof of Theorem~\ref{thm:clearing}]
\stepp{1 (Aggregation)}
At the clearing premium the tangent constrained mean variance program of
portfolio $a$ has projected solution $w^a=\gamma_a M_a\mu^{*}$, linear in the
premium. Aggregating against wealth,
\begin{equation*}
F=\int W^a w^a\,da=\Bigl(\int W^a\gamma_a M_a\,da\Bigr)\mu^{*}=\Ab\,\mu^{*},
\end{equation*}
and substituting into $\mu^{*}=\mu_0-KF$ gives the linear system
$(I+K\Ab)\mu^{*}=\mu_0$. Existence and uniqueness are therefore equivalent to
$-1\notin\mathrm{spec}(K\Ab)$.

\stepp{2 (Positivity of the aggregate)}
Each $M_a$ is a $Q$ orthogonal projection, satisfying $M_aQM_a=M_a$ and
$M_a\succeq0$, and the weights $W^a\gamma_a$ are non-negative, so $\Ab\succeq0$
and admits a unique positive semidefinite square root $\Ab^{1/2}$.

\stepp{3 (Symmetrisation of the nonzero spectrum)}
Factor $K\Ab=(K\Ab^{1/2})\Ab^{1/2}$. For square matrices $X,Y$ the products $XY$
and $YX$ share nonzero eigenvalues with multiplicities, so
$\mathrm{spec}(K\Ab)\setminus\{0\}=\mathrm{spec}(\Ab^{1/2}K\Ab^{1/2})\setminus\{0\}$,
and the right hand matrix is a congruence of $K$, which is what makes
monotonicity usable.

\stepp{4 (Numerical range)}
For $x\in\mathbb{C}^n$ put $y=\Ab^{1/2}x$. Then
$x^{*}\Ab^{1/2}K\Ab^{1/2}x=y^{*}Ky$ and $\operatorname{Re}y^{*}Ky=y^{*}K_sy\ge0$
by monotonicity, so the numerical range of $\Ab^{1/2}K\Ab^{1/2}$ lies in the
closed right half plane. Every eigenvalue lies in the numerical range, hence so
does every nonzero eigenvalue of $K\Ab$ by Step 3.

\stepp{5 (Conclusion)}
The point $-1$ has real part $-1<0$ and is therefore outside the spectrum, so
$I+K\Ab$ is invertible and \eqref{eq:res} holds. No step used symmetry of $K$:
the spectrum may be complex, and what the argument delivers is a half plane
containing it. Measurability in the state follows from the closed form.

\stepp{6 (The expansion)}
In the homogeneous class $\Ab=\cb M$ and the Neumann series gives
$(I+\cb KM)^{-1}=I-\cb KM+O(\cb^{2})$ below the radius of convergence.
Substituting into $\Delta^{*}=\mu^{*\top}M\mu^{*}$ and using $M=M^{\top}$ yields
\eqref{eq:disc}. The factor two is the sum of the two symmetric contributions,
and the antisymmetric part of $K$ drops out of the quadratic form.
\end{proof}

\begin{proof}[Proof of Theorem~\ref{thm:nl}]
The map $x\mapsto x+\Imp(\Ab x)$ is \emph{not} monotone in general, so the argument
must be run in transformed variables. Taking $\Imp(u)=Ku$ with $K$ monotone and
symmetric, the map is $I+K\Ab$, whose symmetric part is $I+\tfrac12(K\Ab+\Ab K)$; the
product of two positive semidefinite matrices has nonnegative real spectrum but its
symmetric part can be strongly indefinite, and scaling $\Ab$ makes the negative part
arbitrarily large, so the direct claim fails. Accordingly, the proof proceeds through a symmetric square-root transformation.

\stepp{1 (Change of variables)}
Solve for the aggregate position rather than the premium. The clearing condition
$\mu=\mu_0-\Imp(\Ab\mu)$ with $F=\Ab\mu$ is equivalent to
$F+\Ab\,\Imp(F)=\Ab\mu_0$. Write $\Ab=\Ab^{1/2}\Ab^{1/2}$ with $\Ab^{1/2}$ the unique
symmetric positive semidefinite square root, and set $F=\Ab^{1/2}y$. Applying
$\Ab^{1/2}$ to the equation and matching on $\operatorname{range}\Ab^{1/2}$, where the
solution lives since $F\in\operatorname{range}\Ab$, gives
\begin{equation}
H(y):=y+\Ab^{1/2}\,\Imp\bigl(\Ab^{1/2}y\bigr)=\Ab^{1/2}\mu_0 .
\label{eq:transformed}
\end{equation}

\stepp{2 (The transformed map is strongly monotone)}
For $y,y'$, using symmetry of $\Ab^{1/2}$ to move it across the inner product,
\begin{align*}
\bigl\langle H(y)-H(y'),\,y-y'\bigr\rangle
&=\|y-y'\|^{2}
 +\bigl\langle \Ab^{1/2}\bigl[\Imp(\Ab^{1/2}y)-\Imp(\Ab^{1/2}y')\bigr],\,y-y'\bigr\rangle\\
&=\|y-y'\|^{2}
 +\bigl\langle \Imp(u)-\Imp(u'),\,u-u'\bigr\rangle
 \;\ge\;\|y-y'\|^{2},
\end{align*}
with $u=\Ab^{1/2}y$ and $u'=\Ab^{1/2}y'$, the second term being nonnegative by
monotonicity of $\Imp$ evaluated at the right pair of points. The relevant
monotonicity inequality is therefore with respect to $u-u'$, rather than $y-y'$;
the symmetric square-root transformation aligns the two inner products.

\stepp{3 (Existence and uniqueness)}
$H$ is continuous, strongly monotone with modulus one by Step 2, and therefore
coercive. A continuous strongly monotone operator on $\R^{n}$ is a bijection
\citep{Minty1962, Browder1963, Rockafellar1970}, so \eqref{eq:transformed} has exactly
one solution $y$, whence a unique $F=\Ab^{1/2}y$ and a unique
$\mu=\mu_0-\Imp(F)$. On $\ker\Ab$ the position component is zero and adds nothing. The
square-root law is continuous, monotone and vanishes at zero, hence covered.
\end{proof}

\begin{proof}[Proof of Theorem~\ref{thm:layer1}]
The map $\Ab(\rho)=\int M(S)d\rhos(S)$ is linear in the aggregate capacity
marginal and
\[
\|\Ab(\rho)-\Ab(\rho')\|\le\kappa_A
\|\rhos-\rho'^{\Sigma}\|_{TV}.
\]
The second resolvent identity gives
\[
\Phi(\rho)-\Phi(\rho')
=-(I+K\Ab(\rho))^{-1}K[\Ab(\rho)-\Ab(\rho')]
 (I+K\Ab(\rho'))^{-1}\mu_0 .
\]
The regular-resolvent condition and compactness of $\mathcal X$ make the two
resolvent norms uniformly bounded by $R_*$, which yields \eqref{eq:lphi}. Existence
and uniqueness of the premium at each configuration are exactly invertibility of
$I+K\Ab(\rho)$.
\end{proof}

\begin{proof}[Proof of Theorem~\ref{thm:layer2}]
For a unit mass of type $\vartheta$ placed at $S$, the directional derivative of
\eqref{eq:selectionpotential} is
\[
\Delta(S;\mu)-c_0-c_{sw}d(S_{0,\vartheta},S)
-c_r\int\omega(S,S')d\rhos(S')=u_\vartheta(S;\mu,\rho),
\]
so the maximisers are exactly the population equilibria on the fixed certified set.
Moreover
\[
\omega(S,S')=\left\langle\frac{\operatorname{vec}P_S}{\sqrt m},
\frac{\operatorname{vec}P_{S'}}{\sqrt m}\right\rangle,
\]
so $\omega$ is a Gram kernel and the quadratic term enters with a minus sign.
Consequently $\mathcal V_C$ is concave on a compact convex feasible set and has a
nonempty compact convex argmax.

In the discrete case, take two feasible aggregate vectors $x,y$ and typed allocations
attaining $G_C(x;\mu)$ and $G_C(y;\mu)$. Their convex combination is feasible for
$tx+(1-t)y$ and attains the same convex combination of the two objective values.
Hence $G_C(\cdot;\mu)$ is concave. The redundancy term depends on the typed allocation
only through the aggregate vector and equals $-(c_r/2)x^\top\Omega x$, which proves
\eqref{eq:aggregatepotential}. Along any aggregate feasible segment with direction
$h$, its second directional difference is at most $-c_r h^\top\Omega h\le
-\sigma\|h\|_2^2$. Thus the aggregate objective is $\sigma$-strongly concave and has
at most one aggregate maximiser. Existence gives exactly one. Nothing in this argument
makes the typed allocation within a fixed aggregate unique.
\end{proof}

\begin{proof}[Proof of Theorem~\ref{thm:layerjoint}]
On the finite set $C=\{S_1,\ldots,S_k\}$,
$\Ab(x)=\sum_i x_iM(S_i)$. Hence the resolvent identity gives the first inequality in
\eqref{eq:layerlipschitz} with, for example,
\[
L_\Phi=R_*^2\|K\|\|\mu_0\|
\left(\sum_i\|M(S_i)\|^2\right)^{1/2}.
\]
The second follows from the mean-value theorem because
$\nabla_\mu\Delta(S_i;\mu)=2M(S_i)\mu$ on the compact premium range.

The incumbent-assignment part of $G_C$ does not depend on $\mu$; changing $\mu$ changes
its linear coefficient in aggregate mass only through the vector $\Delta_C(\mu)$.
Sensitivity of a $\sigma$-strongly concave maximisation problem therefore gives
\[
\|\Psi_C^\Sigma(\mu)-\Psi_C^\Sigma(\mu')\|_2
\le\frac{1}{\sigma}\|\Delta_C(\mu)-\Delta_C(\mu')\|_2.
\]
Composing the two bounds yields
\[
\|\Psi_C^\Sigma(\Phi(x))-\Psi_C^\Sigma(\Phi(x'))\|_2
\le q\|x-x'\|_2.
\]
Because the aggregate image of the closed invariant cell is a closed subset of a
finite-dimensional compact set, it is complete. Banach's theorem gives the unique
aggregate fixed point in the cell and geometric convergence. The proof never compares
points across different certified sets and therefore makes no cross-cell claim.
\end{proof}

\begin{proof}[Proof of Proposition~\ref{prop:timing}]
\stepp{1 (Residual after conditioning)}
Conditioning $r_{t+1}$ on $\mathcal{G}_t\vee\sigma(F_t)$ removes $\mu_0(z_t)$, which is
$\mathcal{G}_t$ measurable, and removes $\Imp(F_t)$, which is $\sigma(F_t)$
measurable. The residual is $\varepsilon_{t+1}-\E[\varepsilon_{t+1}\mid
\mathcal{G}_t\vee\sigma(F_t)]$ and equals $\varepsilon_{t+1}$ exactly when
\eqref{eq:collidersafe} holds, which is conditional mean independence and not zero
covariance. Zero covariance is implied by it and does not imply it: take
$\varepsilon_{t+1}=\eta_{t+1}h(F_t)$ with $\eta$ mean zero and independent of $F_t$ and
$h$ nonconstant, which has zero covariance with $F_t$ while
$\E[\varepsilon_{t+1}^{2}\mid F_t]$ depends on $F_t$; for the first moment version take
any $g$ with $\E[g(F_t)]=0$ and $\mathrm{Cov}(F_t,g(F_t))=0$, such as
$g(F)=F^{2}-\E F^{2}$ for symmetric $F$, and set
$\varepsilon_{t+1}=g(F_t)+\eta_{t+1}$: the covariance vanishes and the forecast does
not. Under conditional joint Gaussianity the conditional expectation is linear and the
two conditions coincide.
Immersion of the driver filtration is the same statement applied at every date: an
$\mathbb{F}$ martingale keeps its martingale property under the enlargement exactly
when the adjoined field carries no forecast of the next innovation \citep{Jacod1985}.

\stepp{2 (Which block of impact)}
Write $\Imp(F_t)=K_{\mathrm{perm}}F_t+K_{\mathrm{temp}}F_t$, where the permanent block
is defined by its effect on the efficient price and the temporary block reverses. In
the decomposition of \citet{GlostenHarris1988} and \citet{Hasbrouck1991}, the
permanent block is nonzero on a direction precisely when flow in that direction
forecasts the efficient price revision, that is when
$\mathrm{Cov}(F_t,\varepsilon_{t+1}\mid\mathcal{G}_t)\neq0$ along it; the temporary
block is compensation for warehousing risk and is uncorrelated with the innovation by
construction. Hence \eqref{eq:collidersafe} holds on the conditioned directions if and
only if $K_{\mathrm{perm}}$ vanishes there, and the operator entering $A=\Ab K$ under the maintained causal restriction is
$K_{\mathrm{temp}}$.

\stepp{3 (Failure of the restriction)}
If $K_{\mathrm{perm}}\neq0$ on a conditioned direction, then by Step 1 the enlarged
filtration forecasts the innovation, so immersion fails there. The failure corresponds to the anticipative coupling studied by \citet{RDorder3}: the pooled conditioning set reveals a
function of a future innovation, and it does so through an adapted mechanism, since
$F_t$ is chosen at $t$. The incremental predictive content is then first-order in
$K_{\mathrm{perm}}$ rather than the second-order predictive content generated by
a convention.
\end{proof}

\begin{proof}[Proof of Theorem~\ref{thm:corr}]
\stepp{(i) Two projectors, one potential}
The ranks in \eqref{eq:ranks} are immediate: $M(S)$ has range
$\operatorname{span}(Q^{-1}B)$ of dimension $m$, while $M_C=Q^{-1/2}PQ^{-1/2}$ with
$P$ the orthogonal projector onto $\ker(CQ^{-1/2})$ has range $\ker C$ of dimension
$n-q$. Equality of the operators forces $q=n-m$ and
$\ker C=\operatorname{span}(Q^{-1}B)$; with $C=U^{\top}B^{\top}$ and $w=Q^{-1}Ba$ we
get $Cw=U^{\top}(B^{\top}Q^{-1}B)a$, which vanishes for all $a$ only if $U=0$ since
$B^{\top}Q^{-1}B\succ0$, and then $\ker C=\R^{n}$, which is
$\operatorname{span}(Q^{-1}B)$ only if $m=n$. So the operators differ outside that
degenerate case. For the potentials, let $\mu=B\theta$. Then
$M(S)\mu=Q^{-1}B(B^{\top}Q^{-1}B)^{-1}(B^{\top}Q^{-1}B)\theta=Q^{-1}\mu$, giving the
first equality of \eqref{eq:potid}. The second is
$Q^{-1}-M_C=Q^{-1}C^{\top}(CQ^{-1}C^{\top})^{-1}CQ^{-1}$ in quadratic form, the
shadow price expression of \citet{RDstatic2026}, which vanishes iff $CQ^{-1}\mu=0$.
If $\mu\notin\operatorname{range}B$ then $M(S)$ is the $Q^{-1}$ metric projection
onto $\operatorname{span}(Q^{-1}B)$, so
$\mu^{\top}(Q^{-1}-M(S))\mu\ge0$ by the projection theorem, with equality iff $\mu$
lies in the span.

\stepp{(ii) Zero capacity limit}
By \eqref{eq:res}, $\|\Phi_\mu(\rho)-\mu_0\|\le\|(I+K\Ab)^{-1}-I\|\|\mu_0\|$ and
$\Ab=O(\cb)$ uniformly on the compact $\mathcal X$, so the bound is $O(\cb)$
uniformly in $\rho$; payoffs converge uniformly and upper hemicontinuity of the
argmax under uniform convergence on a compact set is the maximum theorem
\citep{Berge1963}. With $c_r=0$ the objective is the potential alone, which is the
single portfolio program; the dynamic statement is the same argument at each date
with $\rho_t$ exogenous.

\end{proof}

\begin{proof}[Proof of Proposition~\ref{prop:causal}]
(i) In equilibrium realised returns are $r=\mu_0(z)-\Imp(F)+\varepsilon$ with
$F=\Ab\mu^{*}$, so conditioning on the drivers spanning $S$ removes $\mu_0(z)$
but leaves $-\Imp(\Ab\mu^{*})$ unless that term is measurable with respect to the
conditioning set, which holds exactly when $\Ab\mu^{*}$ lies in the induced exposure
subspace $E(S)=\Theta S$ modulo $\ker K$; the position lives in $\R^{n}$, so the condition
cannot be stated on $S\subset\R^{D}$ directly. Under Assumption~\ref{as:all} the established position is a cause of returns within
the period and not a common effect, so adding it to the conditioning set does not open
a collider path and the closure is admissible.
(ii) If the closure fails, the omitted term is $-K\Ab\mu^{*}$ times the
conditioning residual, a rank one contribution per uncaptured direction, whose
support is the set of assets loaded by $K\Ab\mu^{*}$. That this is the \emph{only}
surviving cross-sectional dependence does not follow from Assumption~\ref{as:all},
which permits a general conditional covariance $Q$; it follows from the premise of the
proposition, that $S$ was a valid separator in the price-taking economy, so everything
except the omitted position-impact term is already screened off by construction.
(iii) The structural equation for $r$ contains $\rho$ through $\Ab$. Invariance
based selection requires the conditional law of the target given the candidate
parents to be stable across environments \citep{Peters2016}; environments here
differ in $\rho$, and the conditional law depends on $\rho$ whenever $\cb>0$ and
$K\neq0$, so invariance fails across configurations while holding within one. The
order-three obstruction of \citet{RDorder3} is the instance of this failure that
survives conditioning on every pair.
\end{proof}

\begin{proof}[Proof of Theorem~\ref{thm:exist} and Corollary~\ref{cor:hyst}]
For each type let
$\mathcal X_\vartheta=\{\rho_\vartheta\ge0:\rho_\vartheta(\mathcal S)\le M_\vartheta\}$.
Because $\mathcal S$ is compact, each $\mathcal X_\vartheta$ is convex and weakly
compact, and so is the finite product $\mathcal X$. The map
$\rho\mapsto\Ab(\rho)=\int M(S)d\rhos(S)$ is weakly continuous because
$S\mapsto M(S)$ is continuous and bounded. The regular-resolvent assumption then makes
$\mu^*(\rho)=\Phi(\rho)$ continuous.

For state $\rho$, define the feasible set for type $\vartheta$ by
\[
\mathcal Y_\vartheta(\rho):=
\{y\in\mathcal X_\vartheta:\operatorname{supp}y\subseteq
\mathcal S_\epsilon(\rho)\}.
\]
The assumed Berge continuity and compact-valuedness of
$\mathcal S_\epsilon(\cdot)$ imply the corresponding continuity of the induced
measure-valued feasible correspondence. The payoff
$u_\vartheta(S;\Phi(\rho),\rho)$ is jointly continuous in $(S,\rho)$: the potential is
continuous through $\Phi$, the overlap term is continuous because $\omega$ is bounded
and continuous, and switching distance is continuous. Hence
\[
BR_\vartheta(\rho):=\arg\max_{y\in\mathcal Y_\vartheta(\rho)}
\int u_\vartheta(S;\Phi(\rho),\rho)\,dy(S)
\]
is nonempty, convex, weakly compact and upper hemicontinuous by the constrained maximum
theorem. The product correspondence $BR=\prod_\vartheta BR_\vartheta$ therefore maps
the nonempty compact convex set $\mathcal X$ into itself with nonempty convex compact
values. Fan--Glicksberg gives $\rho^*\in BR(\rho^*)$ (Kakutani suffices in the finite
discrete case).

Set
$\eta^*_\vartheta:=\max\{0,\sup_{S\in\mathcal S_\epsilon(\rho^*)}
 u_\vartheta(S;\mu^*,\rho^*)\}$. Optimality of a submeasure with capacity
$M_\vartheta$ implies that carried mass is supported on maximisers. If
$\eta^*_\vartheta>0$, the capacity constraint binds; if it does not bind, the maximal
payoff must be zero. These are exactly the three conditions in \eqref{eq:re3}, while
feasibility gives \eqref{eq:re2} and the continuous price map gives \eqref{eq:re1}.
Thus the fixed point is a representation equilibrium in the sense of
Definition~\ref{def:re}. In the finite-action nonatomic case the usual purification
argument yields a pure representation assignment with the same aggregates.

For Corollary~\ref{cor:hyst}, compactness bounds the gain in the non-switching part of
$u_\vartheta$ by some finite $G$. In the discrete case every nontrivial switch costs at
least $c_{sw}d_{\min}$. If $c_{sw}d_{\min}\ge G$ and the incumbent remains certified,
no switch can improve payoff, so the incumbent allocation is settled. In the continuum
case $d_{\min}=0$, so this uniform freezing argument is unavailable.
\end{proof}

\begin{proof}[Proof of Theorem~\ref{thm:dichotomy}]
Write $W:=(I+A)^{-1}A$, so that \eqref{eq:rec} reads $g_t=W\,\E_t[g_{t+1}]$ and,
iterating $N$ times with the tower property, $g_t=W^{N}\E_t[g_{t+N}]$.

\stepp{1 (Spectral translation)}
Since $W=\varphi(A)$ is a rational function of $A$ with no pole on the spectrum,
$\varphi$ carries $\mathrm{spec}(A)$ onto $\mathrm{spec}(W)$, and \eqref{eq:phi}
shows that $\varphi$ maps the half plane $\operatorname{Re}\lambda>-\tfrac12$
into the open unit disc and its complement outside. The hypothesis of (i) is
therefore exactly $\rho(W)<1$, and that of (ii) exactly that $W$ has an
eigenvalue outside the closed disc.

\stepp{2 (Uniqueness)}
By Gelfand's formula \citep{Gelfand1941}, $\rho(W)<1$ implies
$\|W^{N}\|^{1/N}\to\rho(W)<1$, so $\|W^{N}\|\to0$ geometrically, polynomial
prefactors from Jordan blocks being eventually dominated; this is why the
argument proceeds through matrix powers rather than eigendirection by
eigendirection, $A$ not being assumed diagonalisable. Conditional expectation is
an $L^2$ contraction, so for any admissible $g$,
\begin{equation*}
\|g_t\|_{L^2}=\bigl\|W^{N}\E_t[g_{t+N}]\bigr\|_{L^2}
\;\le\;\|W^{N}\|\,\sup_t\|g_t\|_{L^2}\;\xrightarrow[N\to\infty]{}\;0 ,
\end{equation*}
the supremum being finite by admissibility. Hence $g_t=0$ almost surely.

\stepp{2b (The critical case)}
If $|\varphi(\lambda)|=1$ for the extremal eigenvalue, write $\varphi(\lambda)=e^{i\vartheta}$
and project the recursion on that direction: $f_t=e^{i\vartheta}\E_t[f_{t+1}]$, so
$\E_t[f_{t+1}]=e^{-i\vartheta}f_t$. For a covariance-stationary $f$ the variance
decomposition gives
$\mathrm{Var}(f_{t+1})=\mathrm{Var}\bigl(\E_t[f_{t+1}]\bigr)+\E\,\mathrm{Var}_t(f_{t+1})
=\mathrm{Var}(f_t)+\E\,\mathrm{Var}_t(f_{t+1})$, and stationarity forces
$\E\,\mathrm{Var}_t(f_{t+1})=0$, so $f_{t+1}=e^{-i\vartheta}f_t$ almost surely. Hence
$f_t=e^{-i\vartheta t}f_0$, which is covariance-stationary with autocovariance depending only
on the lag, and is nonzero whenever $f_0$ is. The equilibrium set at criticality is
therefore this one-parameter family and nothing else; at $\lambda=-\tfrac12$,
$\vartheta=\pi$ and the family is $f_t=(-1)^{t}f_0$.

\stepp{3 (Construction)}
Let $\lambda^{*}<-\tfrac12$ with $\lambda^{*}\neq-1$ and eigenvector $u$, and put
$\varpi=\varphi(\lambda^{*})$, so $|\varpi|>1$ by Step 1. The recursion in
\eqref{eq:conv} then defines a stationary Gaussian autoregression with variance
$\sigma_\xi^{2}/(1-\varpi^{-2})$, finite and strictly increasing in $\sigma_\xi$.
That $g_t=f_t u$ solves \eqref{eq:rec} is immediate from
$\E_t[g_{t+1}]=\varpi^{-1}f_tu$ and $Wu=\varpi u$.

\stepp{4 (Optimality)}
The position deviation chosen in response to the assumed deviation is $A$ applied
to the expected position change, which along $u$ equals
$\lambda^{*}(\varpi^{-1}-1)f_tu$. Since $\varpi^{-1}-1=1/\lambda^{*}$, the
coefficient is one: the position deviation the agents choose is the deviation that was assumed,
so the assumed and realised deviations coincide; this is the self-confirming condition.

\stepp{5 (Certification)}
Condition $r_{t+1}$ on the certification set $\sigma(f_t,\xi_{i,t+1})$ of
Definition~\ref{def:filt}, both terms dated no later than the return being explained.
The convention contributes $\|Ku\|(f_{t+1}-f_t)=\|Ku\|[(\varpi^{-1}-1)f_t+
\sigma_\xi\xi_{i,t+1}]$, which is measurable with respect to that set, so the residual
is the fundamental innovation. Conditioning further on the drivers removes
$\Theta\zeta_{t+1}$ in \eqref{eq:qsplit} and leaves the idiosyncratic block $e_{t+1}$,
which is cross-sectionally independent by Assumption~\ref{as:all}; this is what the
test reads, and it is the reason the test is run on the residual of a driver
regression rather than on raw returns. Hence
$\{z_i\}$ passes, and it is certifying because no proper subset removes the
common term. Conditioning instead on $z_j$ with $j\neq i$ leaves
$\sigma_\xi Ku\,\xi_{i,t+1}$ in the residual, which loads at least two assets and
therefore induces cross-sectional dependence, so $\{z_j\}$ fails. Two internally
rational equilibria coexist over identical fundamentals, distinguished by which
extrinsic process certifies, and the free scale makes each a one parameter
family.
\end{proof}

\begin{proof}[Proof of Proposition~\ref{prop:estab}]
\stepp{1 (The T map)}
Under the perceived law $g_t=bf_t$ the agents' forecast is
$\E_t[g_{t+1}]=\varpi^{-1}bf_t$, and substituting into \eqref{eq:rec} the economy
realises $g_t=W\varpi^{-1}bf_t$, so the map from perceived to realised coefficient is
the linear operator $T(b)=\varpi^{-1}Wb$ and $DT=\varpi^{-1}W$.

\stepp{2 (Its spectrum)}
$W=\varphi(A)$ has eigenvalues $\varphi(\lambda_j)$ on the eigenvectors of $A$, which
under Assumption~\ref{as:spec} form a basis, so $DT-I$ has the eigenvalues
\eqref{eq:estab}. On the convention direction $\varphi(\lambda^{*})=\varpi$ and the
eigenvalue is $\varpi/\varpi-1=0$ exactly, which is the neutrality: the family is a
flat direction of the learning dynamics and carries no restoring force in either
direction.

\stepp{3 (When the other directions are stable)}
The eigenvalue on direction $j$ has negative real part if and only if
$\operatorname{Re}\bigl[\varphi(\lambda_j)/\varpi\bigr]<1$, which is implied by
$|\varphi(\lambda_j)|<|\varpi|$ and, when the ratio is real, is equivalent to it. Since
$|\varphi(\lambda)|=|\lambda|/|1+\lambda|$ diverges as $\lambda\to-1$ and decreases away
from the resonance on either side, the dominant direction is the eigenvalue closest to
$-1$ in that metric, and \eqref{eq:dominance} says the convention is built on it. If it
is built elsewhere, the direction with larger $|\varphi|$ gives a ratio of modulus
above one and, when real and of the same sign, an eigenvalue $|\varphi_j/\varpi|-1>0$,
so the map is unstable there. The example $A=\mathrm{diag}(-0.6,-0.75)$ gives $+1$
exactly.
\end{proof}

\begin{proof}[Proof of Proposition~\ref{prop:ceiling}]
\stepp{1 (The capacity operator)}
The baseline share contributes $\cb\phi Q^{-1}$. The adaptive share updates its
risk matrix to $\Sigma(\sigma_f)$ and therefore contributes
$\cb(1-\phi)\Sigma(\sigma_f)^{-1}$. Adding the two gives
\eqref{eq:capacity}. No claim is made that the future convention innovation is known
at the decision date; this step is a risk-capacity response, not a conditioning
argument.

\stepp{2 (Continuity and the limit)}
By the Sherman--Morrison identity \citep{HornJohnson2013},
\begin{align*}
\Sigma(\sigma_f)^{-1}
&=Q^{-1}-\frac{\sigma_f^{2}}{1+\sigma_f^{2}d_u}
 Q^{-1}(Ku)(Ku)^{\top}Q^{-1},\\
d_u&:=(Ku)^{\top}Q^{-1}(Ku).
\end{align*}
which is continuous in $\sigma_f$ and, as $\sigma_f\to\infty$, converges to
\begin{equation*}
Q^{-1}-\frac{Q^{-1}(Ku)(Ku)^{\top}Q^{-1}}{d_u}
=Q^{-1}\Pi_u^{\perp},
\end{equation*}
with $\Pi_u^{\perp}$ the $Q$ orthogonal projection off $Ku$. Eigenvalues are continuous functions of the matrix entries
\citep{HornJohnson2013}, so $\tau$ is continuous on $[0,\infty)$ and
\eqref{eq:limitcap} holds.

\stepp{3 (The ceiling)}
Continuity of $\tau$ and $\lim_{\sigma\to\infty}\tau(\sigma)<\tfrac12$ make the set
$\{\sigma:\sup_{s\ge\sigma}\tau(s)\le\tfrac12\}$ nonempty, and it is upward closed by
construction, so its infimum $\bar\sigma$ in \eqref{eq:ceilingdef} is finite and
$\tau(s)\le\tfrac12$ for every $s\ge\bar\sigma$. For such $s$,
Theorem~\ref{thm:dichotomy} applied at the capacity $\Ab(s)$ places the economy in the
subcritical or critical regime, where no stationary convention with positive
conditional variance exists, so no amplitude above $\bar\sigma$ is self-confirming. No
monotonicity is used, and none is available: $\Ab(\sigma)K$ is not symmetric, so its
extremal real part need not be monotone in $\sigma$, and a first downward crossing
would not by itself exclude a later upward one. Uniqueness of
$\bar\sigma$ is not claimed: $\tau$ need not be monotone, since
$\Ab(\sigma_f)K$ is not symmetric, and only the boundedness of the equilibrium
set is used.
\end{proof}

\begin{proof}[Proof of Corollary~\ref{cor:three}]
(a) The real canonical form $R(a+bi)=\bigl(\begin{smallmatrix}a&-b\\b&a\end{smallmatrix}\bigr)$
is a ring isomorphism from $\mathbb{C}$ onto its image, so the scalar self
confirmation identity of Step 4 above transports verbatim to the real invariant
plane, and stationarity of the resulting two dimensional autoregression is again
$|\varphi(\lambda)|>1$, that is $\operatorname{Re}\lambda<-\tfrac12$.
(b) In the no-dynamic-arbitrage class $K_s\succeq0$, so Step 4 of the proof of
Theorem~\ref{thm:clearing} places $\mathrm{spec}(A)$ in the closed right half
plane at every capacity, and part (i) of Theorem~\ref{thm:dichotomy} applies.
(c) Every object entering \eqref{eq:tau} is defined conditionally on the driver
state, so the statistic is a measurable function of that state.
\end{proof}

\begin{proof}[Proof of Theorem~\ref{thm:transfer}]
\stepp{1 (Resolvent difference)}
By the second resolvent identity,
\begin{equation*}
(I+K\hat\Ab)^{-1}-(I+K\Ab)^{-1}
=-(I+K\hat\Ab)^{-1}K\,(\hat\Ab-\Ab)\,(I+K\Ab)^{-1},
\end{equation*}
and $\hat\Ab-\Ab=\cb\,(M(\hat S)-M(S))$ has norm at most $\cb\epsilon$.
Continuity of inversion gives
$\|(I+K\hat\Ab)^{-1}\|\le\|(I+K\Ab)^{-1}\|+O(\epsilon)$, and applying both sides
to $\mu_0$ yields the first bound in \eqref{eq:transfer}.

\stepp{2 (From the premium to the verdict)}
The statistic \eqref{eq:tau} is a function of the spectrum of the nonsymmetric
matrix $Q^{-1}K$, so Weyl's inequality does not apply and the eigenvalue
perturbation must be controlled by a bound valid for nonnormal matrices. If $A$ is diagonalisable with eigenvector matrix $V$, the Bauer--Fike theorem
\citep{Bhatia1997} places every eigenvalue of $A+t(\hat A-A)$ within
$t\kappa(V)\|\hat A-A\|$ of the spectrum of $A$, for $t\in[0,1]$. Under the
separation condition in Theorem~\ref{thm:transfer}, the disk around the simple
extremal eigenvalue remains disjoint from the others along this homotopy, so one
perturbed eigenvalue remains in that disk. All perturbed eigenvalues are bounded from
crossing farther than the same radius in real part. Together with
$\hat A-A=\cb(\hat Q^{-1}\hat K-Q^{-1}K)$ this yields the two-sided bound
\eqref{eq:margin}; the triangle inequality gives \eqref{eq:marginobjects}.

When $A$ is defective the conclusion changes in form and not only in constant. For an
eigenvalue in a Jordan block of size $k$ the perturbation expansion is in powers of
$\delta^{1/k}$: the characteristic polynomial near the eigenvalue behaves like
$z^{k}-\delta$, so the roots can move by $\delta^{1/k}$, which for $k=2$ is
$\sqrt{\delta}$ and dominates any linear bound for small $\delta$. This is why a
H\"older margin replaces the linear one near a defective extremal block. The empirical
statement in Section~\ref{sec:num} is therefore reported as a margin rather than as an
estimate with a standard error.
\end{proof}

\begin{proof}[Proof of Proposition~\ref{prop:invar2}]
The direct redundancy charge is $c_r r(S,\rhos)$ with $r$ defined entirely from the
driver-space projectors and the aggregate capacity marginal in
\eqref{eq:driveroverlap}; $K$ does not enter it. Scaling $K$ changes
$\mu^*=(I+K\Ab)^{-1}\mu_0$ and hence changes $\Delta(S;\mu^*)$, so it changes the
price-mediated channel. This proves invariance of the direct charge, not invariance of
total equilibrium welfare or total crowding. Separate identification therefore
requires independent variation in impact and overlap.
\end{proof}

\begingroup\small
\bibliography{refs}
\endgroup
\end{document}